\definecolor{red}{rgb}{1,0,0}
\newcommand{\vvirg}{ , \dots , }
\newcommand{\ootimes}{ \otimes \cdots \otimes }
\newcommand{\ooplus}{ \oplus \cdots \oplus }
\newcommand{\ttimes}{ \times \cdots \times }
\newcommand{\dotitem}{ \item[$\cdot$]}
\newcommand{\textsum}{{\textstyle \sum}}
\newcommand{\textprod}{{\textstyle \prod}}
\newcommand{\bigboxtimes}{{\scalebox{1.2}{$\boxtimes$}}}
\newcommand{\bfG}{\mathbf{G}}
\newcommand{\bfN}{\mathbf{N}}
\newcommand{\bfX}{\mathbf{X}}
\newcommand{\bfY}{\mathbf{Y}}
\newcommand{\bfZ}{\mathbf{Z}}
\newcommand{\bfe}{\mathbf{e}}
\newcommand{\bfg}{\mathbf{g}}
\newcommand{\bfm}{\mathbf{m}}
\newcommand{\bfn}{\mathbf{n}}
\newcommand{\bfu}{\mathbf{u}}
\newcommand{\bfv}{\mathbf{v}}
\newcommand{\calG}{\mathcal{G}}
\newcommand{\calS}{\mathcal{S}}
\newcommand{\calZ}{\mathcal{Z}}
\newcommand{\bbC}{\mathbb{C}}
\newcommand{\bbN}{\mathbb{N}}
\newcommand{\bbP}{\mathbb{P}}
\newcommand{\frakg}{\mathfrak{g}}
\newcommand{\frakh}{\mathfrak{h}}
\newcommand{\frakp}{\mathfrak{p}}
\newcommand{\frakz}{\mathfrak{z}}
\renewcommand{\phi}{\varphi}
\renewcommand{\theta}{\vartheta}
\renewcommand{\tilde}[1]{\widetilde{#1}}
\renewcommand{\hat}[1]{\widehat{#1}}
\renewcommand{\bar}[1]{\overline{#1}}
\newcommand{\id}{\mathrm{id}}
\newcommand{\Id}{\mathrm{Id}}
\renewcommand{\Im}{\mathrm{Im} \;}
\DeclareMathOperator{\Hom}{Hom}
\DeclareMathOperator{\trace}{trace}
\DeclareMathOperator{\End}{End}
\newcommand{\Flat}{\mathrm{Flat}}
\newcommand{\frakgl}{\mathfrak{gl}}
\newcommand{\fillwidthof}[3][c]
	{%
		\parbox
		{%
			\widthof{#2}%
		}%
		{%
			\ifx#1c%
				\centering#3%
			\else\ifx#1l%
				#3\hfill%
			\else\ifx#1r%
				\hfill#3%
			\fi\fi\fi%
		}%
	}%
\def\mylettrine#1#2 {\lettrine{#1}{#2}\space}
\newcommand{\partinto}[1][]{\smash{\mathord{\mathchoice{%
  \xymatrix@=0.4em@1{%
  \ar@{|-}[rr]_-*--{\scriptstyle #1}
  &*{\phantom{\scriptstyle{#1}}}&}
}{
  \xymatrix@=0.25em@1{%
  \ar@{|-}[rr]_-*--{\scriptstyle #1}
  &*{\phantom{\scriptstyle{#1}}}&}
}{
  \xymatrix@=0.2em@1{%
  \ar@{|-}[rr]_-*--{\scriptscriptstyle #1}
  &*{\phantom{\scriptscriptstyle{#1}}}&}
}{}}}}
\newcommand{\partintonosmash}[1][]{\mathord{\mathchoice{%
  \xymatrix@=0.4em@1{%
  \ar@{|-}[rr]_-*--{\scriptstyle #1}
  &*{\phantom{\scriptstyle{#1}}}&}
}{
  \xymatrix@=0.25em@1{%
  \ar@{|-}[rr]_-*--{\scriptstyle #1}
  &*{\phantom{\scriptstyle{#1}}}&}
}{
  \xymatrix@=0.2em@1{%
  \ar@{|-}[rr]_-*--{\scriptscriptstyle #1}
  &*{\phantom{\scriptscriptstyle{#1}}}&}
}{}}}
\newcommand{\partintostar}[1][]{\smash{\mathord{\mathchoice{%
  \xymatrix@=0.4em@1{%
  \ar@{|-}[rr]_-*--{\scriptstyle #1}^-*--{\scriptstyle \ast}
  &*{\phantom{\scriptstyle{#1}}}&}
}{
  \xymatrix@=0.25em@1{%
  \ar@{|-}[rr]_-*--{\scriptstyle #1}^-*--{\scriptstyle \ast}
  &*{\phantom{\scriptstyle{#1}}}&}
}{
  \xymatrix@=0.2em@1{%
  \ar@{|-}[rr]_-*--{\scriptscriptstyle #1}^-*--{\scriptstyle \ast}
  &*{\phantom{\scriptscriptstyle{#1}}}&}
}{}}}}
\newcommand{\partintostarnosmash}[1][]{\mathord{\mathchoice{%
  \xymatrix@=0.4em@1{%
  \ar@{|-}[rr]_-*--{\scriptstyle #1}^-*--{\scriptstyle \ast}
  &*{\phantom{\scriptstyle{#1}}}&}
}{
  \xymatrix@=0.25em@1{%
  \ar@{|-}[rr]_-*--{\scriptstyle #1}^-*--{\scriptstyle \ast}
  &*{\phantom{\scriptstyle{#1}}}&}
}{
  \xymatrix@=0.2em@1{%
  \ar@{|-}[rr]_-*--{\scriptscriptstyle #1}^-*--{\scriptstyle \ast}
  &*{\phantom{\scriptscriptstyle{#1}}}&}
}{}}}
\DeclareMathOperator{\Stab}{Stab} 
\newcommand{\SL}{\mathrm{SL}}
\newcommand{\GL}{\mathrm{GL}}
\newcommand{\PGL}{\mathrm{PGL}}
\newcommand{\TNS}{\mathcal{T\!N\!S}}
\newcommand{\calTNS}{\mathcal{T\!N\!S}}
\newcommand{\link}{\\ \url{https://fulges.github.io/code/BDG-DimensionTNS.html}}
\title[Dimension of Tensor Network Varieties]{Dimension of Tensor Network Varieties}
\author[A. Bernardi, C. De Lazzari, F. Gesmundo]{Alessandra Bernardi, Claudia De Lazzari, Fulvio Gesmundo}
\address[A. Bernardi, C. De Lazzari]{Dipartimento di Matematica, Universit\`a di Trento, Via Sommarive 14, 38123 Povo (TN), Italy}
\email[Bernardi]{alessandra.bernardi@unitn.it}
\email[De Lazzari]{claudia.delazzari@unitn.it}
\address[F. Gesmundo]{Max Planck Institute for Mathematics in the Sciences, Inselstrasse 22, 04103, Leipzig, Germany; (current) Saarland University, Saarbr\"ucken, Germany}
\email[Gesmundo]{gesmundo@cs.uni-saarland.de}
\newtheorem{lemma}{Lemma}[section]
\newtheorem{conjecture}[lemma]{Conjecture}
\newtheorem{proposition}[lemma]{Proposition}
\newtheorem{theorem}[lemma]{Theorem}
\newtheorem{corollary}[lemma]{Corollary}
\theoremstyle{definition}
\newtheorem{definition}[lemma]{Definition}
\newtheorem{remark}[lemma]{Remark}
\newcommand{\ti}{\tilde{\imath}}
\subjclass[2020]{15A69; 81P45}
\keywords{Tensor network, dimension, gauge, isotropy group}
\begin{document}
\begin{abstract}
The tensor network variety is a variety of tensors associated to a graph and a set of positive integer weights on its edges, called bond dimensions. We determine an upper bound on the dimension of the tensor network variety. A refined upper bound is given in cases relevant for applications such as varieties of matrix product states and projected entangled pairs states. We provide a range (the ``supercritical range'') of the parameters where the upper bound is sharp.
\end{abstract}

\maketitle

\section{Introduction}

Tensor network varieties are varieties of tensors described by the combinatorial structure of a graph. They play a major role in quantum many-body physics, where they are used as a variational ansatz class to describe strongly correlated quantum systems whose entanglement structure is given by the underlying graph. 

The original motivation in quantum physics is the description of quantum spin chains, see \cite{AffKenLieTas:ValenceBondGroundStates,FanNacWer:FinitelyCorrQuantumSpin,OstRom:ThermoLimitDensityMatrixRenorm}. In this setting, it is known that ground states of a local gapped Hamiltonian on $1$-dimensional spin chains are well approximated by matrix product states, which are tensor network states associated to a circular graph \cite{PerVerWolCir:MPSrepresentations,VerMurCir:MPSProjEntPairStaEtc}. We refer to \cite{Orus:PracticalIntroTensorNetworks,SilTscGerJunJasRizMon:TNAntology} for a full description of the subject from the point of view of quantum physics. Methods from differential and complex geometry were introduced in the study of these objects in \cite{HaegMarOsbVer:GeometryMPS} and more recently
some important developments were achieved using methods from algebraic geometry and representation theory \cite{BucBucMic:HackbushConjecture,MicSeyVer:TensorWielandt,GesLanWal:MPSandQuantumMaxFlowMinCut,ChrLucVraWer:TensorNetworkRepGeom,ChrGesStWer:Optimization,HGSHDC:GeomVariationalMethods}. 

Moreover, tensor networks have a role in other areas of applied mathematics. In algebraic complexity theory, the model of computation of algebraic branching program is a ``symmetrized version'' of a tensor network \cite{BeCl:Alg_for_const_reg,DvMaPeYe:Multi_Branching_Programs}. In algebraic statistics, probabilistic graphical models are described as a joint probability distribution of a set of random variables whose correlations factor through the structure of the graph \cite{Lauritzen:GraphicalModels,RobSei:TNSandGraphicalModels}; these models find application in phylogenetics \cite{EriRanStuSull:PhyloAG,AllRho:PhylogeneticIdealsVarsGeneralMarkovModel}. In machine learning, a linear network is essentially a tensor network where the contraction maps are usually precomposed with a nonlinear \emph{activation} function \cite{Ben:LearningDeepArch}.

In this work, we approach the problem of determining the dimension of tensor network varieties, that is the closure of the set of tensors allowing a tensor network representation for a given graph. This provides a measure of how large the set of tensors allowing a certain tensor network representation is, which in turn gives a measure of the expressiveness of the tensor network class. We provide a completely general upper bound in Theorem \ref{theorem: main} and we illustrate how to refine it in cases relevant for applications in Corollary \ref{cor: MPS} and Corollary \ref{cor: PEPS}. In Corollary \ref{cor: supercritical exact}, we give the exact value of the dimension of the tensor network variety in a particular range, where it can be realized as the closure of the orbit of the action of an algebraic group. In Section \ref{section: def and prelim}, we give a complete description of the objects we are going to study. Section \ref{sec: isotropy groups} and Section \ref{section: dimension} are devoted to the proof of the main results. Finally, in Section \ref{sec: small cases}, we further analyze some cases arising from small values of the parameters, and we provide a more precise calculation of their dimension.

\subsection{Main results}
Let $V_1 \vvirg V_d$ be complex vector spaces with $\dim V_i = n_i$ and let $\Gamma$ be a simple graph with vertex set $\bfv(\Gamma)$ of cardinality $d$ and edge set $\bfe(\Gamma)$. The tensor network varieties associated to $\Gamma$ in $V_1 \ootimes V_d$ are irreducible algebraic varieties in $V_1 \ootimes V_d$ depending on a collection of integer weights $\bfm = (m_e : e \in \bfe(\Gamma))$ on the edges of $\Gamma$, called \emph{bond dimensions}. Write $\bfn = (n_1 \vvirg n_d)$ for the local dimensions of the tensor product, and let $\calTNS^\Gamma_{\bfm,\bfn}$ be the tensor network variety associated to $\Gamma$ with bond dimensions $\bfm$ in $V_1 \ootimes V_d$; see Definition \ref{TNS:def}. 

It will be clear from the definitions that if $\bfm$ and $\bfm'$ are two collections of weights such that $m'_e \leq m_e$ for every edge $e \in \bfe(\Gamma)$ then $\calTNS^\Gamma_{\bfm',\bfn} \subseteq \calTNS^{\Gamma}_{\bfm,\bfn}$.

Our main result is the following
\begin{theorem}\label{theorem: main}
 Let $(\Gamma,\bfm,\bfn)$ be a tensor network and let $\calTNS^\Gamma_{\bfm,\bfn}$ be the corresponding tensor network variety. Then
 \begin{align*}
  \dim &\calTNS^\Gamma_{\bfm,\bfn} \leq \\ 
  &\min \left\{ \sum_{v \in \bfv(\Gamma)} (n_v \cdot \textprod_{e \ni v} m_e) - d +1 - \sum_{e \in \bfe(\Gamma)} (m_e^2 -1) + \dim \Stab_{\calG_{\Gamma,\bfm}}(X), \prod_{v \in \bfv(\Gamma)} n_v \right \} .
 \end{align*}
\end{theorem}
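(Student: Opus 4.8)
The strategy is the classical one for a variety presented as the closure of the image of a parametrization. By Definition~\ref{TNS:def}, $\calTNS^\Gamma_{\bfm,\bfn}=\overline{\Phi(\calP)}$, where $\Phi$ is the contraction map and $\calP=\prod_{v\in\bfv(\Gamma)}\bigl(V_v\otimes\bigotimes_{e\ni v}\mathbb{C}^{m_e}\bigr)$ is the irreducible affine space of tuples of vertex tensors, so that $\dim\calP=\sum_{v\in\bfv(\Gamma)}\bigl(n_v\cdot\textprod_{e\ni v}m_e\bigr)$. The inequality $\dim\calTNS^\Gamma_{\bfm,\bfn}\le\prod_{v\in\bfv(\Gamma)}n_v$ is immediate, since $\calTNS^\Gamma_{\bfm,\bfn}$ is contained in the ambient space $V_1\ootimes V_d$; everything therefore reduces to proving the first term of the minimum.

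Since $\Phi\colon\calP\to\calTNS^\Gamma_{\bfm,\bfn}$ is dominant onto an irreducible variety, the fiber dimension theorem gives, for a generic $p\in\calP$ (which is sent to a generic point of the image, and generic fibers have the least dimension among all fibers by upper semicontinuity of fiber dimension),
\[
 \dim\calTNS^\Gamma_{\bfm,\bfn}\;=\;\dim\calP\;-\;\dim\Phi^{-1}\bigl(\Phi(p)\bigr).
\]
So it suffices to bound the dimension of the generic fiber from below.

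Such a bound comes from the gauge action: the group $\calG_{\Gamma,\bfm}$ acts on $\calP$ and, by construction, $\Phi$ is constant on its orbits (Section~\ref{section: def and prelim}), so $\calG_{\Gamma,\bfm}\cdot p\subseteq\Phi^{-1}(\Phi(p))$ for every $p$ and
\[
 \dim\Phi^{-1}\bigl(\Phi(p)\bigr)\;\ge\;\dim\bigl(\calG_{\Gamma,\bfm}\cdot p\bigr)\;=\;\dim\calG_{\Gamma,\bfm}-\dim\Stab_{\calG_{\Gamma,\bfm}}(p).
\]
For $p$ generic this isotropy group is the generic one, written $\Stab_{\calG_{\Gamma,\bfm}}(X)$ and studied in Section~\ref{sec: isotropy groups}. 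Inserting the value $\dim\calG_{\Gamma,\bfm}=\sum_{e\in\bfe(\Gamma)}(m_e^2-1)+d-1$ from Section~\ref{section: def and prelim} --- the reduction from the naive count $\sum_{e}m_e^2$ accounts for the scalar gauge transformations that act trivially on $\calP$, which is exactly where the summands $-1$ and the term $-d+1$ come from --- we obtain
\[
 \dim\calTNS^\Gamma_{\bfm,\bfn}\;\le\;\sum_{v\in\bfv(\Gamma)}\bigl(n_v\cdot\textprod_{e\ni v}m_e\bigr)-d+1-\sum_{e\in\bfe(\Gamma)}(m_e^2-1)+\dim\Stab_{\calG_{\Gamma,\bfm}}(X),
\]
which together with the ambient bound yields the claim.

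The dimension-of-image and orbit-stabilizer steps are soft; the substance lies elsewhere. First, the contraction map and the gauge group must be arranged so that gauge invariance is manifest and the trivially-acting scalars are pinned down exactly --- otherwise the constants $-1$ and $-d+1$ would be off --- a bookkeeping computation carried out in Section~\ref{section: def and prelim}. More seriously, one must give a precise meaning to ``the'' generic isotropy group $\Stab_{\calG_{\Gamma,\bfm}}(X)$ (constancy of the dimension of the isotropy groups of generic tuples of vertex tensors) and, for the corollaries and the sharper bounds, actually compute it --- this is the real work, occupying Section~\ref{sec: isotropy groups}. Finally, only the inclusion $\calG_{\Gamma,\bfm}\cdot p\subseteq\Phi^{-1}(\Phi(p))$ is used, not equality: the generic fiber can be strictly larger than the gauge orbit, so in general one obtains only an inequality, with equality --- hence the exact value of $\dim\calTNS^\Gamma_{\bfm,\bfn}$ --- precisely in the supercritical range of Corollary~\ref{cor: supercritical exact}.
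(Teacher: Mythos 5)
Your overall strategy is the paper's: present $\calTNS^\Gamma_{\bfm,\bfn}$ as the closure of the image of the contraction map, apply the theorem on dimension of fibers, and bound the generic fiber from below by a gauge orbit via orbit--stabilizer. The gap is in the bookkeeping of the constants, which is exactly where the terms $-d+1$ and $\sum_e(m_e^2-1)$ must come from. You assert $\dim\calG_{\Gamma,\bfm}=\sum_{e}(m_e^2-1)+d-1$ and attribute this to the preliminaries; but the paper defines $\calG_{\Gamma,\bfm}\simeq\bigtimes_{e\in\bfe(\Gamma)}\PGL_{m_e}$, of dimension $\sum_e(m_e^2-1)$, and the term $\dim\Stab_{\calG_{\Gamma,\bfm}}(X)$ in the statement refers to the stabilizer in \emph{that} group. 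In the paper the summand $-d+1$ has nothing to do with the gauge group: it is the dimension correction for the source of the parametrization, namely one replaces your tuple space $\calP=\bigoplus_v\Hom(W_v,V_v)$ by the image of the Segre-type map $\mu(X_1,\dots,X_d)=X_1\ootimes X_d$, i.e.\ the cone $\Hom(W_1,\dots,W_d;V_1,\dots,V_d)$, whose dimension is $\sum_v n_vN_v-d+1$ because $\mu$ has $(d-1)$-dimensional fibers (the rescalings $(\lambda_1X_1,\dots,\lambda_dX_d)$ with $\prod\lambda_i=1$). Then the honest gauge group $\bigtimes_e\PGL_{m_e}$ acts on this cone and orbit--stabilizer gives exactly the stated bound.

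Your route --- staying on $\calP$ and enlarging the group --- can be repaired, but not with the one-line justification you give. The group whose orbits on $\calP$ lie in fibers of $\hat\Phi$ is the image of $\bigtimes_e\GL_{m_e}$ in $\bigtimes_v\GL(W_v)$; its dimension is $\sum_e m_e^2$ minus the dimension of the kernel of the induced map $(\bbC^*)^{\bfe(\Gamma)}\to(\bbC^*)^{\bfv(\Gamma)}$, i.e.\ minus the corank of the signed incidence matrix of $\Gamma$. This equals $\sum_e(m_e^2-1)+d-1$ only when $\Gamma$ is connected (for $c$ components you get $+d-c$), and you would still have to identify the generic stabilizer in this quotient group with $\Stab_{\calG_{\Gamma,\bfm}}(X)$ as it appears in the statement. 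Neither the incidence-matrix computation nor this identification appears in your argument; as written, the two errors (source too big by $d-1$, group too big by $d-1$) merely cancel numerically. Separately, you should note that the paper first disposes of overabundant bond dimensions and strictly supercritical vertices (Lemma~\ref{lemma: superfluous bond}, Theorem~\ref{thm: reduction supercritical}) so that the orbit argument is run in the subcritical range; your proof is silent on how the general statement follows from the reduced one.
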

In the statement of Theorem \ref{theorem: main}, $\Stab_{\calG_{\Gamma,\bfm}}(X)$ is the stabilizer under the action of the \emph{gauge subgroup} of a generic $d$-tuple of linear maps, whose role will be made clear in Section \ref{section: dimension}. 

The term $\dim \Stab_{\calG_{\Gamma,\bfm}}(X)$ in Theorem \ref{theorem: main} makes the statement not immediate to apply in full generality, as it describes the dimension of the tensor network variety in terms of the dimension of another object which is not immediate to compute. However, as it will be explained in Section \ref{subsec: sharpening}, the value $\dim \Stab_{\calG_{\Gamma,\bfm}}(X)$ can be bounded from above by the dimension of a potentially larger stabilizer which can be computed from the \emph{local structure} of the graph, rather than from its global combinatorics. In fact, a consequence of Proposition \ref{prop: stable if deg high enough} will be that the term $\dim \Stab_{\calG_{\Gamma,\bfm}}(X)$ is trivial in a wide range of cases.

The term $\sum_{e \in \bfe(\Gamma)} (m_e^2 -1)$ is the dimension of the gauge subgroup associated to the tensor network, see Section \ref{sec: gauge section}. The role of this group in the theory of tensor network was known and it is expected that it entirely controls the value of $\dim \calTNS^\Gamma_{\bfm,\bfn}$. In fact, it is expected that in ``most'' cases the exact value of the dimension is 
\begin{equation}\label{eqn: expected value}
\min \left\{ \sum_{v \in \bfv(\Gamma)} (n_v \cdot \textprod_{e \ni v} m_e) - d +1 - \sum_{e \in \bfe(\Gamma)} (m_e^2 -1) , \prod_{v \in \bfv(\Gamma)} n_v \right \} .
\end{equation}
However, in Section \ref{sec: small cases}, we will observe that there are at least some cases where the value in \eqref{eqn: expected value} provides a strict upper bound. Conjecture \ref{conjecture: bond 2} predicts that the value in \eqref{eqn: expected value} is indeed the dimension of the tensor network variety in the case of matrix product states of bond dimension two, with the only exceptions classified in Section \ref{sec: small cases}.

Particularly relevant in the study of quantum many-body systems are \emph{lattice graphs} for which we provide some examples in Figure \ref{fig: MPS and PEPS}. In the physics literature, elements of the tensor network variety associated to a path or a cycle are called matrix product states (MPS), respectively with open or periodic boundary condition; elements of the tensor network variety associated to a grid (or more generally a two-dimensional lattice), either placed on a plane or on a torus, are called projected entangled pair states (PEPS), with open or periodic boundary conditions respectively.

In the case of matrix product states with open boundary conditions, a complete result regarding the dimension of the tensor network variety is given in \cite[Thm 14]{HaegMarOsbVer:GeometryMPS}. In the language of Theorem \ref{theorem: main}, setting $\bfm = (m_1 \vvirg m_{d-1})$ and $\bfn = (n_1 \vvirg n_d)$ to be the collections of bond dimensions and of local dimensions on the path $P_d$, the result of \cite{HaegMarOsbVer:GeometryMPS} is (formally setting $m_0 = m_d = 1$)
\[
 \dim \calTNS^{P_d}_{\bfm,\bfn} = \min \left\{ \textsum_{i =1}^d  n_i  m_{i-1}m_{i}  - \textsum_{j = 1}^{d-1} m_i^2 , \textprod_{i=1}^d n_i\right\};
\]
this coincides with the expected value for the dimension in \eqref{eqn: expected value}. 

We refer to \cite{PerVerWolCir:MPSrepresentations,ChrLucVraWer:TensorNetworkRepGeom} for the details on the construction of MPS, PEPS and other related entanglement structures and for their physical interpretation. 

\begin{figure}
\begin{subfigure}{.4\textwidth}
\centering
\begin{tikzpicture}[scale=.6]
\path [use as bounding box] (-1,-1) rectangle (6,1);
\foreach \x in {0,...,5}
	{
	\draw[fill=black] (\x,0) circle (.15cm);
	}
\draw(0,0)--(5,0);
\end{tikzpicture}
\subcaption{~}
 \end{subfigure}
\begin{subfigure}{.4\textwidth}
\centering
\begin{tikzpicture}[scale=.6]
\foreach \x in {0,...,5}
	{
	\draw[fill=black] (\x,0) circle (.15cm);
	\draw[fill=black] (\x,2) circle (.15cm);
	}
\draw (0,0)--(5,0);
\draw (0,2)--(5,2);
\draw (0,0)--(-1,1);
\draw (0,2)--(-1,1);
\draw (5,0)--(6,1);
\draw (5,2)--(6,1);
\draw[fill=black] (-1,1) circle (.15cm);
\draw[fill=black] (6,1) circle (.15cm);
 \end{tikzpicture}
\subcaption{~}
 \end{subfigure}\\~
~ \vspace{1cm}\\~
\begin{subfigure}{.4\textwidth}
\centering
\begin{tikzpicture}[scale=.6]
\foreach \x in {0,...,5}
    \foreach \y in {0,...,5} 
	{
	\draw[fill=black] (\x,\y) circle (.15cm);
	}
\foreach \x in {0,...,5}
	{
	\draw (\x,0)--(\x,5);
	}
\foreach \y in {0,...,5}
	{
	\draw (0,\y)--(5,\y);
	}
\end{tikzpicture} 
\subcaption{~}
 \end{subfigure}
\begin{subfigure}{.4\textwidth}
\centering
\includegraphics{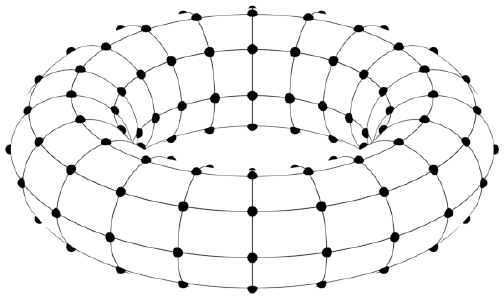}
\subcaption{~}
\end{subfigure}
\caption{Examples of lattice graphs: MPS with open (a) and periodic (b) boundary conditions; PEPS with open (c) and periodic (d) boundary conditions.}\label{fig: MPS and PEPS}
\end{figure}

We state the following corollaries of Theorem \ref{theorem: main} in the case of constant bond dimension and constant local dimension. It will be clear from the discussion of Section \ref{section: dimension} that these hypotheses can be relaxed, but we state them in this restricted range for the sake of presentation.

\begin{corollary}\label{cor: MPS}
Let $(C_d, m,n)$ be the tensor network on the cycle graph on $d$ vertices with constant bond dimension $m$ and constant local dimension $n$. Then 
\[
  \dim \calTNS^{C_d}_{m,n} \leq \min \{ d (n -1) m^2  +1 , n^d\}. 
\]
\end{corollary}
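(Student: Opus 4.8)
The plan is to specialize Theorem~\ref{theorem: main} to the cycle graph and carry out the bookkeeping. The cycle $C_d$ has $d$ vertices and $d$ edges, and every vertex has degree $2$; as the bond and local dimensions are constant, $m_e = m$ for every edge $e$ and $n_v = n$ for every vertex $v$. In the first entry of the minimum in Theorem~\ref{theorem: main}, each summand $n_v\prod_{e\ni v}m_e$ equals $nm^2$, so $\sum_{v\in\bfv(C_d)}n_v\prod_{e\ni v}m_e = dnm^2$; the correction is $-d+1$; and $\sum_{e\in\bfe(C_d)}(m_e^2-1) = d(m^2-1)$. Hence the first entry equals
\[
 dnm^2 - d + 1 - d(m^2-1) + \dim\Stab_{\calG_{C_d,m}}(X) \;=\; d(n-1)m^2 + 1 + \dim\Stab_{\calG_{C_d,m}}(X),
\]
while the second entry is $\prod_{v\in\bfv(C_d)}n_v = n^d$. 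So it suffices to see that the stabilizer term can be dropped: either $\dim\Stab_{\calG_{C_d,m}}(X) = 0$ for a generic tuple $X$ of linear maps, or the $n^d$ term is the smaller one and the term is irrelevant.

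If $n = 1$ the second alternative applies, since $n^d = 1$ and the asserted inequality holds at once. For $n\ge 2$ I would check directly that a generic tuple has finite gauge stabilizer. Write the map at vertex $i$ as an $n$-tuple $(g^{(i)}_1,\dots,g^{(i)}_n)$ of $m\times m$ matrices, with one copy of $\SL_m$ inside $\calG_{C_d,m}$ attached to each edge; then $(h_1,\dots,h_d)$ stabilizes $X$ precisely when $h_{i-1}g^{(i)}_k = g^{(i)}_k h_i$ for all $i,k$ (indices read cyclically). For generic $g^{(i)}_1,g^{(i)}_2$ the equations with $k=1,2$ force $h_i$ to commute with the regular semisimple matrix $(g^{(i)}_1)^{-1}g^{(i)}_2$, hence $h_i\in\bbC[(g^{(i)}_1)^{-1}g^{(i)}_2]$, an $m$-dimensional commutative subalgebra of $\Mat_m(\bbC)$; propagating the relations once around the cycle confines $h_d$ to the intersection of two such subalgebras in general position, which consists of scalar matrices alone, and therefore forces every $h_i$ to be one and the same scalar matrix (with $m$-th power $1$, by the determinant condition). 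Thus $\Stab_{\calG_{C_d,m}}(X)$ is finite and $0$-dimensional. Alternatively, I expect this vanishing to be a consequence of Proposition~\ref{prop: stable if deg high enough} (or its proof), which I would cite in place of the hands-on argument if it covers cycles.

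Putting the cases together, the first entry of the minimum equals $d(n-1)m^2+1$ whenever $n\ge 2$, while for $n=1$ the bound was obtained above, so $\dim\calTNS^{C_d}_{m,n}\le\min\{d(n-1)m^2+1,\;n^d\}$, as claimed. The only genuinely nontrivial ingredient is the stabilizer computation: one must produce a sufficiently generic choice of the $g^{(i)}_k$ for which the local commutativity constraints at the $d$ vertices and the global ``monodromy'' constraint obtained by traversing the cycle are jointly solvable only by scalar gauge transformations — concretely, the fact that two generic maximal commutative subalgebras of $\Mat_m(\bbC)$ meet only in the scalars. Everything else is substitution into Theorem~\ref{theorem: main} together with elementary algebra.
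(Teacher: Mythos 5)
Your proof is correct and follows essentially the same route as the paper: substitute the cycle data into Theorem~\ref{theorem: main} and kill the stabilizer term by a generic-centralizer computation, which is exactly what the paper's Proposition~\ref{prop: MPS have stable action} does (the paper normalizes the first slices $X_j^{(1)}$ to identities and one generic diagonal matrix and then uses a single extra slice, rather than intersecting two commutants $\bbC[(g_1^{(i)})^{-1}g_2^{(i)}]$, but the content is the same). One caveat: your fallback of invoking Proposition~\ref{prop: stable if deg high enough} would not work here, since that result requires degree at least $3$ while every vertex of $C_d$ has degree $2$ --- the hands-on argument you give is genuinely needed, and it is the reason the paper proves a separate stability statement for cycles.
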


\begin{corollary}\label{cor: PEPS}
Let $\Gamma$ be a graph on $d$ vertices such that all vertices of $\Gamma$ have degree at least $3$. Let $(\Gamma,m,n)$ be the tensor network on $\Gamma$ with constant bond dimension $m$ and constant local dimension $n$. Then 
\[
  \dim \calTNS^{\Gamma}_{m,n} \leq \min \left\{ \sum_{v \in \bfv(\Gamma)} n m^{\deg(v)}  - d +1 - \sum_{e \in \bfe(\Gamma)} (m^2 -1) , n^d \right \} .
\]
\end{corollary}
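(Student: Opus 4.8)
The plan is to obtain Corollary \ref{cor: PEPS} as a direct specialization of Theorem \ref{theorem: main}, the only extra ingredient being that the term $\dim \Stab_{\calG_{\Gamma,\bfm}}(X)$ vanishes under the degree hypothesis.

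First I would substitute the constant weights $m_e = m$ for all $e \in \bfe(\Gamma)$ and the constant local dimensions $n_v = n$ for all $v \in \bfv(\Gamma)$ into the bound of Theorem \ref{theorem: main}. Since $\Gamma$ is a simple graph, the number of edges incident to a vertex $v$ equals $\deg(v)$, so $\prod_{e \ni v} m_e = m^{\deg(v)}$ and hence $\sum_{v \in \bfv(\Gamma)} (n_v \cdot \prod_{e \ni v} m_e) = \sum_{v \in \bfv(\Gamma)} n\, m^{\deg(v)}$; likewise $\prod_{v \in \bfv(\Gamma)} n_v = n^d$ and $\sum_{e \in \bfe(\Gamma)} (m_e^2 - 1) = \sum_{e \in \bfe(\Gamma)} (m^2 - 1)$. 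This turns the bound of Theorem \ref{theorem: main} into
\[
 \dim \calTNS^{\Gamma}_{m,n} \leq \min \left\{ \sum_{v \in \bfv(\Gamma)} n\, m^{\deg(v)} - d + 1 - \sum_{e \in \bfe(\Gamma)} (m^2 - 1) + \dim \Stab_{\calG_{\Gamma,\bfm}}(X), \; n^d \right\}.
\]

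It then remains only to discard the summand $\dim \Stab_{\calG_{\Gamma,\bfm}}(X)$, and this is precisely where the hypothesis that every $v \in \bfv(\Gamma)$ has $\deg(v) \geq 3$ enters: I would apply Proposition \ref{prop: stable if deg high enough}, which guarantees that when all vertices have large enough degree the stabilizer of a generic $d$-tuple of linear maps under the gauge subgroup $\calG_{\Gamma,\bfm}$ is finite, so that $\dim \Stab_{\calG_{\Gamma,\bfm}}(X) = 0$. What needs checking is that the threshold of that proposition is actually met as soon as all degrees are at least $3$; the genuinely degenerate cases --- such as $m = 1$, where the gauge subgroup is trivial, or very small $n$, where $\calTNS^\Gamma_{m,n}$ is already captured by the trivial estimate $n^d$ --- are handled directly by the second entry of the minimum. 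Granting Proposition \ref{prop: stable if deg high enough} and setting $\dim \Stab_{\calG_{\Gamma,\bfm}}(X) = 0$ in the display above gives the claimed inequality, so the real work is not in this corollary but in establishing that proposition.
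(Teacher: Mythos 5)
Your proposal is correct and follows essentially the same route as the paper: specialize Theorem \ref{theorem: main} to constant $\bfm$ and $\bfn$, then kill the term $\dim \Stab_{\calG_{\Gamma,\bfm}}(X)$ using the degree hypothesis via Proposition \ref{prop: stable if deg high enough}, with the exceptional case $(k,m,n)=(3,2,1)$ absorbed by the trivial bound $n^d$. The one step you leave implicit --- passing from that proposition, which is a statement about a single vertex's local action of $\SL(U_1)\ttimes\SL(U_k)$, to finiteness of the global stabilizer --- is exactly the chain $\Stab_{\calG_{\Gamma,\bfm}}(X)\subseteq\Stab_{H}(X)=\bigtimes_{v}\Stab_{H_v}(X_v)$ spelled out in Section \ref{subsec: sharpening}, so nothing is missing.
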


The equality in \eqref{eqn: expected value} is attained in the \emph{supercritical range}, defined in Section \ref{section: dimension}.
 \begin{corollary}\label{cor: supercritical exact}
  Let $(\Gamma,\bfm,\bfn)$ be a supercritical tensor network. Then 
  \[
   \dim \calTNS^\Gamma_{\bfm,\bfn} = \min \left\{ \sum_{v \in \bfv(\Gamma)} (n_v \cdot \textprod_{e \ni v} m_e) - d +1 - \sum_{e \in \bfe(\Gamma)} (m_e^2 -1) , \prod_{v \in \bfv(\Gamma)} n_v \right \} .
  \]
 \end{corollary}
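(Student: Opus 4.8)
The plan is to sandwich $\dim\calTNS^\Gamma_{\bfm,\bfn}$ between the upper bound of Theorem~\ref{theorem: main} and a matching lower bound; the supercriticality hypothesis (Section~\ref{section: dimension}) is precisely what closes the gap. For the upper bound, recall that the supercritical range is the parameter regime where the gauge action is maximally rigid: by the consequences of Proposition~\ref{prop: stable if deg high enough} discussed in Section~\ref{subsec: sharpening}, a generic tuple $X$ of linear maps satisfies $\dim\Stab_{\calG_{\Gamma,\bfm}}(X)=0$, and substituting this into Theorem~\ref{theorem: main} gives $\dim\calTNS^\Gamma_{\bfm,\bfn}\le\min\{\sum_{v}(n_v\prod_{e\ni v}m_e)-d+1-\sum_e(m_e^2-1),\ \prod_v n_v\}$, which is the right-hand side of the corollary. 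So everything reduces to the reverse inequality.

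For the lower bound, use that by Definition~\ref{TNS:def} the variety is $\overline{\operatorname{Im}\Phi}$ for the contraction map $\Phi\colon\prod_v\big(V_v\otimes\bigotimes_{e\ni v}\mathbb{C}^{m_e}\big)\to V_1\ootimes V_d$, so $\dim\calTNS^\Gamma_{\bfm,\bfn}=\operatorname{rank} d\Phi_{\underline A}$ for generic $\underline A$, and it is enough to exhibit a single $\underline A$ for which $\operatorname{rank} d\Phi_{\underline A}$ attains the stated value. The fiber through $\underline A$ always contains the orbit $\calG_{\Gamma,\bfm}\cdot\underline A$, where $\calG_{\Gamma,\bfm}=\prod_e\GL_{m_e}$ acts on the source with a $b_1(\Gamma)$-dimensional kernel — the central subtorus indexed by the cycle space of $\Gamma$, whose image in $\prod_v\GL(V_v)$ is the torus of scalars $(\lambda_v)_v$ with $\prod_v\lambda_v=1$, already accounting for the multilinear rescaling of the vertex tensors. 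Hence $\dim(\calG_{\Gamma,\bfm}\cdot\underline A)=\sum_e m_e^2-b_1(\Gamma)=\sum_e(m_e^2-1)+d-1$ once $\underline A$ is stabilized by nothing beyond this kernel, i.e.\ once $\dim\Stab_{\calG_{\Gamma,\bfm}}(X)=0$, which holds in the supercritical range. Therefore the content of the lower bound is the rigidity statement that for a suitable $\underline A$ the generic fiber of $\Phi$ is no larger than this gauge orbit (equivalently $\ker d\Phi_{\underline A}$ is its tangent space) when the combinatorial term is the minimum, and that $d\Phi_{\underline A}$ is surjective when $\prod_v n_v$ is the minimum; either way, upper semicontinuity of fiber dimension yields $\dim\calTNS^\Gamma_{\bfm,\bfn}\ge\min\{\sum_{v}(n_v\prod_{e\ni v}m_e)-d+1-\sum_e(m_e^2-1),\ \prod_v n_v\}$.

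To carry out the rigidity step I would choose $\underline A$ explicitly adapted to supercriticality — e.g.\ each vertex tensor $A_v$ so that the relevant flattening between $V_v$ and $\bigotimes_{e\ni v}\mathbb{C}^{m_e}$ has maximal rank — and then solve the kernel equation $\sum_v\Phi(A_1,\dots,\dot A_v,\dots,A_d)=0$ one vertex at a time: maximality of the flattenings should force each $\dot A_v$ to coincide, up to an overall rescaling, with an infinitesimal gauge transformation on the edges incident to $v$, and one then checks that these edge-data glue consistently around $\Gamma$, so that $\ker d\Phi_{\underline A}$ is exactly the Lie algebra of $\calG_{\Gamma,\bfm}\cdot\underline A$ (resp.\ has the minimal dimension $\sum_v(n_v\prod_{e\ni v}m_e)-\prod_v n_v$ in the filling regime). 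Conceptually this is the statement that in the supercritical range $\calTNS^\Gamma_{\bfm,\bfn}$ is the closure of a single orbit of $\prod_v\GL(V_v)$, enlarged by the gauge group on the bond spaces, through the ``identity'' tensor network state obtained by contracting $\Gamma$ with identity-type vertex tensors, and that the corresponding stabilizer has minimal dimension, so that $\dim\calTNS^\Gamma_{\bfm,\bfn}=\dim(\text{acting group})-\dim(\text{stabilizer})$ simplifies to the claimed formula. The main obstacle is exactly this rigidity / minimal-stabilizer assertion, where the quantitative strength of supercriticality must be used; the remaining ingredients — the $b_1(\Gamma)$ bookkeeping via the incidence map of $\Gamma$, the identification of the two arithmetic expressions, and the reduction to a single point — are routine.
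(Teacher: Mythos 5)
Your proposal correctly identifies the overall architecture (upper bound from the gauge orbit sitting inside the fiber, lower bound from a matching rigidity statement), and your gauge-orbit bookkeeping via $b_1(\Gamma)$ is arithmetically consistent with the paper's count $\dim\calG_{\Gamma,\bfm}=\sum_e(m_e^2-1)$ once one accounts for working with $\hat\Phi$ on the product space rather than with $\Phi$ on the Segre cone. But there is a genuine gap at exactly the point you flag as ``the main obstacle'': the assertion that for a suitable $\underline{A}$ the kernel of $d\Phi_{\underline{A}}$ (equivalently, the stabilizer of the graph tensor, equivalently the generic fiber) is no larger than the gauge directions is stated, not proved. Your sketch --- ``maximality of the flattenings should force each $\dot A_v$ to coincide, up to rescaling, with an infinitesimal gauge transformation, and one then checks that these edge-data glue consistently'' --- is precisely the content of Theorem \ref{thm: isotropy group when kron with star} and Corollary \ref{corol: lie algebra for graph tensors}, which occupy most of Section \ref{sec: isotropy groups} and are proved by induction on the vertices via Kronecker products with star tensors. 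The nontrivial part is not the gluing but the local elimination of non-gauge directions: one must show that the component of the central vertex operator mixing edge indices in two or more positions vanishes (this uses conciseness and Lemma \ref{lemma: partial lie algebra}), and then control the remaining directions via Lemma \ref{lemma: isotropy of non-concise tensor} and Lemma \ref{lemma: intersection vs direct sum}. Without either carrying out this computation or citing Corollary \ref{corol: lie algebra for graph tensors}, the lower bound is unproved; note also that Proposition \ref{prop: stable if deg high enough} concerns the stabilizer of a generic \emph{tuple of linear maps}, not of the graph tensor itself, so it cannot substitute for the isotropy computation.

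For comparison, the paper's route avoids the differential entirely: in the critical case ($n_v=\prod_{e\ni v}m_e$) a generic $X_v$ is invertible, so $\calTNS^\Gamma_{\bfm,\bfn}$ is the orbit closure of $T(\Gamma,\bfm)$ under $G(W_1\vvirg W_d)$ and its dimension is $\dim G(W_1\vvirg W_d)-\dim\frakg_{T(\Gamma,\bfm)}$, with $\frakg_{T(\Gamma,\bfm)}=\bfg_{\Gamma,\bfm}$ supplied by Corollary \ref{corol: lie algebra for graph tensors}; the strictly supercritical case is then reduced to the critical one by the tautological-bundle construction of Theorem \ref{thm: reduction supercritical}, which contributes exactly the correction $\sum_v N_v(n_v-N_v)$. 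Your observation that in the supercritical range the variety is a single $\prod_v\GL(V_v)$-orbit closure of the (non-concise) graph tensor is correct and could replace the bundle reduction if combined with Lemma \ref{lemma: isotropy of non-concise tensor}, but it still funnels through the same isotropy computation that your proposal leaves open.
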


 \subsection{State of the art and related work}
  
Tensor network varieties appeared in \cite{LanQiYe:GeomTensorNetwork}, where a number of basic geometric questions were answered, providing several insights. In \cite{LimYe:TensorNetworkRanks}, a comparison between tensor network varieties corresponding to different underlying graphs is proposed. The problem of dimension is also addressed: in particular, Theorem 7.4 and Theorem 9.1 in \cite{LimYe:TensorNetworkRanks} correspond to the particular cases of Corollary \ref{cor: supercritical exact} where the underlying graph is a path or a cycle respectively; in this case, the result follows also from Proposition 2.9 in \cite{Ges:Geometry_of_IMM}.   

It would be interesting to have a general understanding of lower bounds for the dimension of tensor network varieties, but this is a challenging problem. In Section \ref{section: def and prelim}, we provide a parametrization of an open subset of the tensor network variety: hence determining lower bounds on the dimension can be reduced to determining lower bounds on the rank of the differential of the parametrization at a point; however, determining a suitable point, and computing such rank is non-trivial.

An indirect method to determine lower bounds on the dimension of tensor network varieties consists in determining subvarieties of known dimension contained in it. The result of Corollary 4.2 in \cite{LimYe:TensorNetworkRanks} would give a lower bound of this form, whenever the dimension of the $r$-th secant variety of the Segre variety of rank one tensors is known.  The result of Corollary 4.2 of \cite{LimYe:TensorNetworkRanks} can be improved using lower bounds on the \emph{border subrank} of the graph tensors introduced in Section \ref{section: def and prelim} below. We only mention a result in this direction which follows from Theorem 6.6 in \cite{Str:RelativeBilComplMatMult}. For $\bfn = (n_1,n_2,n_3)$, write $\sigma_{r,\bfn}$ for the $r$-th secant variety of the variety of rank one tensors in $\bbC^{n_1} \otimes \bbC^{n_2} \otimes \bbC^{n_3}$. Then, for $\bfm= (m_1 , m_2,m_3)$ with $m_1 \leq m_2 \leq m_3$, we have 
\[
 \sigma_{r,\bfn} \subseteq \calTNS^{C_3} _{\bfm,\bfn}
\]
for every $r \leq m_1m_2 - \left\lfloor \frac{(m_1+m_2 -m_3)^2}{4}\right\rfloor$. In particular, if $m:= m_1 = m_2 =m_3$, we get $\sigma_{r,\bfn} \subseteq \calTNS^{C_3}_{\bfm,\bfn}$ for $r \leq \lceil 3/4 m^2\rceil$; moreover, if $n = n_1=n_2=n_3$, \cite{Lick:Typical_tensor_rank} provides $\dim \sigma_{r,\bfn} = \min\{ r (3n-2), n^3\}$, with the only exception $(r,n) = (4,3)$ where $\dim \sigma_{4,3} = 26$; we deduce
\[
\dim  \calTNS^{C_3}_{\bfm,\bfn} \geq \min\{ r (3n-2), n^3\}
\]
for $r = \lceil 3/4 m^2\rceil$. We do not expect this method to give a sharp bound except in trivial cases. 

Indeed, we expect the upper bound of Theorem \ref{theorem: main} to give the exact value of the dimension in ``most'' cases, in a way similar to the Alexander-Hirschowitz Theorem for secant varieties of Veronese varieties \cite{AlHir:Poly_interpolation_in_several_variables}.

An upper bound analogous to the one of Theorem \ref{theorem: main} is proposed for translation invariant matrix product states in \cite[Conjecture 2.14]{CzaMicSey:UniformMPS}. One can verify that this value coincides with the dimension of the variety of translation invariant matrix product states for a number of small parameter values; in particular, there are no known exceptions in the translation invariant setting, in contrast with the exceptions that we will highlight in Section \ref{sec: small cases}.

\section{Definitions and preliminaries}\label{section: def and prelim}

We introduce tensor network varieties via the language of graph tensors, following \cite{VrChr:EntDistGHZShares,ChrVraZui:AsyRankGraph}. In this work, we restrict to simple graphs; the theory generalizes to more general notions of graphs and we refer to \cite{ChrLucVraWer:TensorNetworkRepGeom,ChrGesMicZui:BorRankNonAddHigher} for the definitions and the basics in the general setting.

Given tensors $T \in V_1 \ootimes V_d$ and $S \in V'_1 \ootimes V'_d$, the \emph{Kronecker product} of $T$ and $S$, denoted $T \boxtimes S$, is the element $T \otimes S$ regarded as a tensor on $d$ factors
\[
 T \boxtimes S \in (V_1 \otimes V'_1) \ootimes (V_d \otimes V'_d).
\]
Given a tensor $T \in V_1 \ootimes V_d$, for every subset $I \subseteq \{ 1 \vvirg d\}$, $T$ defines a linear map $T_I : \bigotimes _{i \in I} V_i^* \to \bigotimes_{i' \notin I} V_{i'}$ called \emph{flattening map} associated to $I$. We say that $T$ is \emph{concise} if all the flattening maps $T_{i} : V_i^* \to \bigotimes _{i' \neq i} V_{i'}$ are injective.

A \emph{simple graph} is an undirected graph with no loops and no multiple edges. Let $\Gamma = (\bfv(\Gamma), \bfe(\Gamma))$ be a simple graph, with vertex set $\bfv(\Gamma) = \{ 1 \vvirg d\}$ and edge set $\bfe(\Gamma)=\{ e_1 \vvirg e_R\}$. A collection of \emph{bond dimensions} is a set of weights $\bfm = (m_e : e \in \bfe(\Gamma))$ on the edges of $\Gamma$. Given a collection of bond dimensions $\bfm$, define the graph tensor associated to $\Gamma$ as follows. For an edge $e = \{ i_1,i_2\}$, let 
\[
\bfu_{(e)}(m_e) = \sum_{j =1}^{m_e} v^{(i_1)}_j \otimes v^{(i_2)}_j \otimes \bigotimes_{i \neq i_1,i_2} v_0^{(i)} \in \bbC^{m_e} \otimes \bbC^{m_e} \otimes \bbC^1 \ootimes \bbC^1
\]
where for $p=1,2$, $\{ v^{(i_p)}_j : j =1 \vvirg m_e\}$ are bases of a copy of $\bbC^{m_e}$ and $v_0^{(i)}$ is a generator of $\bbC^1$ for $i\neq i_1,i_2$. The superscripts indicate the ordering of the tensor factors. 

The graph tensor associated to a graph $\Gamma$ with bond dimensions $\bfm$ is 
\begin{equation}\label{T:Gamma:m}
 T(\Gamma,\bfm) = \bigboxtimes _{e \in \bfe(\Gamma)} \bfu_{(e)}(m_e);
 \end{equation}
this is a tensor of order $d$ whose $i$-th factor has a local structure $W_i = \bigotimes_{e \ni i} \bbC^{m_e}$. In coordinates, we pictorially imagine the graph tensor $T(\Gamma,\bfm)$ as the tensor product of identity matrices $\Id_{m_e} \in \bbC^{m_e} \otimes \bbC^{m_e}$ for $e \in \bfe(\Gamma)$ laying on the edges of the graph; this product is regarded as a tensor of order $d$ where the $i$-th factor is the product of the spaces $\bbC^{m_e}$ incident to vertex $i$. Note that from this point of view one of the two copies of $\bbC^{m_e}$ is identified with its dual space ${\bbC^{m_e}}^*$, see Figure \ref{fig: kroneckering on triangle}.

\begin{figure}
\begin{minipage}{.4\textwidth}
\begin{tikzpicture}
\path [use as bounding box] (-1,-1) rectangle (3,3);
\draw[fill=black] (-.5,0) circle (.15cm);
\draw[fill=black] (0,-.5) circle (.15cm);
\draw[fill=black] (2,-.5) circle (.15cm);
\draw[fill=black] (-.5,2) circle (.15cm);
\draw[fill=black] (2.5,0) circle (.15cm);
\draw[fill=black] (0,2.5) circle (.15cm);
\draw (-.5,0)--(-.5,2);
\draw (0,-.5)--(2,-.5);
\draw (0,2.5)--(2.5,0);
\draw (-.5,1) node[anchor=east] {$\bfu_{(31)}(m_{31})$};
\draw (1.2,1.2) node[anchor=south west] {$\bfu_{(23)}(m_{23})$};
\draw (1,-.5) node[anchor=north] {$\bfu_{(12)}(m_{12})$};
\draw (4.75,1) node {$\bigboxtimes$};
\draw[->] (4,.7)--(5.5,.7);
\end{tikzpicture}
\end{minipage}
\begin{minipage}{.4\textwidth}
\begin{tikzpicture}[scale=1.5]
\path [use as bounding box] (-1,-1) rectangle (3,3);
\draw (0,2)-- (0,0);
\draw (0,0)-- (2,0);
\draw (0,2)-- (2,0);
\draw (1,0) node[anchor=north] {$m_{12}$};
\draw (1,1) node[anchor=south west] {$m_{23}$};
\draw (0,1) node[anchor=east] {$m_{31}$};
\draw[fill=black] (0,0) circle (.1cm);
\draw[fill=black] (2,0) circle (.1cm);
\draw[fill=black] (0,2) circle (.1cm);
\draw (2.5,.5) node {$=T(\Gamma,\bfm)$};
\end{tikzpicture}
\end{minipage}
\caption{Pictorial representation of the construction of the graph tensor $T(\Gamma,\bfm)$ on the triangular graph: $T(\Gamma,\bfm)$ is the tensor product of the three identity matrices $\bfu_e(m_e)$, regarded as a tensor on three factors.}\label{fig: kroneckering on triangle}
\end{figure}
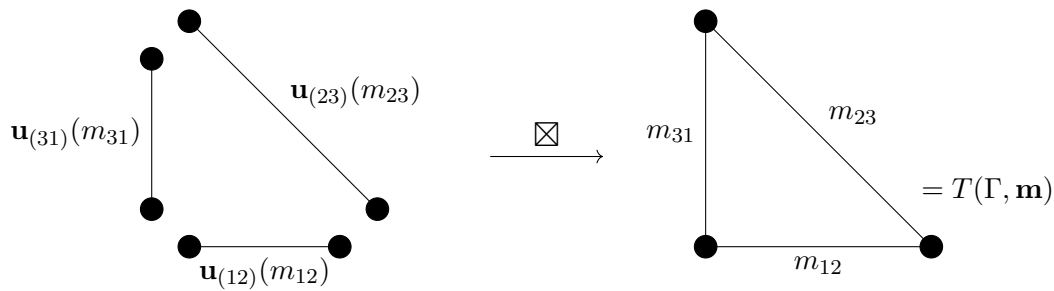

\begin{remark}\label{rmk: dimension 1}
Let $\Gamma$ and $\Gamma'$ be two graphs on the same set of vertices and with $\bfe(\Gamma) = \bfe(\Gamma') \cup \{ e\}$. In other words, $\Gamma'$ is the graph obtained from $\Gamma$ after removing the edge $e$. Let $\bfm$ be a collection of bond dimensions on $\Gamma$ and let $\bfm'$ be the collection $\bfm$ restricted to $\Gamma'$. It is clear from the definitions that if $m_e = 1$ then $T(\Gamma,\bfm) = T(\Gamma',\bfm')$ because in this case $\bfu_{(e)}(m_e)$ is a decomposable tensor hence $T \boxtimes \bfu_{(e)}(m_e)= T$ for every tensor $T$.
\end{remark}

Remark \ref{rmk: dimension 1} guarantees that up to modifying the underlying graph, one can always assume $m_e \geq 2$. 

Let $n_{i} \in \bbN$ be integers associated to the vertices of $\Gamma$ and let $V_i = \bbC^{n_i}$. Write $\bfn = (n_i)_{i = 1 \vvirg d}$ for the $d$-uple of dimensions of the vector spaces $V_i$; we say that $\bfn$ are the \emph{local dimensions} associated to $\Gamma$. A triple $(\Gamma, \bfm, \bfn)$ consisting of a simple graph, a collection of bond dimensions and a collection of local dimensions is a \emph{tensor network}. A tensor network naturally provides the following algebraic variety. 
\begin{definition}[Tensor Network Variety]\label{TNS:def}
The \emph{tensor network variety} in $V_1 \ootimes V_d$ associated to the tensor network $(\Gamma, \bfm, \bfn)$ is 
\begin{align*}
\TNS^\Gamma_{\bfm, \bfn} = \overline{ \biggl\{T \in V_1\ootimes V_d : T  = (X_1 \ootimes X_d) \cdot T(\Gamma,\bfm), X_j \in \Hom(W_j,V_j) \biggr\}},
\end{align*}
where the closure can be taken equivalently in the Euclidean or the Zariski topology.
\end{definition}
The set $\TNS^\Gamma_{\bfm, \bfn}$ is an irreducible algebraic variety. Moreover, if $\bfm$ and $\bfm'$ are two collections of bond dimensions on $\Gamma$ such that $m'_e \leq m_e$ for every edge $e \in \bfe(\Gamma)$, we have $\TNS^\Gamma_{\bfm', \bfn} \subseteq \TNS^\Gamma_{\bfm, \bfn}$.

It is known that if the graph $\Gamma$ is a tree, then the closure in the definition of $\calTNS_{\bfm,\bfn}^\Gamma$ is not needed, but if $\Gamma$ contains cycles then there are examples where it is necessary \cite{LanQiYe:GeomTensorNetwork,ChrLucVraWer:TensorNetworkRepGeom,BarLuFri:ClosednessGeometryTNS}.

The definition of $\calTNS^{\Gamma}_{\bfm,\bfn}$ provides a natural parameterization of a dense subset given by 
\begin{align*}
	\hat{\Phi} : \Hom(W_1, V_1) \ooplus \Hom(W_d,V_d) & \to V_1 \ootimes V_d ,\\
	(X_1, \dots , X_d) & \mapsto (X_1 \ootimes X_d)\cdot T(\Gamma,\bfm).
\end{align*}
Let $\calTNS^{\Gamma \circ}_{\bfm,\bfn}$ be the image of the map $\hat{\Phi}$. The set $\calTNS^{\Gamma \circ}_{\bfm,\bfn}$ is often the object of interest in applications, as it coincides exactly with the set of tensors which have a tensor network representation with the given parameters. Since we are interested in the dimension of these objects, as we employ methods from algebraic geometry, we consider the algebraic variety obtained taking the closure. The map $\hat{\Phi}$ factors as follows:
\begin{equation*}
	\begin{tikzcd}
		\bigoplus_{i=1}^d \Hom(W_i, V_i) \arrow[r, "\mu"]\arrow[dr, "\hat\Phi"'] & \Hom(W_1 \ootimes W_d, V_1 \ootimes V_d)\arrow[d, "\bar\Phi"] \\
		&  V_1 \ootimes V_d
	\end{tikzcd}
\end{equation*}
where $\mu$ is the $d$-linear map defined as $\mu(X_1, \dots , X_d)=X_1 \ootimes X_d$. Denote the image of the map $\mu$ by $\Hom ( W_1 \vvirg W_d; V_1 \vvirg V_d)$. It is the cone over the Segre embedding of $\bbP(\Hom(W_1, V_1)) \ttimes \bbP(\Hom(W_d,V_d))$ in $\bigotimes_1^d \Hom(W_i,V_i) = \Hom(W_1 \ootimes W_d, V_1 \ootimes V_d) $ and its affine dimension is 
\[
\dim \Hom(W_1,\dots, W_d; V_1,\dots,V_d)=\sum_{1=1}^d \dim(\Hom(W_i,V_i))-d+1.
\]
The map $\bar{\Phi}$ is simply the evaluation at the graph tensor; therefore the restriction of $\bar{\Phi}$ to the subvariety $\Hom(W_1 \vvirg W_d, V_1 \vvirg V_d)$ provides a parameterization of $\calTNS^{\Gamma \circ}_{\bfm,\bfn}$; denote this restriction by
\[
 \Phi: \Hom(W_1 \vvirg W_d, V_1 \vvirg V_d) \to V_1 \ootimes V_d.
\]

The \emph{dimension} of an irreducible algebraic variety is defined as the dimension of its tangent space at a smooth point. We refer to \cite[Ch. 3]{Shaf:BasicAlgGeom1} for the basic properties of dimension. The Theorem of Dimension of the Fibers \cite[Thm. 1.25]{Shaf:BasicAlgGeom1} provides
\begin{equation}\label{eqn: dimension of fibers}
\dim \calTNS^{\Gamma}_{\bfm,\bfn} = \dim \left[\Hom(W_1,\dots, W_d; V_1,\dots,V_d) \right] - \dim \Phi^{-1}(T)
\end{equation}
where $T$ is a generic tensor in the image of $\Phi$.

The goal of the rest of the paper is to determine the value $\dim \Phi^{-1}(T)$ which, via \eqref{eqn: dimension of fibers}, gives the value of $\dim \calTNS^{\Gamma}_{\bfm, \bfn}$. Determining the exact value $\dim \Phi^{-1}(T)$ is hard in general. We focus on lower bounds, which via \eqref{eqn: dimension of fibers} provide upper bounds for $\dim \calTNS^\Gamma_{\bfm,\bfn}$. This is done by determining the dimension of stabilizer of the graph tensor under the action of $\GL(W_1) \ttimes \GL(W_d)$ and then showing that $ \Phi^{-1}(T)$ contains orbits under the action of such stabilizer; lower bounds on the dimension of such orbit gives a lower bound on $\dim \Phi^{-1}(T)$.

\section{Isotropy of tensors: the gauge subgroup}\label{sec: isotropy groups}

In this section, we determine the dimension of the isotropy group of graph tensors. First, we provide some preliminary results on the stabilizer of a tensor under the action of the general linear groups acting on the tensor factors; we then introduce the gauge subgroup of a tensor network and we prove that it coincides with the connected component of the identity of the isotropy group of the corresponding graph tensor.

\subsection{Isotropy groups of tensors}\label{subsec: isotropy groups}
Given vector spaces $V_1 \vvirg V_d$, consider the natural action of the group $\GL(V_1) \ttimes \GL(V_d)$ on $V_1 \ootimes V_d$. This defines a group homomorphism 
\begin{align*}
	\GL(V_1) \ttimes \GL(V_d) &\to \GL(V_1 \ootimes V_d) \\
	(g_1 \vvirg g_d) &\mapsto g_1 \ootimes g_d
\end{align*}
whose kernel is the central subgroup $Z_{V_1 \ootimes V_d} = \{ (\lambda_1 \Id_{V_1} \vvirg \lambda_d \Id_{V_d}) : \lambda_1 \cdots \lambda_d = 1\}$. Therefore, the group $G(V_1 \vvirg V_d) := \GL(V_1) \ttimes \GL(V_d) / Z_{V_1 \ootimes V_d}$ can be identified naturally with a subgroup of $\GL(V_1\ootimes V_d)$ acting faithfully on $V_1 \ootimes V_d$. The elements of $G(V_1 \vvirg V_d)$ will be denoted as tensor products $g_1 \ootimes g_d$ for $g_j \in \GL(V_j)$.

The corresponding Lie algebra action defines a Lie algebra homomorphism 
\begin{align*}
	\frakgl(V_1) \ooplus \frakgl(V_d) &\to \frakgl(V_1 \ootimes V_d) \\
	(X_1 \vvirg X_d) &\mapsto X_1 \otimes \Id_{V_2} \ootimes \Id_{V_d} + \cdots +\Id_{V_1} \ootimes \Id_{V_{d-1}} \otimes X_d ,
\end{align*}
whose kernel is the central algebra $\frakz_{V_1 \ootimes V_d} = \{  (x_1 \Id_{V_1} \vvirg x_d\Id_{V_d}) : x_1 + \cdots + x_d = 0\}$. Hence, the Lie algebra $\frakg(V_1 \vvirg V_d) := \frakgl(V_1) \ooplus \frakgl(V_d) / \frakz_{V_1 \ootimes V_d}$ is a subalgebra of $\frakgl(V_1 \ootimes V_d)$ and coincides with the Lie algebra of $G(V_1 \vvirg V_d)$. With abuse of notation, denote the elements of $\frakg(V_1 \vvirg V_d)$ as $d$-tuples $\bfX = (X_1 \vvirg X_d)$ with $X_j \in \frakgl(V_j)$ with the understanding that $\bfX$ is identified with its image in $\frakg(V_1 \vvirg V_d)$.

\begin{definition}
	Let $T \in V_1 \ootimes V_d$ be a tensor. The isotropy group of $T$, denoted $G_T$, is the stabilizer of $T$ under the action of $G(V_1 \vvirg V_d)$:
	\[
	G_T = \{ g_1 \ootimes g_d \in G(V_1 \vvirg V_d) : g_1 \ootimes g_d (T) = T\}.
	\]
\end{definition}
The group $G_T$ is algebraic and in general it is union of finitely many connected (irreducible) components. Let $G_T^\circ$ denote the connected component containing the identity: $G_T^\circ$ is normal in $G_T$ and $\dim G_T = \dim G_T^\circ$, see, e.g., \cite[Lemma 2.1]{Ges:Geometry_of_IMM}. 

The isotropy Lie algebra of $T$, denoted $\frakg_T$, is the Lie algebra of the group $G_T$, or equivalently the one of $G_T^\circ$; it is the subalgebra of $\frakg(V_1 \vvirg V_d)$ which annihilates $T$ under the Lie algebra action induced by $\frakgl(V_1) \ooplus \frakgl(V_d)$ \cite[Sec. 1.2]{Pro:LieGroups} 
\[
\frakg_T = \{ \bfX = (X_1 \vvirg X_d) \in \frakg(V_1 \vvirg V_d) : \bfX . T = 0\},
\]
where $\bfX.T = \sum_1^d \Id_{V_1} \ootimes X_k \ootimes \Id_{V_d} (T)$ denotes the image via the Lie algebra action. We have $\dim \frakg_T = \dim G_T^\circ = \dim G_T$.

The dimension of the orbit-closure of a tensor $T$ is therefore given by
\[ 
\dim (G(V_1 \vvirg V_d) \cdot T ) = \left[ \textsum_{i=1}^d (\dim V_i)^2- d+1 \right]  - \dim \frakg_T .
\]

We prove preliminary results on isotropy Lie algebras of tensors of higher order. Lemma \ref{lemma: isotropy of non-concise tensor} is classical and we record it here for the reader's convenience. Lemma \ref{lemma: partial lie algebra} concerns the intersection of $\frakg_T$ with the subalgebra of $\frakg(V_1 \vvirg V_d)$ consisting of elements acting only on a subset of the tensor factors; this will be used in a reduction in the proof of Theorem \ref{thm: isotropy group when kron with star}. 

We first record an immediate linear algebra fact.
\begin{lemma}\label{lemma: intersection vs direct sum}
	Let $V$ be a vector space and let $A,B_1, \dots ,B_N$ be subspaces of $V$ such that there exists a subspace $B$ with the property that $A \cap B = \{0\}$ and $B_j \subseteq B$ for every $j=1,\dots ,N$. Then $\bigcap_{j} (A \oplus B_j) = A \oplus \bigcap_j  B_j$.
\end{lemma}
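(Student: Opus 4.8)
The plan is to prove the two inclusions of the claimed equality separately; the only substantive point is a simultaneous uniqueness argument for the $A$-component of a decomposition.

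First I would dispatch the easy inclusion $A \oplus \bigcap_j B_j \subseteq \bigcap_j (A \oplus B_j)$. Note that each sum $A \oplus B_j$ is indeed direct, since $A \cap B_j \subseteq A \cap B = \{0\}$. As $\bigcap_j B_j \subseteq B_k$ for every $k$, we have $A \oplus \bigcap_j B_j \subseteq A \oplus B_k$ for each $k$, and intersecting over $k$ gives the containment.

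For the reverse inclusion, I would take $v \in \bigcap_j (A \oplus B_j)$ and, for each $j = 1,\dots,N$, write $v = a_j + b_j$ with $a_j \in A$ and $b_j \in B_j \subseteq B$. The key step is to observe that for any pair of indices $i,j$ one has $a_i - a_j = b_j - b_i$; the left-hand side lies in $A$ and the right-hand side lies in $B$, so $a_i - a_j \in A \cap B = \{0\}$. Hence all the $a_j$ coincide with a single element $a \in A$, and therefore all the $b_j$ coincide with the single element $b := v - a$. Since $b = b_j \in B_j$ for every $j$, we get $b \in \bigcap_j B_j$, and thus $v = a + b \in A \oplus \bigcap_j B_j$, as desired.

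I do not expect any genuine obstacle here: the statement is purely formal. The one thing worth flagging is that the hypothesis $A \cap B = \{0\}$ (rather than merely $A \cap B_j = \{0\}$ for each $j$) is precisely what is needed, since it forces the $A$-components of $v$ relative to the different summands $B_j$ to agree \emph{simultaneously}, which is what allows the common $B$-component to be pushed into $\bigcap_j B_j$.
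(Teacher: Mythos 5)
Your proof is correct and complete; the paper itself records this lemma as an ``immediate linear algebra fact'' without proof, and your argument (directness from $A\cap B_j\subseteq A\cap B=\{0\}$, then forcing all the $A$-components $a_j$ to coincide via $a_i-a_j\in A\cap B=\{0\}$) is exactly the standard verification being left implicit. Your closing remark correctly identifies why the common complement $B$ is the essential hypothesis.
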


The following result is classical and follows for instance from \cite[Section 1.1]{Bri:Flags}.
\begin{lemma}\label{lemma: isotropy of non-concise tensor}
	Let $T \in V_1 \ootimes V_d$ be a non-concise tensor. Let $V'_i\subseteq V_i$ be subspaces such that $T \in V'_1 \ootimes V'_d$ is concise. Write $\frakh_T$ for the isotropy Lie algebra of $T$ in $\frakg(V'_1 \vvirg V'_d)$ (regarded as a subalgebra of $\frakg(V_1 \vvirg V_d)$) and $\frakg_T$ for the isotropy Lie algebra of $T$ in $\frakg(V_1 \vvirg V_d)$. Then
	\[
	\frakg_T = \frakh_T \oplus \frakp
	\]
	where $\frakp \subseteq \frakg(V_1 \vvirg V_d)$ is the Lie algebra which annihilates the subspace $V'_1 \ootimes V'_d$, that is the algebra generated by $\bigoplus_{i=1}^d ({V'_i}^\perp \otimes V_i ) \subseteq \frakgl(V_1) \ooplus \frakgl(V_d)$.
\end{lemma}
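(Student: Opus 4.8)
The plan is to unwind the Lie algebra action coordinate-wise and use the conciseness of $T$ in the smaller spaces $V'_i$ to pin down which endomorphisms can possibly contribute to $\bfX.T = 0$. First I would fix complements $V_i = V'_i \oplus V''_i$ so that $\frakgl(V_i) = \Hom(V_i,V_i)$ decomposes into four blocks according to the splitting of source and target; the ``$\Hom(V'_i,V_i)$ plus the part supported on $V''_i$ in the source'' piece is exactly what is spanned by ${V'_i}^\perp \otimes V_i$ (here ${V'_i}^\perp$ is identified with ${V''_i}^*$), so $\frakp$ is the subalgebra generated by $\bigoplus_i ({V'_i}^\perp \otimes V_i)$, which is precisely the set of $\bfX$ all of whose components kill $V'_1\ootimes V'_d$. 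One checks directly that $\frakp \subseteq \frakg_T$: if every $X_k$ annihilates $V'_1 \ootimes V'_d$ and $T$ lies in that subspace, then each term $\Id \ootimes X_k \ootimes \Id$ applied to $T$ vanishes, so $\bfX.T = 0$. Since also $\frakh_T \subseteq \frakg_T$ trivially (an element of $\frakg(V'_1\vvirg V'_d)$ acting trivially on $T\in V'_1\ootimes V'_d$ acts trivially on $T$ viewed in the bigger space), we get $\frakh_T + \frakp \subseteq \frakg_T$, and the sum is direct because $\frakg(V'_1\vvirg V'_d) \cap \frakp = 0$ block-wise (the $\Hom(V'_i, V'_i)$ blocks meet the complementary blocks only in $0$).

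For the reverse inclusion $\frakg_T \subseteq \frakh_T \oplus \frakp$, I would take $\bfX = (X_1\vvirg X_d) \in \frakg_T$, decompose each $X_k = X_k' + X_k''$ where $X_k'$ is the $\Hom(V_k', V_k')$ block and $X_k''$ gathers the remaining three blocks, so that $\bfX'' := (X_1''\vvirg X_d'')$ lies in $\frakp$. It suffices to show $\bfX' := (X_1'\vvirg X_d') \in \frakh_T$, i.e.\ $\bfX'.T = 0$. Now $\bfX.T = \bfX'.T + \bfX''.T$ and $\bfX''.T = 0$ since $\bfX'' \in \frakp \subseteq \frakg_T$ (or just because each $X_k''$ kills $V'_1\ootimes V'_d \ni T$), hence $\bfX'.T = \bfX.T = 0$. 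But $\bfX'.T = \sum_k \Id\ootimes X_k'\ootimes\Id(T)$ is again an element of $V'_1\ootimes V'_d$, so the vanishing is exactly the condition defining $\frakh_T$ inside $\frakg(V'_1\vvirg V'_d)$. Thus $\bfX' \in \frakh_T$ and $\bfX = \bfX' + \bfX'' \in \frakh_T \oplus \frakp$, completing the proof.

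The one subtlety worth spelling out — and the only place conciseness of $T$ in $V'_1\ootimes V'_d$ is genuinely used — is the claim that $\frakh_T$, computed inside $\frakg(V'_1\vvirg V'_d)$, is the ``right'' object and that no further collapsing occurs: conciseness guarantees that the $V'_i$ are the minimal subspaces supporting $T$, so that the decomposition $\frakg(V_1\vvirg V_d) = \frakg(V'_1\vvirg V'_d) \oplus \frakp$ (as vector spaces, after passing to the quotient by the central subalgebra) is canonical and the argument above is not circular. Everything else is the block-matrix bookkeeping of the previous paragraphs together with Lemma \ref{lemma: intersection vs direct sum} if one prefers to phrase the direct-sum decomposition abstractly; I expect the bookkeeping with the central quotient $\frakz_{V_1\ootimes V_d}$ to be the main (minor) nuisance, since one must check the decomposition descends to $\frakg(V_1\vvirg V_d)$, but this is routine because $\frakz_{V_1\ootimes V_d} \subseteq \frakg(V'_1\vvirg V'_d)$ already.
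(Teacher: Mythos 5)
Your argument has a genuine gap in the reverse inclusion $\frakg_T \subseteq \frakh_T \oplus \frakp$. Fix complements $V_i = V'_i \oplus V''_i$ and write $X_i \in \frakgl(V_i)$ in block form with $A_i \in \Hom(V'_i,V'_i)$, $B_i \in \Hom(V''_i,V'_i)$, $C_i \in \Hom(V'_i,V''_i)$, $D_i \in \Hom(V''_i,V''_i)$. The algebra $\frakp$, generated by $\bigoplus_i ({V'_i}^\perp \otimes V_i) \cong \bigoplus_i \Hom(V''_i,V_i)$, consists of tuples whose components vanish on $V'_i$, i.e.\ have $A_i = C_i = 0$: it contains the $B$ and $D$ blocks but \emph{not} the $C$ blocks. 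Your decomposition $X_k = X'_k + X''_k$ lumps $C_k$ into $X''_k$ and then asserts both that $\bfX'' \in \frakp$ and that each $X''_k$ kills $V'_1 \ootimes V'_d$; both claims fail whenever $C_k \neq 0$, since $C_k$ does not vanish on $V'_k$. Relatedly, the vector-space decomposition $\frakg(V_1 \vvirg V_d) = \frakg(V'_1 \vvirg V'_d) \oplus \frakp$ you invoke at the end is false on dimension grounds: the right-hand side misses precisely the blocks $\Hom(V'_i,V''_i)$, and the whole content of the lemma is that these blocks are forced to vanish on $\frakg_T$.

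The missing step, which is also the only place conciseness is genuinely used, is the following. Decompose $V_1 \ootimes V_d$ into summands according to $V_i = V'_i \oplus V''_i$. For $\bfX \in \frakg_T$, the contribution of $C_k$ to $\bfX.T$, namely $(\Id \ootimes C_k \ootimes \Id)(T)$, lies in the summand $V'_1 \ootimes V''_k \ootimes V'_d$ (double-primed only in position $k$), the contributions of the $A_k$ lie in the all-primed summand, and the $B_k, D_k$ annihilate $T$. Since these summands are independent, $\bfX.T = 0$ forces separately $\sum_k A_k.T = 0$ (which is exactly $\bfA \in \frakh_T$, your $\bfX'$) and $(\Id \ootimes C_k \ootimes \Id)(T) = 0$ for each $k$. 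The latter says $C_k$ vanishes on the image of the flattening $\bigotimes_{j \neq k} {V'_j}^* \to V'_k$, which is all of $V'_k$ by conciseness of $T$ in $V'_1 \ootimes V'_d$; hence $C_k = 0$ (compare the final claim of Lemma \ref{lemma: partial lie algebra}). With this inserted the argument closes; your closing remark that conciseness only serves to make the decomposition ``canonical'' misidentifies its role. The bookkeeping with the central subalgebras is, as you say, routine.
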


\begin{lemma}\label{lemma: partial lie algebra}
	Let $T \in V_1 \ootimes V_d$. For $I \subseteq \{ 1 \vvirg d\}$, let $ F _T:=T_{I^c}: \bigotimes_{j \in I^c}V_{j}^* \to \bigotimes_{i \in I} V_{i}$ be the flattening map of $T$ corresponding to the subset $I$. Then
	\begin{equation}\label{eqn: partial lie algebra intersection}
	\frakg_T \cap \frakg(V_i : i \in I) = \bigcap_{S \in \Im F_T} \frakg_S.
	\end{equation}
	In particular, if $T$ is concise, $\frakg_T \cap \frakgl(V_j) = 0$ for every $j$.
\end{lemma}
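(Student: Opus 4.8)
The plan is to reduce the identity to the elementary fact that a linear map vanishes if and only if it annihilates a spanning set of its domain, after rephrasing the Lie‑algebra action restricted to the factors indexed by $I$ in terms of the flattening $F_T$. Throughout I identify $\frakg(V_i : i \in I)$ with its image in $\frakg(V_1 \vvirg V_d)$, obtained by padding a tuple $(X_i : i\in I)$ with zeros in the coordinates indexed by $I^c$; this is a well‑defined injection because the kernel of $\bigoplus_{i \in I} \frakgl(V_i) \to \frakg(V_1 \vvirg V_d)$ is exactly the center of $\frakg(V_i : i \in I)$ (here I tacitly assume $I^c \neq \emptyset$, the case $I^c = \emptyset$ being trivial since then $\Im F_T = \langle T \rangle$ and both sides of \eqref{eqn: partial lie algebra intersection} equal $\frakg_T$). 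For a tensor $S \in \bigotimes_{i\in I} V_i$ on the factors indexed by $I$, I write $\frakg_S$ for its isotropy Lie algebra inside $\frakg(V_i : i \in I)$.

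The key step is the following compatibility: under the canonical isomorphism $V_1 \ootimes V_d \cong \Hom\bigl(\bigotimes_{j \in I^c} V_j^{*}, \bigotimes_{i \in I} V_i\bigr)$ identifying $T$ with $F_T$, the element $\bfX . T$, for $\bfX = (X_i : i \in I) \in \frakg(V_i : i \in I)$, corresponds to the postcomposition $\rho(\bfX) \circ F_T$, where $\rho(\bfX) = \sum_{i \in I} \Id \ootimes X_i \ootimes \Id \in \End\bigl(\bigotimes_{i\in I} V_i\bigr)$ is the operator implementing the action on $\bigotimes_{i \in I} V_i$. I would check this by writing $T = \sum_\alpha S_\alpha \otimes w_\alpha$ with $S_\alpha \in \bigotimes_{i \in I} V_i$ and $w_\alpha \in \bigotimes_{j \in I^c} V_j$, using that $\bfX$ acts only on the $I$‑factors to get $\bfX . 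T = \sum_\alpha \rho(\bfX)(S_\alpha) \otimes w_\alpha$, and evaluating the associated flattening on a covector $\xi \in \bigotimes_{j \in I^c} V_j^{*}$, which returns $\rho(\bfX)\bigl(F_T(\xi)\bigr)$. (That $\rho(\bfX)$ depends only on the class of $\bfX$ and not on the chosen lift is the statement that the center of $\frakg(V_i:i\in I)$ acts by zero.)

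Granting this, the chain of equivalences is immediate: $\bfX \in \frakg_T$ iff $\bfX . T = 0$, iff $\rho(\bfX) \circ F_T = 0$, iff $\rho(\bfX)$ kills $F_T(\xi)$ for every $\xi$, iff $\rho(\bfX)(S) = 0$ for every $S \in \Im F_T$, iff $\bfX \in \bigcap_{S \in \Im F_T} \frakg_S$; keeping $\bfX$ inside $\frakg(V_i : i \in I)$ throughout yields \eqref{eqn: partial lie algebra intersection}. For the final assertion I take $I = \{j\}$: conciseness of $T$ says $T_j$ is injective, equivalently its transpose $F_T = T_{\{j\}^c}$ is surjective, so $\Im F_T = V_j$; since $\frakg(V_j) = \frakgl(V_j)$ and $\frakg_S = \{X \in \frakgl(V_j) : X(S) = 0\}$ for $S \in V_j$, the intersection over all $S \in V_j$ is the set of endomorphisms annihilating every vector of $V_j$, namely $0$.

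The argument has no real obstacle: the only points requiring care are the bookkeeping of the embedding $\frakg(V_i : i \in I) \hookrightarrow \frakg(V_1 \vvirg V_d)$ and the choice of lifts in the first paragraph, and the careful but routine verification of the action–flattening compatibility in the second; everything else is the trivial linear‑algebra observation about a map vanishing exactly when it vanishes on its image.
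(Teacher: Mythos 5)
Your proposal is correct and follows essentially the same route as the paper: the paper's proof also rests on the identity $F_{\bfX.T}(S') = \bfX_1.F_T(S')$ for $\bfX$ supported on the $I$-factors (obtained there via Leibniz's rule), and then argues the two inclusions separately, the converse by choosing generators $S_1 \vvirg S_N$ of $\Im F_T$ and writing $T = \sum S_i \otimes P_i$ — which is exactly your observation that $\rho(\bfX)\circ F_T = 0$ iff $\rho(\bfX)$ kills a spanning set of $\Im F_T$. Your extra care about the injectivity of $\frakg(V_i : i \in I) \hookrightarrow \frakg(V_1 \vvirg V_d)$ and the degenerate case $I^c = \emptyset$ is a minor tidying of points the paper leaves implicit, not a different argument.
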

\begin{proof}
	Let $k = |I|$; after possibly reordering the factors, assume $I = \{1 \vvirg k\}$. 
	
	For $\bfX \in \frakg(V_1 \vvirg V_d)$, write $\bfX = (\bfX_1, \bfX_2)$ with $\bfX_1 = (X_1 \vvirg X_k)$ and $\bfX_2 = (X_{k+1} \vvirg X_d)$. Let $\bfX.T$ be the image of $T$ via the action of $\bfX$ and let $F_{\bfX.T}$ be the corresponding flattening map.  By Leibniz's rule, given an element $S' \in V_{k+1}^*\ootimes V_d^*$, $F_{\bfX.T}$ is characterized by the expression 
	\[
	F_{\bfX.T} (S') = F_T ( \bfX_2. S' ) + \bfX_1.F_T(S'),
	\]
	where $\bfX_2$ acts on $V_{k+1}^* \ootimes V_d^*$, $\bfX_1$ acts on $V_1 \ootimes V_k$.
	
	Now, let $\bfX \in \frakg_T \cap \left( \frakgl(V_1) \ooplus \frakgl(V_k)\right)$. Hence, $\bfX = (\bfX_1, {\bf0})$ and $\bfX.T = 0$. Therefore $0 = F_{\bfX.T} (S') = \bfX_1.F_T(S')$, showing $\bfX_1 \in \frakg_S$ for every $S \in \Im F_T$.
	
	Conversely, let $\bfX_1 \in \bigcap_{S \in \Im F_T} \frakg_S$. Let $S_1 \vvirg S_N \in \Im F_T$ be a set of generators and write $T = \sum_{i=1}^N S_i \otimes P_i$ for some $P_i \in V_{k+1} \ootimes V_d$. Let $\bfX = (\bfX_1,{\bf0})$. Then
	\[
	\bfX.T = \sum_{i=1}^N (\bfX_1.S_i) \otimes P_i + \sum_{i=1}^N S_i \otimes {\bf0}.P_i = \sum_{i=1}^N (\bfX_1.S_i) \otimes P_i = 0
	\]
	showing $\bfX \in \frakg_T$.
	
	This concludes the proof of \eqref{eqn: partial lie algebra intersection}. 
	
	The last claim follows by taking $I = \{j\}$: if $T$ is concise, then $F_T$ is surjective and therefore $\bigcap _{S \in \Im F_T} \frakg_S = \bigcap_{v \in V_j} \frakg_{v} = 0$. \end{proof}

By linearity the intersection in Lemma \ref{lemma: partial lie algebra} can be restricted to a basis of the image of the flattening map $\Im F_T$, as it is clear from the proof.

\subsection{Gauge subgroup}\label{sec: gauge section}
Let $\Gamma$ be a graph and $\bfm = (m_e : e \in \bfe(\Gamma))$ a collection of bond dimensions. Let $T = T(\Gamma,\bfm) \in W_1 \ootimes W_d$ be the associated graph tensor. Fix an edge $e = \{ i_1,i_2\} \in \bfe(\Gamma)$: by definition of $T(\Gamma, \bfm)$ there exist vector spaces $U_e$, $W'_{i_1}, W'_{i_2}$ such that $W_{i_1} = U_e \otimes W'_{i_1}$, and $W_{i_2} = U_e^* \otimes W'_{i_2}$ where $\dim U_e = m_e$ and the tensor product structure depends on the local structure at the vertices $i_1$ and $i_2$. The group $\GL(U_e) \times \GL(U_e^*)$ acts on the factor $U_e \otimes U_e^*$ of $W_{i_1} \otimes W_{i_2}$ with kernel the central subgroup $Z_e = \{ (\lambda \Id_{U_e} , \lambda^{-1} \Id_{U_e^*}) : \lambda \in \bbC^*\}$. 

This defines a homomorphism 
\[
\Psi_e : (\GL(U_e) \times \GL(U_e^*))/Z_e \to \GL(W_{k_1} \otimes W_{k_2}) \to G(W_k : {k \in \bfv(\Gamma)}).
\]
As $e$ varies among the edges of $\Gamma$, the images of the different $\Psi_e$'s commute and therefore they induce a homomorphism
\[
 \Psi : \bigtimes_{e \in \bfe(\Gamma)} (\GL(U_e) \times \GL(U_e^*)) / Z_e \to G(W_k : k \in \bfv(\Gamma)),
\]
which turns out to be injective. Regrouping the factors, we can write
\[
 \Im (\Psi) = \left[ \bigtimes_{v \in \bfv(\Gamma)} H_v  \right] / \left[  \bigtimes_{e \in \bfe(\Gamma)} Z_e \right]
\]
where $H_v = \bigtimes_{v \ni e} \GL_{m_e}$; here $\GL_{m_e}$ is $\GL(U_e)$ or $\GL(U_e^*)$ depending on whether $U_e$ or $U_e^*$ is the tensor factor appearing in $W_v$. With abuse of notation, we will denote by $H_v$ the quotient $\left\langle H_v , \bigl[ \bigtimes_{e \in \bfe(\Gamma)} Z_e \bigr] \right\rangle / \bigl[ \bigtimes_{e \in \bfe(\Gamma)} Z_e \bigr] \subseteq \Im (\Psi)$ as well, where for subgroups $H,K$, one denotes by $\langle H,K\rangle$ the subgroup generated by $H$ and $K$.

Let $\GL^{\Delta}_{m_e} \subseteq \GL(U_e) \times \GL(U_e^*)$ be the subgroup lying ``diagonally'', that is 
\[
\GL^{\Delta}_{m_e} = \{ (A,{A^{-1}}^T) \in \GL(U_e) \times \GL(U_e^*) : A \in \GL(U_e)\};
\]
its image under the homomorphism $\Psi_e$ is a copy of $\PGL_{m_e} \subseteq  G(W_1 \vvirg W_d)$ called \emph{gauge subgroup} on the edge $e$. The following is immediate from the definitions:

\begin{lemma}\label{lemma: gauge stabilizes}
 The gauge subgroup $\PGL_{m_e} \subseteq G(W_1 \vvirg W_d)$ stabilizes $T(\Gamma,\bfm)$.
\end{lemma}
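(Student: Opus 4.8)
The plan is to reduce the statement to the single linear-algebra fact that the identity tensor $\Id_{U_e}\in U_e\otimes U_e^*$ is fixed by $\{(A,(A^{-1})^T): A\in\GL(U_e)\}$, and then propagate this through the Kronecker-product structure of $T(\Gamma,\bfm)$.

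First I would unwind the definition of the gauge subgroup on the edge $e=\{i_1,i_2\}$. By construction $\Psi_e$ sends $(A,(A^{-1})^T)\in\GL^\Delta_{m_e}$ to the element $g_1\ootimes g_d\in G(W_1\vvirg W_d)$ with $g_{i_1}=A\otimes\Id_{W'_{i_1}}$, $g_{i_2}=(A^{-1})^T\otimes\Id_{W'_{i_2}}$ and $g_k=\Id_{W_k}$ for $k\neq i_1,i_2$, using the factorizations $W_{i_1}=U_e\otimes W'_{i_1}$ and $W_{i_2}=U_e^*\otimes W'_{i_2}$. Hence the action of this element on $W_1\ootimes W_d$ is the identity on every tensor factor except on the copy of $U_e\otimes U_e^*$ sitting inside $W_{i_1}\otimes W_{i_2}$, where it acts by $A\otimes(A^{-1})^T$. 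Next I would recall from the coordinate description of $T(\Gamma,\bfm)$ given after \eqref{T:Gamma:m} that, with respect to these factorizations, the graph tensor has the form $T(\Gamma,\bfm)=\Id_{U_e}\otimes T'$, where $\Id_{U_e}=\sum_{j=1}^{m_e}v_j\otimes v^j\in U_e\otimes U_e^*$ is the identity matrix laid on the edge $e$ (with $\{v_j\}$, $\{v^j\}$ dual bases) and $T'=\bigboxtimes_{e'\neq e}\bfu_{(e')}(m_{e'})$ is a tensor on the remaining factors, on all of which the gauge element acts trivially.

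It therefore suffices to check that $A\otimes(A^{-1})^T$ fixes $\Id_{U_e}$: under the identification $U_e\otimes U_e^*\cong\End(U_e)$ the tensor $\Id_{U_e}$ corresponds to the identity endomorphism and the action of $(A,(A^{-1})^T)$ becomes conjugation $X\mapsto AXA^{-1}$, which fixes $\Id$; equivalently, $\sum_j(Av_j)\otimes((A^{-1})^T v^j)=\sum_j v_j\otimes v^j$ by expanding in the bases. Finally I would assemble these observations: the image of $\GL^\Delta_{m_e}$ under $\Psi_e$ is exactly the gauge subgroup $\PGL_{m_e}$ (the kernel of $\Psi_e|_{\GL^\Delta_{m_e}}$ being precisely the scalar matrices $\GL^\Delta_{m_e}\cap Z_e$), and we have shown every element of this image fixes $T(\Gamma,\bfm)$. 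There is essentially no obstacle here — as the statement says, it is immediate from the definitions; the only point requiring a little care is the bookkeeping of the factor identifications $W_{i_1}=U_e\otimes W'_{i_1}$, $W_{i_2}=U_e^*\otimes W'_{i_2}$ and the observation that for this particular lemma one gauge factor is involved at a time, so nothing needs to be said about commuting actions.
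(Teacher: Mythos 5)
Your proof is correct and follows essentially the same route as the paper's: both isolate the Kronecker factor $\bfu_{(e)}(m_e)$ (the identity matrix laid on the edge $e$), observe that the gauge element acts trivially on all other factors, and conclude by noting that the induced action on $U_e\otimes U_e^*\cong\End(U_e)$ is conjugation, which fixes the identity. Your write-up is in fact slightly more explicit about the factor identifications than the paper's two-line argument, but the content is identical.
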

\begin{proof}
Let $e = \{ i_1,i_2\}$, so that $\PGL_{m_e}$ only acts on the copy of $U_e \otimes U_e^* \subseteq W_{i_1} \otimes W_{i_2}$. In fact, because of the structure of $T(\Gamma,\bfm)$,  $\PGL_{m_e}$ only acts on the Kronecker factor $\bfu_e = \Id_{m_e}^{(i_1,i_2)} \otimes  \left( \bigotimes_{j \neq i_1,i_2} v_0^{(j)} \right) \in U_e \otimes U_e^* \otimes \bigotimes_{j \neq i_1,i_2} \bbC^1$. 

For $A \in \PGL_{m_e}$, we have $A \cdot \bfu_e = ( A^{-1}\Id_{m_e}^{(i_1,i_2)} A ) \otimes  \left( \bigotimes_{j \neq i_1,i_2} v_0^{(j)} \right) = \bfu_e$. Therefore $\PGL_{m_e}$ stabilizes $\bfu_e$.
\end{proof}

The image of the homomorphism $\Psi$ restricted to $\bigtimes_{e} \GL^\Delta_{m_e}$ is a subgroup
\[
\calG_{\Gamma,\bfm} \simeq \bigtimes_{e \in \bfe(\Gamma)}\PGL_{m_e} \subseteq G(W_k : k \in \bfv(\Gamma)) 
\]
called \emph{gauge subgroup} of $\Gamma$ with bond dimensions $\bfm$. A consequence of Lemma \ref{lemma: gauge stabilizes} is that the graph tensor $T(\Gamma,\bfm)$ is stabilized by $\calG_{\Gamma,\bfm}$. 

Denote by $\bfg_{\Gamma,\bfm}$ the Lie algebra of the gauge subgroup $\calG_{\Gamma,\bfm}$ of $\Gamma$.

\subsection{Isotropy group of graph tensors}\label{subsec: dim tns via isotropy}

The main result of this section is that the identity component of the isotropy group $G_{T(\Gamma,\bfm)}$ of a graph tensor coincides with the gauge subgroup. This generalizes the known results for the iterated matrix multiplication tensor \cite{dGro:VarsOptAlgIIsoGrps,Ges:Geometry_of_IMM}, that is the graph tensor associated to the cycle graph. We prove a more general form of this fact in Theorem \ref{thm: isotropy group when kron with star}; the result on graph tensors will follow via an inductive argument in Corollary \ref{corol: lie algebra for graph tensors}.

\begin{theorem}\label{thm: isotropy group when kron with star}
	Let $T' \in  \bbC^1 \otimes \bigotimes_{j =1}^d W'_j$ be a concise tensor of order $d+1$. Let $\Sigma = (\bfv(\Sigma), \bfe(\Sigma))$ be the graph on $d+1$ vertices $\bfv(\Sigma) = \{ 0 \vvirg d\}$ with edge set $\bfe(\Sigma) = \{ e_1 \vvirg e_k\}$, where $e_j = \{ 0,j\}$. Let $\bfm = (m_j : j =1 \vvirg k)$ be a set of bond dimensions on $\Sigma$. Let 
	\[
	S:=T(\Sigma,\bfm) \in  \bbC^{m_1 \cdots m_k} \otimes \bbC^{m_1} \ootimes \bbC^{m_k} \otimes \bbC^1 \ootimes \bbC^1
	\]
	be the associated graph tensor. Let $T = S \boxtimes T' \in V_0 \ootimes V_d$. Then 
	\[
	\frakg_{T} = \frakh_{T'} + \bfg_{\Sigma,\bfm} \subseteq \frakg(V_0 \vvirg V_d)
	\]
	where $\bfg_{\Sigma,\bfm}$ is the Lie algebra of the gauge subgroup $\calG_{\Sigma,\bfm}$ of $\Sigma$ and $\frakh_{T'}$ is the isotropy Lie algebra of $T'$ in $\frakg( \bbC^1, W'_1 \vvirg W'_d)$.
\end{theorem}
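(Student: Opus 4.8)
The plan is to prove the two inclusions separately, the substantial one being $\frakg_T\subseteq\frakh_{T'}+\bfg_{\Sigma,\bfm}$.

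The inclusion $\frakh_{T'}+\bfg_{\Sigma,\bfm}\subseteq\frakg_T$ is immediate: by Lemma~\ref{lemma: gauge stabilizes} the gauge group $\calG_{\Sigma,\bfm}$ fixes $S=T(\Sigma,\bfm)$ and acts trivially on all the $W'_j$-factors, hence fixes $T=S\boxtimes T'$; and any element of $\frakh_{T'}$ acts as the identity on each bond factor $\bbC^{m_j}$ and annihilates $T'$, hence annihilates $T$. I would also record here that, since $\frakgl(\bbC^1)$ is ``absorbable'' into the center, every element of $\frakh_{T'}$ admits a representative with trivial $V_0$-component, so $\frakh_{T'}$ may be viewed as a subalgebra of $\frakg(V_1\vvirg V_d)$.

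For the reverse inclusion, fix $\bfX=(X_0\vvirg X_d)\in\frakg_T$ and write (modulo the inert $\bbC^1$) $V_0=\bbC^{m_1}\ootimes\bbC^{m_k}$ and $V_j=\bbC^{m_j}\otimes W'_j$ for $1\le j\le k$, so that, with $\hat T'$ the order-$d$ tensor obtained from $T'$ by deleting the trivial factor, $T=\sum_{(b)}e_{b_1}\ootimes e_{b_k}\otimes\hat T'_{(b)}$, where $\hat T'_{(b)}$ is the copy of $\hat T'$ sitting in the coordinate subspace $U_{(b)}=(e_{b_1}\otimes W'_1)\ootimes(e_{b_k}\otimes W'_k)\otimes W'_{k+1}\ootimes W'_d$ of $V_1\ootimes V_d$. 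Since $T$ is concise (a Kronecker product of concise tensors, $T(\Sigma,\bfm)$ being built from identity matrices), the flattening $T_{\{0\}}\colon V_0^*\to V_1\ootimes V_d$ is injective with image $M=\langle\hat T'_{(b)}\rangle$, and $\bfX.T=0$ together with the Leibniz rule forces $\bar X:=\sum_{i\ge 1}\Id\ootimes X_i\ootimes\Id$ to preserve $M$ and to induce on it, up to transpose, the operator $X_0$. Choosing a functional $\Phi$ with $\Phi(\hat T')=1$ and contracting the $W'$-directions against $\Phi$ gives an explicit isomorphism $M\cong V_0$, $\hat T'_{(b)}\mapsto e_{b_1}\ootimes e_{b_k}$, under which the induced operator becomes $\kappa\Id_{V_0}+\sum_{j=1}^k\tilde B_j^{[j]}$, where $\kappa\in\bbC$ and $\tilde B_j\in\frakgl(\bbC^{m_j})$ acts on the $j$-th tensor factor of $V_0$ and is the $\Phi$-weighted average of the $\bbC^{m_j}$-components of $X_j$. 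Hence $X_0\in\sum_{j=1}^k\frakgl(\bbC^{m_j})^{[j]}$, say $X_0=\sum_j A_j^{[j]}$.

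Write $A_j=\mathring A_j+\tfrac{\Tr A_j}{m_j}\Id$ with $\mathring A_j\in\fraksl_{m_j}$ and set $c=\sum_j\Tr A_j/m_j$. Subtracting from $\bfX$ the element of $\bfg_{\Sigma,\bfm}$ with $V_0$-action $\sum_j\mathring A_j^{[j]}$, together with the element $(c\Id_{V_0},-c\Id_{V_1},0\vvirg 0)\in\frakh_{T'}$, produces $\bfY\in\frakg_T$ with trivial $V_0$-component, so $\bfY\in\frakg_T\cap\frakg(V_1\vvirg V_d)$. By Lemma~\ref{lemma: partial lie algebra} (applied with $I=\bfv(\Sigma)\smallsetminus\{0\}$) and linearity, this intersection equals $\bigcap_{(b)}\frakg_{\hat T'_{(b)}}$. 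Each $\hat T'_{(b)}$ is concise on $U_{(b)}$, so Lemma~\ref{lemma: isotropy of non-concise tensor}, together with the dimension identity $\dim\frakh_{T'}=\dim\frakh_{\hat T'}$, yields $\frakg_{\hat T'_{(b)}}=\frakh_{T'}\oplus\frakp^{(b)}$ with $\frakp^{(b)}$ the subalgebra of operators vanishing on $U_{(b)}$. Finally, I would show $\bigcap_{(b)}\frakg_{\hat T'_{(b)}}=\frakh_{T'}$: writing a common element as $\bfZ_h+\bfZ_p$ with $\bfZ_h\in\frakh_{T'}$ and $\bfZ_p\in\frakp^{(b_0)}$ for a fixed $(b_0)$, the condition that $\bfZ_p$ annihilate $\hat T'_{(b)}$ for every $(b)$ differing from $(b_0)$ in a single coordinate $j$ forces (after flattening at the $j$-th factor and using surjectivity of the contraction map of the concise $\hat T'$) the $j$-th component of $\bfZ_p$ to vanish on $e_{b_j}\otimes W'_j$ for every $b_j$, hence to be $0$; so $\bfZ_p=0$, $\bfY\in\frakh_{T'}$, and $\bfX\in\frakh_{T'}+\bfg_{\Sigma,\bfm}$. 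This last identification is the hard part: because the complements $\frakp^{(b)}$ depend on the corner $(b)$, one cannot feed the subspaces $\frakg_{\hat T'_{(b)}}$ directly into Lemma~\ref{lemma: intersection vs direct sum} with a single ambient ``$B$'', and the single-coordinate perturbation argument is what replaces it --- precisely the place where conciseness of $T'$ enters essentially. Throughout, one must keep careful track of the central subalgebra $\frakz$ when passing between $\bigoplus_i\frakgl(V_i)$ and $\frakg(V_0\vvirg V_d)$, notably in extracting $X_0$ ``modulo scalars'' and in absorbing the residual scalar $c$ into a gauge element plus an element of $\frakh_{T'}$.
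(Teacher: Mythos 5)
Your proof is correct, and its skeleton matches the paper's: show that $X_0$ has no cross-terms across the bond factors, absorb it into the gauge algebra, reduce via Lemma \ref{lemma: partial lie algebra} to the intersection $\bigcap_{(b)}\frakg_{\hat{T}'_{(b)}}$ over generators of the image of the $0$-th flattening, and identify that intersection with $\frakh_{T'}$. You diverge at the two technical sub-steps, in both cases with valid alternatives. For the first, the paper splits $X_0=X_0'+X_0''$ according to whether the row and column multi-indices differ in at least two entries, observes that only $X_0'$ contributes to those components of $\bfX.T$, and kills $X_0'$ by conciseness via Lemma \ref{lemma: partial lie algebra}; your contraction against a functional $\Phi$ with $\Phi(\hat{T}')=1$ instead computes the operator that $\sum_{i\geq 1}X_i$ induces on $\Im T_{\{0\}}$ and reads off directly that $X_0$ lies in $\bbC\Id+\sum_j\frakgl(\bbC^{m_j})^{[j]}$ --- arguably cleaner, since it packages the conciseness input into the single fact that $T_{\{0\}}$ is injective. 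For the second, you correctly identify the obstruction to applying Lemma \ref{lemma: intersection vs direct sum} naively (the complements $\frakp^{(b)}$ vary with the corner $(b)$); the paper resolves this by rewriting each $\frakh_{\hat{T}'_{(b)}}$ as the $(b)$-independent copy $\Id_{U_1\ootimes U_k}\otimes\frakg_{T'}$ modulo $\frakp^{(b)}$, after which Lemma \ref{lemma: intersection vs direct sum} applies together with $\bigcap_{(b)}\frakp^{(b)}=0$, whereas you argue by hand that the $\frakp^{(b_0)}$-component of a common element vanishes, using single-coordinate perturbations of the corner and conciseness of $\hat{T}'$. Both routes are sound. Two cosmetic points to tighten: the operator induced on $\Im T_{\{0\}}$ is $-X_0^T$ (sign as well as transpose), and your decomposition $\frakg_{\hat{T}'_{(b)}}=\frakh_{T'}\oplus\frakp^{(b)}$ with the \emph{global} copy of $\frakh_{T'}$ deserves the one-line remark that $\Id_{U_j}\otimes\Theta_j$ differs from $\Id_{\langle e_{b_j}\rangle}\otimes\Theta_j$ by an element of $\frakp^{(b)}$, rather than only the dimension count you cite.
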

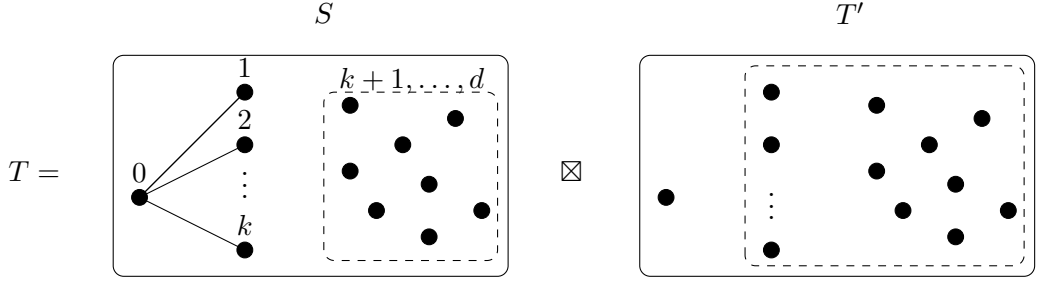
\begin{figure}
	\usetikzlibrary{arrows}
	\usetikzlibrary{matrix}
	\usetikzlibrary{positioning}
	\usetikzlibrary{snakes}
	\usetikzlibrary{calc}
	\begin{tikzpicture}[scale=.7]
	{
		\draw[fill=black] (0,0) circle (.15cm);
		\draw[fill=black] (2,2) circle (.15cm);
		\draw[fill=black] (2,1) circle (.15cm);
		\draw[fill=black] (2,-1) circle (.15cm);
		\draw (2,.4) node {$\vdots$};
		\draw (0,0)--(2,2);
		\draw (0,0)--(2,1);
		\draw (0,0)--(2,-1);
		\draw (0,0)--(2,2);
		\draw[] (0,0.1) node[above] {$0$};
		\draw[] (2,2.1) node[above] {$1$};
		\draw[] (2,1.1) node[above] {$2$};
		\draw[] (2,-0.9) node[above] {$k$};
		{
			\tikzset{shift={(-1,0)}}
			\draw[fill=black] (5,1.75) circle (.15cm);
			\draw[fill=black] (7,1.5) circle (.15cm);
			\draw[fill=black] (6,1) circle (.15cm);
			\draw[fill=black] (5,0.5) circle (.15cm);
			\draw[fill=black] (6.5,0.25) circle (.15cm);
			\draw[fill=black] (5.5,-0.25) circle (.15cm);
			\draw[fill=black] (7.5,-0.25) circle (.15cm);
			\draw[fill=black] (6.5,-0.75) circle (.15cm);
			\draw[rounded corners] (.5, -1.5) rectangle (8,2.7) {};
			\draw[rounded corners,dashed] (4.5, -1.2) rectangle (7.8,2) {};
			\draw[] (4.6,2.65) node[below right] {$k+1, \dots , d$};
		}
	}
	{
		\tikzset{shift={(10,0)}}
		\draw[fill=black] (0,0) circle (.15cm);
		\draw[fill=black] (2,2) circle (.15cm);
		\draw[fill=black] (2,1) circle (.15cm);
		\draw[fill=black] (2,-1) circle (.15cm);
		\draw (2,0) node {$\vdots$};
		{
			\tikzset{shift={(-1,0)}}
			\draw[fill=black] (5,1.75) circle (.15cm);
			\draw[fill=black] (7,1.5) circle (.15cm);
			\draw[fill=black] (6,1) circle (.15cm);
			\draw[fill=black] (5,0.5) circle (.15cm);
			\draw[fill=black] (6.5,0.25) circle (.15cm);
			\draw[fill=black] (5.5,-0.25) circle (.15cm);
			\draw[fill=black] (7.5,-0.25) circle (.15cm);
			\draw[fill=black] (6.5,-0.75) circle (.15cm);
			\draw[rounded corners] (.5, -1.5) rectangle (8,2.7) {};
			\draw[rounded corners,dashed] (2.5, -1.3) rectangle (7.8,2.5) {};
		}
	}
	\draw (3.5,3.5) node { $S$};
	\draw (13.5,3.5) node {$T'$};
	\draw (-2,.5) node { $T = $};
	\draw (8.2,.5) node { $\boxtimes$};
	\end{tikzpicture}
	\caption{The tensor $T$ in Theorem \ref{thm: isotropy group when kron with star}: the Kronecker product of a star tensor $S$ centered at vertex $0$ and a tensor $T'$ whose $0$-th factor is $1$-dimensional.}
\end{figure}

\begin{proof}
	The inclusion $\frakh_{T'} + \bfg_{\Sigma,\bfm} \subseteq \frakg_{T}$ is immediate. 
	
	For $j = 1 \vvirg k$, write $V_{j} = U_j \otimes W'_j$ where $U_j =\bbC^{m_j}$. Write $V_0 = \bbC^1 \otimes U_1^* \ootimes U_k^*$. For $j = 1\vvirg k$, let $\{ u^{j}_{i_j} : i_j = 1 \vvirg m_j\}$ be a basis of the $U_j$; let $\{ u_{i_1 \vvirg i_k}^{0} : i_j = 1 \vvirg m_j\}$ be the basis of $V_0 \simeq U_1^* \ootimes U_k^*$ dual to the induced basis $\{u^1_{i_1} \ootimes u^{k}_{i_k} : i_j = 1 \vvirg m_j\}$ of $U_1 \ootimes U_k$. Therefore 
	\[
	S = \sum_{i_1 \vvirg i_k} u^{(0)}_{i_1 \vvirg i_k} \otimes u^{(1)}_{i_1} \ootimes u^{(k)}_{i_k} \ \otimes u^{k+1}_0 \ootimes u^d_0 
	\]
	where for $j = k+1 \vvirg d$, $u^j_0$ is a generator of the corresponding $\bbC^1$ factor.
	
	Let $\bfX = (X_0 \vvirg X_d) \in \frakg(V_0 \vvirg V_d)$. Suppose $\bfX \in \frakg_T$, that is $\bfX.T = 0$. By Leibniz's rule $\bfX.T = \sum_{j=0} ^d X_j . T = 0$. 
	
	Write $X_0 = ({(x^0)}^{i'_1 \vvirg i'_k}_{i_1 \vvirg i_k})$ in the chosen basis: we have
	\[
	X_{0}.T = (X_{0} .S) \boxtimes T' = \left[\sum_{\substack{i_1 \vvirg i_k \\ i'_1 \vvirg i'_k}} {(x^{0})}^{i'_1 \vvirg i'_k}_{i_1 \vvirg i_k}u^{0}_{i'_1 \vvirg i'_k} \otimes  u^{1}_{i_1} \ootimes u^{k}_{i_k} \right] \boxtimes T'  .
	\]
	For $j = 1\vvirg k$, write $X_{j} \in \frakgl(V_{j})$ as $X_j = \sum \Delta_j^{(\rho)} \boxtimes \Theta_j^{(\rho)}$ where $\Delta_j^{(\rho)} = ((\delta^{\rho,j})^{i_j}_{i'_j}) \in \frakgl(U_j)$ and $ \Theta_j^{(\rho)} \in \frakgl(W'_j)$; then 
	\begin{align*}
		X_{j}.T &=  \sum_\rho \left[  \Delta_j^{(\rho)} .S \right] \boxtimes \left[ \Theta_j^{(\rho)} .T' \right] = \\ &=\sum_\rho \left[ \sum_{i_1 \vvirg i_k, i'_j} u^{0}_{i_1 \vvirg i_k} \otimes  u^{1}_{i_1} \ootimes {(\delta^{\rho,j})^{i_j}_{i'_j}}u^{j}_{i_j} \ootimes u^{k}_{i_k} \right] \boxtimes \left[\Theta_j^{(\rho)} . T'\right]. 
	\end{align*}
	If $j > k$, then $V_j = \bbC^1 \otimes W'_j$ and we have $X_{j}.T = S \boxtimes X_j.T'$.
	
	For indices $i_1^* \vvirg i_k^* , \ti_1 \vvirg \ti_k$, write $\bfX. T = u^0_{i_1^* \vvirg i_k^*} \otimes u^1_{\ti_1} \ootimes u^k_{\ti_k} \boxtimes T^{i_1^* \vvirg i_k^*}_{ \ti_1 \vvirg \ti_k}$ for tensors $T^{i_1^* \vvirg i_k^*}_{ \ti_1 \vvirg \ti_k} \in W'_1 \ootimes W'_d$. Since $u^0_{i_1^* \vvirg i_k^*} \otimes u^1_{\ti_1} \ootimes u^k_{\ti_k}$ are linearly independent, the condition $\bfX.T = 0$ is equivalent to $T^{i_1^* \vvirg i_k^*}_{ \ti_1 \vvirg \ti_k} = 0$ for every $i_1^* \vvirg i_k^* , \ti_1 \vvirg \ti_k$.
	
	Note that if $(i_1^* \vvirg i_k^*)$ and $( \ti_1 \vvirg \ti_k)$ differ in at least two entries, then $T^{i_1^* \vvirg i_k^*}_{ \ti_1 \vvirg \ti_k}$ only depends on $X_0$: indeed, the summands $X_j.T$ for $j \neq 0$ only give rise to terms where $(i_1^* \vvirg i_k^*)$ and $( \ti_1 \vvirg \ti_k)$ differ in at most one entry. Write $X_0 = X_0' + X_0''$ where $X_0'$ is the component where $(i_1^* \vvirg i_k^*)$ and $( \ti_1 \vvirg \ti_k)$ differ in at least two entries and $X_0''$ is the complementary component. In particular, $X_0'$ is the only component of $\bfX$ which contributes to $T^{i_1^* \vvirg i_k^*}_{ \ti_1 \vvirg \ti_k}$ when the two sets of indices differ in at least two entries. By linearity, the discussion above shows $X_0'.T =0$. Since $T$ is concise, Lemma \ref{lemma: partial lie algebra} implies that $X_0' = 0$. This shows 
	\[
	X_{0} = Y_1 \otimes \id_{U_2^* \ootimes U_k^*} + \cdots + \id_{U_1^* \ootimes U_{k-1}^*} \otimes Y_k
	\]
	with $Y_j \in \frakgl(U_j^*)$. Hence, we may renormalize $\bfX$ using $\bfg_{\Gamma,\bfm}$ and obtain $X_{0} = 0$. In particular, we reduced the analysis to $\bfX \in \frakg( V_j : j \neq 0)$.
	
	Consider $\bfX \in \frakg_T \cap \frakg( V_j : j \neq 0)$. By Lemma \ref{lemma: partial lie algebra}, we have 
	\begin{equation}\label{eqn: intersection all but V0}
	\frakg_T \cap \frakg( V_j : j \neq 0) = \bigcap_{R \in \Im (\Flat(T))}  \frakg_R ,
	\end{equation}
	where $\Flat(T) : V_0^* \to V_1 \ootimes V_d$ is the $0$-th flattening map. For indices $(i_1 \vvirg i_k)$, write $T'(i_1 \vvirg i_k) = \Flat(T) ( u_{i_1 \vvirg i_k}^{(0)})=u^{(1)} _{i_1} \ootimes u^{(k)}_{i_k} \boxtimes T' $. The intersection in \eqref{eqn: intersection all but V0} can be reduced to a set of generators of $\Im(\Flat(T))$; therefore we obtain
	\[
	\frakg_T \cap \frakg( V_j : j \neq 0)  = \bigcap_{i_1 \vvirg i_k} \frakg_{T'(i_1 \vvirg i_k)}.
	\]
	Since $T'(i_1 \vvirg i_k)$ is not concise in $V_1 \ootimes V_d$, we have $\frakg_{T'(i_1 \vvirg i_k)} = \frakh_{T'(i_1 \vvirg i_k)} \oplus \frakp_{i_1 \vvirg i_k}$, where $\frakh_{T'(i_1 \vvirg i_k)}$ is the annihilator of $T'(i_1 \vvirg i_k)$ in $\frakgl(\langle u^{1} _{i_1} \rangle \otimes W'_1) \ooplus \frakgl(\langle u^{k} _{i_k} \rangle \otimes W'_k) \oplus \frakgl(V_{k+1}) \ooplus \frakgl(V_d)$ and $\frakp_{i_1 \vvirg i_k}$ is the parabolic subspace which annihilates $(u^{1}_{i_1}\otimes W'_1) \ootimes (u^{k} _{i_k} \otimes W'_k) \otimes V_{k+1} \ootimes V_d$, that is 
	\begin{equation*}\label{ort}
		\begin{aligned}
			\frakp_{i_1 \vvirg i_k} =& \left[(\langle u^{1} _{i_1} \rangle^\perp \otimes {W'_1}^{*} )\otimes (U_1 \otimes W'_1)\right] \ooplus \left[ ( \langle u^{k} _{i_k} \rangle^\perp \otimes {W'_k}^{*} )\otimes (U_k \otimes W'_k)\right].
		\end{aligned}
	\end{equation*}
	Since $T'(i_1 \vvirg i_k) =u^{1} _{i_1} \ootimes u^{k}_{i_k} \boxtimes T'$, we have
	\[
	\frakh_{T'(i_1 \vvirg i_k) } = \Id_{\langle u^{1}_{i_1}\rangle \ootimes \langle u^{k}_{i_k}\rangle} \otimes \frakg_{T'}, 
	\]
	regarded as a subalgebra acting on the subspace $u^{1}_{i_1} \ootimes u^{k}_{i_k} \otimes W'_1 \ootimes W'_k \otimes V_{k+1} \ootimes V_d$. 
	
	Observe that, as a subspace of $\End(V_1 \ootimes V_d)$, we have 
	\[
	\frakg_{T'(i_1 \vvirg i_k)} = \left[ \Id_{\langle u^{1}_{i_1}\rangle \ootimes \langle u^{k}_{i_k}\rangle} \otimes \frakg_{T'} \right] \oplus \frakp_{i_1 \vvirg i_k} = \Bigl[ \Id_{U_1 \ootimes U_k} \otimes \frakg_{T'} \Bigr] \oplus \frakp_{i_1 \vvirg i_k}.
	\]
	This follows directly from Leibniz rule and the fact that, for every $i_1 \vvirg i_k$, $\Id_{U_1\ootimes U_k} = \Id_{ \langle u^1_{i_1} \ootimes u^k_{i_k} \rangle} + P_{i_1 \vvirg i_k}$ where $P_{i_1 \vvirg i_k} \in \frakp_{i_1 \vvirg i_k}$. We deduce
	\[
	\frakg_T \cap \frakg( V_j : j \neq 0) = \bigcap_{i_1 \vvirg i_k} \left[ \left( \Id_{U_1 \ootimes U_k} \otimes \frakg_{T'} \right) \oplus \frakp_{i_1 \vvirg i_k} \right]
	\]
	and by Lemma \ref{lemma: intersection vs direct sum}, we have $ \frakg_T \cap \frakg( V_j : j \neq 0)   = \left( \Id_{U_1 \ootimes U_k} \otimes \frakg_{T'}\right)  \oplus \bigcap_{i_1 \vvirg i_k} \frakp_{i_1 \vvirg i_k} = \Id_{U_1 \ootimes U_k} \otimes \frakg_{T'}$ because $\bigcap_{i_1 \vvirg i_k} \frakp_{i_1 \vvirg i_k}  = 0$.
	
	This concludes the proof, as we showed 
	\[
	\frakg_T = \frakg_T + \bfg_{\Sigma,\bfm} = \frakg_T \cap \frakg( V_j : j \neq 0)  + \bfg_{\Sigma,\bfm} = \frakg_{T'} + \bfg_{\Sigma,\bfm}. \qedhere 
	\]
\end{proof}

Applying Theorem \ref{thm: isotropy group when kron with star} to graph tensors, we deduce the following result:
\begin{corollary}\label{corol: lie algebra for graph tensors}
	Let $\Gamma = (\bfv(\Gamma), \bfe(\Gamma))$ be a graph with $d$ vertices and let $\bfm = (m_e : e \in \bfe(\Gamma))$ be a set of bond dimensions on $\Gamma$. Let $T:=T(\Gamma,\bfm) \in \bigotimes_{j =1}^d W_j$ be the associated graph tensor. Then the isotropy Lie algebra of $T$ coincides with Lie algebra of the gauge subgroup of $\Gamma$; in symbols
	\[
	\frakg_T = \bfg_{\Gamma,\bfm}.
	\]
\end{corollary}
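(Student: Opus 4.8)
The plan is to induct on the number of edges $|\bfe(\Gamma)|$, using Theorem \ref{thm: isotropy group when kron with star} at each step to strip off all the edges incident to a single vertex. The base case $|\bfe(\Gamma)| = 0$ is immediate: then $T(\Gamma,\bfm)$ is a nonzero tensor in $\bbC^1 \ootimes \bbC^1$, so its isotropy Lie algebra in $\frakg(\bbC^1 \vvirg \bbC^1)$ is zero, and the gauge subalgebra of an edgeless graph is likewise zero.

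For the inductive step, pick a vertex $0 \in \bfv(\Gamma)$ of positive degree $k$ and relabel its neighbours as $1 \vvirg k$. Let $\Sigma$ be the star on $\bfv(\Gamma)$ with edges $\{0,1\} \vvirg \{0,k\}$, let $\Gamma'$ be the graph on $\bfv(\Gamma)$ obtained from $\Gamma$ by deleting every edge incident to $0$, and equip both with the bond dimensions inherited from $\bfm$, denoted $\bfm_\Sigma$ and $\bfm'$ respectively. Directly from the construction in \eqref{T:Gamma:m} one has $T(\Gamma,\bfm) = T(\Sigma,\bfm_\Sigma) \boxtimes T(\Gamma',\bfm')$, and in $T(\Gamma',\bfm')$ the local factor at vertex $0$ is $\bbC^1$, since $0$ is isolated in $\Gamma'$. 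I would also record that every graph tensor is concise in its natural ambient $\bigotimes_v W_v$: indeed $T(\Lambda,\bfm_\Lambda)$ is the Kronecker product of the tensors $\bfu_{(e)}(m_e) = \Id_{m_e}$, each concise, and a Kronecker product of concise tensors is concise. Thus $T(\Gamma',\bfm')$ satisfies the hypotheses of Theorem \ref{thm: isotropy group when kron with star}.

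Applying Theorem \ref{thm: isotropy group when kron with star} with $S = T(\Sigma,\bfm_\Sigma)$ and $T' = T(\Gamma',\bfm')$ gives $\frakg_{T(\Gamma,\bfm)} = \frakh_{T'} + \bfg_{\Sigma,\bfm_\Sigma}$, where $\frakh_{T'}$ is the isotropy Lie algebra of $T'$ inside $\frakg(\bbC^1, W'_1 \vvirg W'_d)$. Since $T'$ is concise there, $\frakh_{T'}$ is exactly the isotropy Lie algebra of the graph tensor $T(\Gamma',\bfm')$ in its natural ambient, and as $\Gamma'$ has $|\bfe(\Gamma)| - k < |\bfe(\Gamma)|$ edges, the inductive hypothesis gives $\frakh_{T'} = \bfg_{\Gamma',\bfm'}$. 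Hence $\frakg_{T(\Gamma,\bfm)} = \bfg_{\Gamma',\bfm'} + \bfg_{\Sigma,\bfm_\Sigma}$. To conclude, $\bfe(\Gamma)$ is the disjoint union of the edges incident to $0$ (the edges of $\Sigma$) and the remaining edges (the edges of $\Gamma'$); since by construction (Section \ref{sec: gauge section}) the gauge subalgebra of a graph is the sum over its edges of the gauge subalgebras attached to the individual edges, and these inclusions are compatible with the factorisations $W_j = U_{e_j} \otimes W'_j$ at the neighbours $j = 1 \vvirg k$, one obtains $\bfg_{\Gamma',\bfm'} + \bfg_{\Sigma,\bfm_\Sigma} = \bfg_{\Gamma,\bfm}$. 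This closes the induction.

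Given Theorem \ref{thm: isotropy group when kron with star}, the argument is essentially formal; the only point requiring attention is the bookkeeping of the local factors $W_v = \bigotimes_{e \ni v} \bbC^{m_e}$ under the splitting $W_j = U_{e_j} \otimes W'_j$ at the neighbours of the chosen vertex, together with the matching verification that the edge-indexed description of the gauge subalgebra from Section \ref{sec: gauge section} is consistent with the partition of $\bfe(\Gamma)$ into the edges of $\Sigma$ and the edges of $\Gamma'$. I do not expect any substantive obstacle beyond this bookkeeping.
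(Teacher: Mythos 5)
Your proof is correct and follows essentially the same route as the paper: strip off the star $\Sigma$ at a chosen vertex, write $T(\Gamma,\bfm) = T(\Sigma,\bfm_\Sigma) \boxtimes T(\Gamma',\bfm')$, apply Theorem \ref{thm: isotropy group when kron with star}, and conclude by induction together with the edge-wise decomposition of the gauge subalgebra. The only (cosmetic) differences are that you induct on the number of edges rather than vertices, and that you explicitly record the conciseness of graph tensors needed to invoke the theorem, a point the paper leaves implicit.
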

\begin{proof}
	We proceed by induction on the number of vertices $d$. If $d= 1$, the statement is clear as $T$ is a single vector, with trivial isotropy Lie algebra.
	
	Suppose $\Gamma$ is a graph with $d+1$ vertices and write $\bfv(\Gamma) = \{ 0 \vvirg d\}$. Let $\Sigma$ be the subgraph of $\Gamma$ given by the edges incident to the vertex $0$. In other words $\bfv(\Sigma) = \{ 0 \vvirg d\}$, $\bfe(\Sigma) = \{ e \in \bfe(\Gamma): 0 \in e\}$. Let $\Gamma'$ be the graph with $\bfv(\Gamma') = \{ 0 \vvirg d\}$ and $\bfe(\Gamma') = \bfe(\Gamma) \setminus \bfe(\Sigma)$ and let $\bfm'' , \bfm'$ be the corresponding subsets of the collection of bond dimensions $\bfm$. Write $S = T(\Sigma,\bfm'')$ and $T' = T(\Gamma',\bfm')$; then
	\[
	T = S \boxtimes T'.
	\]
	By the induction hypothesis, $\frakg_{T'} = \bfg_{\Gamma', \bfm'}$ and $\frakg_S = \bfg_{\Sigma,\bfm''}$. By Theorem \ref{thm: isotropy group when kron with star}
	\[
	\frakg_T = \frakg_{T'} + \bfg_{\Sigma,\bfm''} = \bfg_{\Gamma', \bfm'} + \bfg_{\Sigma, \bfm''} = \bfg_{\Gamma,\bfm},
	\]
	and this concludes the proof.
\end{proof}

\subsection{Additional results on isotropy groups}

In this section, we prove a generalization of \cite[Thm. 4.1(iii)]{ConGesLanVenWan:GeometryStrassenAsyRankConj}; it does not have a direct application in this work but it is of interest on its own right.

Given two spaces $V,W$, there is a natural embedding $\GL(V) \to \GL(V \otimes W)$ defined by $g \mapsto g \otimes \Id_W$; correspondingly the Lie algebra $\frakgl(V)$ can be regarded as a subalgebra of $\frakgl(V \otimes W)$. In particular, if $\frakg \subseteq \frakgl(V)$ is a subalgebra, then $\frakg$ is naturally identified with a subalgebra of $\frakgl(V \otimes W)$. 

\begin{proposition}\label{prop: stable lie algebra when tensoring with trivial}
	Let $T \in V_1 \ootimes V_d$ and $S \in W_1 \ootimes W_d$ be concise tensors. Assume $\frakg_T  = \{0\} \subseteq \frakg(V_1 \vvirg V_d)$ . Then
	\[
	\frakg_{T \boxtimes S} =   \frakg_S 
	\]
	regarded as a subalgebra of $\frakg(V_1 \otimes W_1 \vvirg V_d \otimes W_d)$.
\end{proposition}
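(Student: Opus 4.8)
The plan is to establish the two inclusions separately. The inclusion $\frakg_S\subseteq\frakg_{T\boxtimes S}$ is immediate: if $(Y_1\vvirg Y_d)$ with $Y_i\in\frakgl(W_i)$ represents a class of $\frakg_S$, so that $\sum_i Y_i.S=0$ (where $Y_i.S$ denotes the action of $Y_i$ on the $i$th tensor factor of $S$), then each $\Id_{V_i}\otimes Y_i$ only alters the $W_i$-factor of $T\boxtimes S$, so by Leibniz's rule the tuple $(\Id_{V_i}\otimes Y_i)_i$ acts on $T\boxtimes S$ by $\sum_i T\boxtimes(Y_i.S)=T\boxtimes 0=0$. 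Under the natural embedding $\frakg(W_1\vvirg W_d)\hookrightarrow\frakg(V_1\otimes W_1\vvirg V_d\otimes W_d)$, $Y_i\mapsto\Id_{V_i}\otimes Y_i$, discussed before the statement, this says exactly $\frakg_S\subseteq\frakg_{T\boxtimes S}$.

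For the reverse inclusion I would argue as follows. Identify $T\boxtimes S$ with $T\otimes S\in\calV\otimes\calW$, where $\calV=V_1\ootimes V_d$, $\calW=W_1\ootimes W_d$, and fix $\bfZ=(Z_1\vvirg Z_d)\in\frakg_{T\boxtimes S}$. Using $\End(V_i\otimes W_i)=\End(V_i)\otimes\End(W_i)$ and the splitting $\End(V_i)=\bbC\,\Id_{V_i}\oplus\fraksl(V_i)$, write $Z_i=\Id_{V_i}\otimes Y_i+C_i$ with $Y_i\in\End(W_i)$ and $C_i=\sum_l p^{(i)}_l\otimes q^{(i)}_l$, where $\{p^{(i)}_l\}_l$ is a fixed basis of $\fraksl(V_i)$ and $q^{(i)}_l\in\End(W_i)$. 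Expanding the equation $\bfZ.(T\boxtimes S)=0$ by Leibniz's rule gives, in $\calV\otimes\calW$,
\[
0=T\otimes\Bigl(\sum_i Y_i.S\Bigr)+\sum_{i,l}\bigl(p^{(i)}_l.T\bigr)\otimes\bigl(q^{(i)}_l.S\bigr).
\]
Contracting the $\calW$-factor against an arbitrary $\phi\in\calW^*$ yields, for every $\phi$, an identity in $\calV$ of the shape $\bfP_\phi.T=\mu(\phi)\,T$, where $\mu(\phi)=\langle\phi,\sum_i Y_i.S\rangle$ and $\bfP_\phi=\bigl(-\sum_l\langle\phi,q^{(i)}_l.S\rangle\,p^{(i)}_l\bigr)_i$ has every component in $\fraksl(V_i)$.

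The key point, and the only place the hypothesis $\frakg_T=\{0\}$ is used, is the following: if $\bfP=(P_1\vvirg P_d)$ with $P_i\in\frakgl(V_i)$ satisfies $\bfP.T=\lambda T$ for some $\lambda\in\bbC$, then every $P_i$ is a scalar operator, and if moreover each $P_i$ is traceless then $\lambda=0$ and $\bfP=0$. I would prove this by observing that $\bfP.T=\lambda T$ forces $\bigl(\bfP-(\lambda\Id_{V_1},0\vvirg 0)\bigr).T=0$, hence $\bfP-(\lambda\Id_{V_1},0\vvirg 0)\in\frakz_{V_1\ootimes V_d}$ because $\frakg_T=\{0\}$; writing out this membership componentwise and imposing tracelessness of the $P_i$ (and $\dim V_i\geq 1$, valid as $T$ is a nonzero concise tensor) pins down $\lambda=0$ and all $P_i=0$. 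Applying this with $\bfP=\bfP_\phi$, $\lambda=\mu(\phi)$ forces $\mu(\phi)=0$ and $\bfP_\phi=0$ for all $\phi$. From $\mu\equiv 0$ we get $\sum_i Y_i.S=0$, i.e.\ $(Y_1\vvirg Y_d)$ represents a class of $\frakg_S$. From $\bfP_\phi=0$ for all $\phi$, linear independence of $\{p^{(i)}_l\}_l$ forces $\langle\phi,q^{(i)}_l.S\rangle=0$ for every $\phi$, hence $q^{(i)}_l.S=0$; since $S$ is concise, the last assertion of Lemma \ref{lemma: partial lie algebra} (applied to $S$, with $I=\{i\}$) then gives $q^{(i)}_l=0$, so $C_i=0$ for every $i$. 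Thus $\bfZ=(\Id_{V_i}\otimes Y_i)_i$ with $(Y_1\vvirg Y_d)$ representing a class of $\frakg_S$, which is precisely the statement that $\bfZ$ lies in $\frakg_S\subseteq\frakg(V_1\otimes W_1\vvirg V_d\otimes W_d)$; together with the first inclusion this yields $\frakg_{T\boxtimes S}=\frakg_S$. The step I expect to require the most care is controlling $\mu(\phi)$: a priori $\bfZ$ could scale $T$ nontrivially, and excluding this is exactly the content of the traceless-scalar argument — concretely, the only infinitesimal ``projective'' symmetry of a concise $T$ with $\frakg_T=\{0\}$ is the overall rescaling, which is not realized by a tuple of traceless operators.
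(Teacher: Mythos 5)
Your proof is correct and follows essentially the same route as the paper's: expand $\bfX.(T\boxtimes S)=0$ by Leibniz, contract the $W$-side against functionals to produce elements that must lie in $\frakg_T=\{0\}$, and then use conciseness of $S$ (via Lemma \ref{lemma: partial lie algebra}) to kill the remaining pieces and conclude $X_k=\Id_{V_k}\otimes Z_k$ with $\bfZ\in\frakg_S$. The only difference is presentational: you split $\End(V_i)=\bbC\,\Id_{V_i}\oplus\fraksl(V_i)$ first, which makes the potential scaling term $\mu(\phi)T$ explicit and disposes of it by tracelessness, whereas the paper keeps the identity component inside its contracted operators $Y_k(j_1\vvirg j_d)$ and absorbs the same issue into the conclusion that each $Y_k$ is a scalar multiple of the identity.
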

\begin{proof}
	The inclusion 
	\[
	\frakg_S \subseteq \frakg_{T \boxtimes S}
	\]
	is immediate from the definition of Kronecker product.
	
	Let $\bfX \in   \frakg_{T \boxtimes S}$. Write $\bfX = (X_1 \vvirg X_d)$ with $X_k \in \frakgl(V_k \otimes W_k)$. Our goal is to show that $X_k = \Id_{V_k} \otimes Z_k$ for some $Z_k \in \frakgl(W_k)$ with $\bfZ := (Z_1 \vvirg Z_d) \in \frakg_S$. 
	
	For every $p = 1 \vvirg d$, fix bases $\{ v^{p}_j : j = 1 \vvirg \dim V_p\}$ of $V_p$ and similarly for $W_p$. Write
	\begin{align*}
		T &= \sum T^{i_1 \vvirg i_d} v^{1}_{i_1} \ootimes v^{d}_{i_d}, \\
		S &= \sum S^{j_1 \vvirg j_d} w^{1}_{j_1} \ootimes w^{d}_{j_d}. 
	\end{align*}
	For $k = 1 \vvirg d$, write $(x_k)^{ij}_{i'j'}$ for the entries of $X_k$ with respect to the basis $v^{k}_i \otimes w^k_j$. By Leibniz's rule, the condition $\bfX. (T\boxtimes S) = 0$ is equivalent to
	\begin{equation}\label{eqn: linear system for T kron S}
	\sum_{k=1}^d ({x_k})^{i_kj_k}_{i_k'j_k'} T^{i_1 \vvirg i_k' \vvirg i_d}S^{j_1\vvirg j_k' \vvirg j_d} = 0 \quad \text{for every $i_1 \vvirg i_d, j_1 \vvirg j_d$},
	\end{equation}
	where we use the summation convention that repeated upper and lower indices are to be summed over their range.
	
	For every $j_1 \vvirg j_d$, and for every $k = 1 \vvirg d$, define $Y_k (j_1 \vvirg j_d) \in \frakgl(V_k)$ by
	\[
	(y_k (j_1 \vvirg j_d))^{i_k}_{i_k'} =  ({x_k})^{i_kj_k}_{i_k'j_k'} S^{j_1 \vvirg j_k' \vvirg j_d}.
	\]
	Regard $\bfY(j_1 \vvirg j_d) = (Y_1 (j_1 \vvirg j_d) \vvirg Y_d(j_1 \vvirg j_d))$ as an element of $\frakg(V_1 \vvirg V_d)$. From \eqref{eqn: linear system for T kron S}, we deduce that $\bfY(j_1 \vvirg j_d)$ satisfies $\bfY.T = 0$ and therefore $\bfY \in \frakg_T$. From the hypothesis $\frakg_T = \{0\}$ and therefore, for every $k$, there exists $\lambda_k(j_1 \vvirg j_d)$ such that $Y_k(j_1\vvirg j_d) = \lambda_k (j_1 \vvirg j_d) \Id_{V_k}$  and $\sum_k \lambda_k(j_1 \vvirg j_d) = 0$.
	
	Since $Y_k(j_1\vvirg j_d)$ is a multiple of the identity, we have
	\begin{equation*}\label{eqn: conditions on X for Y identity}
		\begin{aligned}
			0 &= ( y_k(j_1\vvirg j_d) )^{i_k}_{i_k'} = (x_k)^{i_kj_k}_{i_k'j_k'} S^{j_1 \vvirg j_k' \vvirg j_d} \quad \text{for $i_k \neq i_k'$},\\
			0 &= ( y_k(j_1\vvirg j_d) )^{i_k}_{i_k} - ( y_k(j_1\vvirg j_d) )^{1}_{1} = [ (x_k)^{i_kj_k}_{i_kj_k'} - (x_k)^{1j_k}_{1j_k'}]  S^{j_1 \vvirg j_k' \vvirg j_d} .
		\end{aligned}
	\end{equation*}
	In other words, if $i_k \neq i_k'$, setting $Z_k (i_k,i'_k) \in \frakgl(W_k)$ to be defined by $(z_k(i_k,i'_k) )^{j_k}_{j_k'} = (x_k)^{i_kj_k}_{i_k'j_k'}$, we have $Z_k (i_k,i'_k). S = 0$. This means that $Z_k (i_k,i'_k) \in \frakg_S \cap \frakgl(W_k)$: since $S$ is concise, Lemma \ref{lemma: partial lie algebra} implies $Z_k(i_k,i'_k) = 0$. This shows that $ (x_k)^{i_kj_k}_{i_k'j_k'} = 0$ whenever $i_k \neq i_k'$. Similarly, if $i_k \geq 2$, setting $(z_k(i_k))^{j_k}_{j_k'} = (x_k)^{i_kj_k}_{i_kj_k'} - (x_k)^{1j_k}_{1j_k'}$, we have $Z_k(i_k).S = 0$, hence $Z_k(i_k) = 0$ and therefore $(x_k)^{i_kj_k}_{i_kj_k'} =  (x_k)^{1j_k}_{1j_k'}$ for every $i_k$.
	
	We deduce that $X_k = \Id_{V_k} \otimes Z_k$ for some $Z_k \in \frakgl(W_k)$. Now, let $\bfZ = (Z_1 \vvirg Z_k)$. We conclude
	\[
	0= \bfX.(T \boxtimes S) = \bfZ.(T\boxtimes S) = T \boxtimes \bfZ.S 
	\]
	and therefore $\bfZ \in \frakg_S$. This concludes the proof.
\end{proof}

\section{Dimension of Tensor Network varieties} \label{section: dimension}

We provide an upper bound on $\dim \calTNS^\Gamma_{\bfm,\bfn}$ for every $\bfm$ and $\bfn$. 

First, we give a definition following \cite{LanQiYe:GeomTensorNetwork}.
\begin{definition}
 Let $(\Gamma,\bfm,\bfn)$ be a tensor network. A vertex $v \in \bfv$ is called 
 \begin{itemize}
  \dotitem \emph{subcritical} if $\prod_{e \ni v} m_e \geq n_v$; \emph{strictly subcritical} if the inequality is strict;
  \dotitem \emph{supercritical} if $\prod_{e \ni v} m_e \leq n_v$; \emph{strictly supercritical} if the inequality is strict;
  \dotitem \emph{critical} if $v$ is both subcritical and supercritical.
 \end{itemize}
The tensor network $(\Gamma,\bfm,\bfn)$ is called [strictly] subcritical (resp. supercritical) if all its vertices are [strictly] subcritical (resp. supercritical).
\end{definition}

First, we determine a reduction which allows us to assume that the bond dimensions associated to the edges incident to a fixed vertex are \emph{balanced}, in a way made precise in Lemma \ref{lemma: superfluous bond}.

Then, we provide a second reduction, proving that the tensor network variety of a tensor network having strictly supercritical vertices can be realized via a vector bundle construction as a natural extension of the tensor network variety where the strictly supercritical vertices are reduced to be critical.

Finally, we prove an upper bound for $\dim \calTNS_{\bfm , \bfn}^{\Gamma}$ in the subcritical range. This upper bound reduces to an equality in the critical case. 

Recall that from \eqref{eqn: dimension of fibers}, we have 
\[
 \dim \calTNS^\Gamma_{\bfm,\bfn} = \dim \Hom( W_1 \vvirg W_d, V_1 \vvirg V_d) - \dim \Phi^{-1}(T).
\]

\subsection{Reduction of bond dimension}\label{subsec: reduction bond dimension}

We already observed in Remark \ref{rmk: dimension 1} that we may always assume bond dimensions are at least $2$. Here, we show that if they are ``too unbalanced'', then they can be reduced without affecting the dimension of the tensor network variety. 

We say that a tensor network $(\Gamma,\bfm,\bfn)$ has \emph{overabundant bond dimension} if there exist a vertex $v \in \bfv(\Gamma)$ and an edge $e \in \bfe(\Gamma)$ incident to $v$ such that 
\begin{equation}\label{SuperflEq}
 m_{e} > n_v \prod_{e' \ni v, e' \neq e}m_{e'}.
\end{equation}
The following result shows that overabundant bond dimensions do not contribute to the dimension of the tensor network variety.
\begin{lemma}\label{lemma: superfluous bond}
Let $(\Gamma, \bfm,\bfn)$ be a tensor network. Fix $v \in \bfv(\Gamma)$, let $k$ be the degree of the vertex $v$ and $\{e_1 \vvirg e_{k}\}$ be the edges incident to $v$; assume $m_{e_1} \leq \cdots \leq m_{e_k}$. If \eqref{SuperflEq} holds for $v$, that is 
\[
 m_{e_k} > n_v \cdot m_{e_1} \cdots m_{e_{k-1}},
\]
then 
\[
 \calTNS_{\bfm , \bfn} = \calTNS_{\bar{\bfm},\bfn}
\]
where $\bar{\bfm}$ is defined by $\bar{m}_e = m_e$ if $e \neq e_k$ and $\bar{m}_{e_k} = n_v\cdot m_1 \cdots m_{e_{k-1}} $.
\end{lemma}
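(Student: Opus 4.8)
The plan is to show that the overabundant bond dimension $m_{e_k}$ can be reduced to $\bar m_{e_k} = n_v \cdot m_{e_1}\cdots m_{e_{k-1}}$ without shrinking the tensor network variety. Since the containment $\calTNS_{\bar\bfm,\bfn}^\Gamma \subseteq \calTNS_{\bfm,\bfn}^\Gamma$ is automatic from $\bar m_e \le m_e$, the only thing to prove is the reverse containment $\calTNS_{\bfm,\bfn}^\Gamma \subseteq \calTNS_{\bar\bfm,\bfn}^\Gamma$. For this it suffices to work with the parametrization: given an arbitrary tuple $(X_1,\dots,X_d)$ with $X_j \in \Hom(W_j,V_j)$ producing a tensor $T = (X_1 \ootimes X_d)\cdot T(\Gamma,\bfm)$, I want to exhibit a tuple $(X_1',\dots,X_d')$ with $X_j' \in \Hom(\bar W_j, V_j)$ (where $\bar W_j$ uses the reduced bond dimension $\bar m_{e_k}$ on edge $e_k$) producing the same tensor $T$.

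The key local observation is the following. Write $e_k = \{v, w\}$. The factor $W_v$ has the form $U_{e_k} \otimes W_v'$ where $U_{e_k} = \bbC^{m_{e_k}}$ and $W_v' = \bigotimes_{e \ni v, e \ne e_k} \bbC^{m_e}$ has dimension $m_{e_1}\cdots m_{e_{k-1}}$. Consider the map $X_v : U_{e_k} \otimes W_v' \to V_v$. By adjunction this is the same data as a linear map $\tilde X_v : U_{e_k} \to \Hom(W_v', V_v)$, and the target has dimension $n_v \cdot m_{e_1}\cdots m_{e_{k-1}} = \bar m_{e_k}$. The image of $\tilde X_v$ is a subspace of $\Hom(W_v', V_v)$ of dimension at most $\bar m_{e_k} < m_{e_k}$; hence $\tilde X_v$ factors through a quotient of $U_{e_k}$ of dimension $\bar m_{e_k}$, or more concretely there is a linear map $A : U_{e_k} \to \bar U_{e_k} := \bbC^{\bar m_{e_k}}$ and a map $\bar{\tilde X}_v : \bar U_{e_k} \to \Hom(W_v',V_v)$ with $\tilde X_v = \bar{\tilde X}_v \circ A$. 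Re-adjuncting, $X_v = \bar X_v \circ (A \otimes \Id_{W_v'})$ for some $\bar X_v : \bar U_{e_k} \otimes W_v' \to V_v$.

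Now I absorb the ``rectangular'' map $A$ into the neighbour. On the edge $e_k$ the Kronecker factor of $T(\Gamma,\bfm)$ is $\Id_{m_{e_k}} \in U_{e_k} \otimes U_{e_k}^*$, and $W_w = U_{e_k}^* \otimes W_w'$. Replacing $\Id_{m_{e_k}}$ by $\Id_{\bar m_{e_k}} \in \bar U_{e_k} \otimes \bar U_{e_k}^*$ — i.e. passing from $T(\Gamma,\bfm)$ to $T(\Gamma,\bar\bfm)$ — and setting $X_w' = X_w \circ (A^T \otimes \Id_{W_w'}) : \bar U_{e_k}^* \otimes W_w' \to V_w$, the composition $(A \otimes \Id)$ acting on the $v$-slot of $T(\Gamma,\bar\bfm)$ against $\Id_{\bar m_{e_k}}$ and then $(A^T \otimes \Id)$ on the $w$-slot reproduces exactly $\Id_{m_{e_k}}$ contracted as before; all other Kronecker factors and all other $X_j$ ($j \ne v,w$) are left unchanged, so the resulting tensor is still $T$. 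One has to be slightly careful when $v$ and $w$ happen to be such that $e_k$ is a self-adjacency issue, but the graph is simple so $v \ne w$ and this does not arise. This gives $T \in \calTNS_{\bar\bfm,\bfn}^{\Gamma\circ}$, and taking closures yields the claimed equality.

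I expect the main obstacle to be purely bookkeeping: keeping track of which copy of $\bbC^{m_{e_k}}$ is identified with its dual, verifying that inserting $A$ on one side and $A^T$ on the other of the identity matrix on edge $e_k$ exactly reconstructs the contraction, and checking that the adjunction/factorization step ($\tilde X_v = \bar{\tilde X}_v \circ A$) is valid for a \emph{generic}, hence arbitrary, choice of $X_v$ — here genericity is not even needed, since any linear map factors through its image. There is no deep difficulty; the content is the dimension count $\dim \Hom(W_v',V_v) = n_v m_{e_1}\cdots m_{e_{k-1}} < m_{e_k}$, which is precisely hypothesis \eqref{SuperflEq}.
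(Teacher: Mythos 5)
Your proposal is correct and follows essentially the same route as the paper: both arguments rest on the observation that, under \eqref{SuperflEq}, the map $X_v$ cannot be concise on the $\bbC^{m_{e_k}}$ factor (the paper phrases this as $X_d$ living in $U_1\ootimes \bar{U}_k\otimes V_d$ for a subspace $\bar U_k$ of dimension $\bar m_{e_k}$, you phrase it as the adjoint $\tilde X_v$ factoring through a $\bar m_{e_k}$-dimensional quotient), and both then transfer the compression across the edge $e_k$ to the neighbouring vertex (the paper by restricting $X_k$ to $\bar U_k\otimes W_k'$, you by the equivalent device of sliding the rectangular matrix $A$ through $\Id_{m_{e_k}}$ as $A^T$). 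The only cosmetic difference is that you make the closure step and the subspace-versus-quotient duality explicit.
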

\begin{proof}
 Let $T \in \calTNS_{\bfm , \bfn} \subseteq V_1 \ootimes V_d$ be a generic element and let $(X_1 \vvirg X_d) \in \Hom (W_1 \vvirg W_d ; V_1 \vvirg V_d)$ be an element such that $(X_1 \vvirg X_d) \cdot T(\Gamma,\bfm) = T$.
 
 Suppose $v = d$ and $e_j = \{d ,j\}$ for $ j =1 \vvirg k$. Write $U_j = \bbC^{m_j}$; let $W_d = U_1^* \ootimes U_k^*$, so that, for $j = 1 \vvirg k$, we have $W_j = U_j \otimes W_j'$ where $W_j'$ depends on the other edges incident to the vertex $j$. 
 
 Regard $X_d$ as a tensor in $W_d^* \otimes V_d = U_1 \ootimes U_k \otimes V_d$. Since $ m_{e_k} > m_{e_1} \cdots m_{e_{k-1}} \cdot n_v$, $X_d$ is not concise on the factor $U_k$: let $\bar{U}_k \subseteq U_k$ with $\dim \bar{U}_k =  m_{e_1} \cdots m_{e_{k-1}} \cdot n_v$ be a subspace such that $X_d \in U_1 \ootimes U_{k-1} \otimes \bar{U}_k \otimes V_d$. Correspondingly, let $\bar{U}_{k}^* = U_k^* / \bar{U}_{k}^\perp$. Note that $T(\Gamma,\bar{\bfm})$ coincides with the image of $T(\Gamma,\bfm)$ via the projection $U_k^* \to \bar{U}_k^*$ on the $d$-th factor.

 Now, define $\bar{W}_d = U_1^* \ootimes U_{k-1}^* \otimes \bar{U}_k^*$ and $\bar{W}_k = W_k' \otimes \bar{U}_k$. Let $\bar{X}_d = X_d$ be the linear map regarded as an element of $\Hom(\bar{W}_d, V_d)$. Moreover, the space $\Hom (W_k , V_k) = (W_k' \otimes U_k)^* \otimes V_k= {W_k'}^* \otimes U_k^* \otimes V_k$ naturally projects onto ${W_k'}^* \otimes \bar{U}_k^* \otimes V_k = \Hom(\bar{W}_k , V_k)$: let $\bar{X}_k$ be the image of $X_k$ under this projection. 

 Now, one can verify that
 \[
T = (X_1 \ootimes X_d) \cdot  T(\Gamma, \bfm) = (\bar{X}_1 \ootimes \bar{X}_d ) \cdot T(\Gamma,\bar{\bfm})
 \]
 where $\bar{X}_v = X_v$ if $v \neq k,d$.
\end{proof}

\subsection{Reduction for supercritical vertices}

The reduction of this section appeared already in \cite{LanQiYe:GeomTensorNetwork}. We include it here for completeness.

For a vector space $V$ with $\dim V = n$ and an integer $k \leq n$, let $\bfG(k,V)$ be the Grassmannian of $k$-dimensional linear subspaces of $V$. Recall that $\dim \bfG(k,V) = k(n-k)$. The variety $\bfG(k,V)$ has a \emph{tautological bundle} 
\[
\sigma : \calS \to \bfG(k,V);
\]
the fiber of $\calS$ over a point $[E] \in \bfG(k,V)$ is the plane $E$ itself: $\calS_{[E]} = E$. 

\begin{proposition}\label{prop: reduction supercritical}
Let $(\Gamma,\bfm,\bfn)$ be a tensor network. Suppose that the vertex $d \in \bfv(\Gamma)$ is supercritical and write $N = \dim W_d = \prod_{e \ni d} m_e$. Let $\bfn' = (n'_v : v \in \bfv(\Gamma))$ be the $d$-tuple of local dimensions defined by $n'_v = n_v$ if $v \neq d$ and $n'_d = N$.

Then
\[
 \dim \calTNS_{\bfm,\bfn}^\Gamma =  N(n_d-N) + \dim \calTNS_{\bfm,\bfn'}^\Gamma .
\]
\end{proposition}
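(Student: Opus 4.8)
The plan is to realize $\calTNS^\Gamma_{\bfm,\bfn}$ as the closure of the image of a bundle over the Grassmannian $\bfG(N,V_d)$ whose fibres are copies of $\calTNS^\Gamma_{\bfm,\bfn'}$, and then to check that this parametrization is birational. Throughout, for a tensor $T\in V_1\ootimes V_d$ let $F_T\colon\bigotimes_{j\ne d}V_j^*\to V_d$ denote the flattening map obtained by grouping all factors but the $d$-th. The guiding observation is that any $T=(X_1\ootimes X_d)\cdot T(\Gamma,\bfm)$ lies in $V_1\ootimes V_{d-1}\otimes\Im X_d$ and $\dim\Im X_d\le\dim W_d=N$; equivalently $\Im F_T$ has dimension at most $N$.

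\emph{Construction of the bundle.} Over $\bfG(N,V_d)$ form the vector bundle $\calF=(V_1\ootimes V_{d-1})\otimes\calS$ with fibre $\calF_{[E]}=V_1\ootimes V_{d-1}\otimes E$; since each fibre is canonically a subspace of $V_1\ootimes V_d$, the total space carries a natural morphism $p\colon\calF\to V_1\ootimes V_d$. Let $\calZ\subseteq\calF$ be the subvariety whose fibre over $[E]$ is the tensor network variety of $(\Gamma,\bfm)$ with the space $E$ (of dimension $N$) placed at the vertex $d$ in place of $V_d$, regarded inside $V_1\ootimes V_{d-1}\otimes E$. This construction is equivariant for the $\GL(V_d)$-action, which is transitive on $\bfG(N,V_d)$, and tensor network varieties are invariant under the coordinatewise $\GL$-actions; hence $\calZ$ is a $\GL(V_d)$-homogeneous fibre bundle over $\bfG(N,V_d)$ with fibre isomorphic to $\calTNS^\Gamma_{\bfm,\bfn'}$. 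In particular $\calZ$ is irreducible and
\[
\dim\calZ=\dim\bfG(N,V_d)+\dim\calTNS^\Gamma_{\bfm,\bfn'}=N(n_d-N)+\dim\calTNS^\Gamma_{\bfm,\bfn'}.
\]

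\emph{Image and birationality.} I first claim $\overline{p(\calZ)}=\calTNS^\Gamma_{\bfm,\bfn}$. The inclusion $\subseteq$ holds because a point of the fibre of $\calZ$ over $[E]$ is a tensor network representation through linear maps into $E\subseteq V_d$, hence in particular into $V_d$. The inclusion $\supseteq$ holds because any $T=(X_1\ootimes X_d)\cdot T(\Gamma,\bfm)$ lifts to $(T,[E])\in\calZ$ for every $N$-dimensional $E$ containing $\Im X_d$, so $\calTNS^{\Gamma\circ}_{\bfm,\bfn}\subseteq p(\calZ)$. Next I claim $p|_\calZ$ is generically injective; this gives $\dim\calTNS^\Gamma_{\bfm,\bfn}=\dim\calZ$ and concludes. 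For $(T,[E])\in\calZ$ the condition $T\in V_1\ootimes V_{d-1}\otimes E$ forces $E\supseteq\Im F_T$, so it suffices to prove that for a generic $T$ one has $\dim\Im F_T=N$: then $E$ is forced to equal $\Im F_T$ and the fibre of $p|_\calZ$ over $T$ is the single point $(T,[\Im F_T])$. Taking $T$ from a generic representation with $X_d\colon W_d\to E$ a generic, hence invertible, map (as $\dim W_d=\dim E=N$), one has $F_T=X_d\circ F_{T(\Gamma,\bfm)}\circ(X_1^T\otimes\cdots\otimes X_{d-1}^T)$; since the graph tensor $T(\Gamma,\bfm)$ is concise, $F_{T(\Gamma,\bfm)}$ is surjective onto $W_d$, and a rank count shows $F_T$ has rank $N$ for generic $X_1\vvirg X_{d-1}$.

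\emph{Main obstacle.} The single non-formal step is the last one: showing that the generic flattening $F_T$ attains the maximal possible rank $N$, so that the $N$-plane $E$ is recovered from $T$ and $p|_\calZ$ is birational; this is where conciseness of the graph tensor is used. Everything else is bookkeeping once the Grassmann-bundle picture is in place; note in particular the dimension identity $N\cdot n_d=N^2+N(n_d-N)$, which is precisely the statement that the parameters in $X_d\in\Hom(W_d,V_d)$ split into the parameters of an isomorphism onto an $N$-plane plus the parameters of the $N$-plane itself.
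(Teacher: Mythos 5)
Your proof follows essentially the same route as the paper: a Grassmann-bundle construction over $\bfG(N,V_d)$ whose fibres are copies of $\calTNS^\Gamma_{\bfm,\bfn'}$, together with the observation that the projection to $V_1\ootimes V_d$ is generically one-to-one, which yields the dimension count. You are in fact more explicit than the paper at the key step (the paper dispatches generic injectivity with the phrase ``by conciseness,'' whereas you spell out that a generic $T$ has $\dim\Im F_T=N$ so that $E=\Im F_T$ is recovered from $T$); both arguments implicitly assume the generic flattening attains rank $N$, which requires that the bond dimensions at $d$ are not overabundant in the sense of the paper's Lemma \ref{lemma: superfluous bond}.
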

\begin{proof}
 Let $\calS_d^{V_1 \ootimes V_{d-1}}$ be the vector bundle over the Grassmannian $\bfG(N,V_d)$ whose fiber over a plane $[E]$ is $V_1 \ootimes V_{d-1} \otimes E$; this is the tautological bundle augmented by the trivial bundle with constant fiber $V_1 \ootimes V_{d-1}$. Consider the diagram
 \[
  \xymatrix{ 
  &\calS_d^{V_1 \ootimes V_{d-1}} \ar[dr]^{\pi} \ar[dl]_\sigma & \\
  \bfG(N,V_d) & & V_1 \ootimes V_d
 }\]
 where the second projection $\pi: ([E], T) \mapsto T$ maps an element of the bundle to its fiber component. By conciseness, this projection is generically one-to-one. 
 
Consider the subbundle of $\calS_d^{V_1\ootimes V_{d-1}}$ whose fiber at $[E]$ is $\calTNS^\Gamma_{\bfm,\bfn'}$ where the $d$-th factor is identified with $E$. Let $\underline{\calTNS}^\Gamma_{\bfm, \calS}$ be the total space of this subbundle. We have 
\[
 \dim \underline{\calTNS}^\Gamma_{\bfm, \calS} = \dim \bfG(N,V_d) + \dim \calTNS^{\Gamma}_{\bfm,\bfn'} = N(n_d-N) + \dim \calTNS_{\bfm,\bfn'}^\Gamma .
\]
The projection $\pi$ is generically one-to-one and maps $\underline{\calTNS}^\Gamma_{\bfm, \calS}$ surjectively onto $\calTNS^\Gamma_{\bfm,\bfn}$. Therefore $\dim \calTNS^{\Gamma}_{\bfm,\bfn} = \dim \underline{\calTNS}^\Gamma_{\bfm, \calS}$ and this concludes the proof.
\end{proof}

Iteratively applying Proposition \ref{prop: reduction supercritical}, one can reduce all strictly supercritical vertices to critical vertices.

\begin{theorem}\label{thm: reduction supercritical}
 Let $(\Gamma,\bfm,\bfn)$ be a tensor network. For every $v \in \bfv(\Gamma)$ let $N_v = \prod_{e \ni v} m_e$. Let $\bfn'$ be the set of local dimensions defined by $n_v' = \min\{ N_v, n_v\}$. Then
 \[
  \dim \calTNS^{\Gamma}_{\bfm,\bfn} = \sum_{v\in\bfv(\Gamma)} n_v' (n_v-n_v') + \dim \calTNS^\Gamma_{\bfm,\bfn'}.
 \]
\end{theorem}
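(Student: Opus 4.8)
The plan is to derive Theorem \ref{thm: reduction supercritical} as a straightforward induction on the number of strictly supercritical vertices of $(\Gamma,\bfm,\bfn)$, using Proposition \ref{prop: reduction supercritical} as the inductive step. First I would observe that a vertex $v$ with $n_v \leq N_v$ (i.e.\ $v$ supercritical) has $n_v' = n_v$, so the corresponding summand $n_v'(n_v - n_v') = 0$ and the reduction leaves that coordinate untouched; only strictly supercritical vertices, where $n_v > N_v$, contribute a nonzero term $N_v(n_v - N_v)$ and get their local dimension lowered to $N_v$. So the statement is vacuous when there are no strictly supercritical vertices, giving the base case.

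For the inductive step, suppose $v$ is a strictly supercritical vertex; after relabeling we may take $v = d$, so that $N = \dim W_d = N_d$ and $n_d > N$. Apply Proposition \ref{prop: reduction supercritical} to obtain
\[
 \dim \calTNS^\Gamma_{\bfm,\bfn} = N(n_d - N) + \dim \calTNS^\Gamma_{\bfm,\bfn^{(1)}},
\]
where $\bfn^{(1)}$ agrees with $\bfn$ except $n^{(1)}_d = N = N_d$. Now $(\Gamma,\bfm,\bfn^{(1)})$ has one fewer strictly supercritical vertex: the vertex $d$ has become critical, and no other vertex changed its status since the local dimensions at vertices $\neq d$ are unchanged and $N_v$ depends only on $\bfm$. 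The key bookkeeping point is that the target vector $\bfn'$ produced by the theorem is the same whether we compute it from $\bfn$ or from $\bfn^{(1)}$: indeed $n'_w = \min\{N_w, n_w\} = \min\{N_w, n^{(1)}_w\}$ for $w \neq d$, and at $d$ we have $\min\{N_d, n_d\} = N_d = n^{(1)}_d = \min\{N_d, n^{(1)}_d\}$. Hence the induction hypothesis applied to $(\Gamma,\bfm,\bfn^{(1)})$ gives
\[
 \dim \calTNS^\Gamma_{\bfm,\bfn^{(1)}} = \sum_{w \in \bfv(\Gamma)} n'_w (n^{(1)}_w - n'_w) + \dim \calTNS^\Gamma_{\bfm,\bfn'}.
\]
Combining the two displays, and noting that the $d$-th summand in the sum over $\bfn^{(1)}$ is $n'_d(n^{(1)}_d - n'_d) = N_d(N_d - N_d) = 0$ while $N(n_d - N) = n'_d(n_d - n'_d)$, we recover exactly the claimed formula for $(\Gamma,\bfm,\bfn)$.

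I do not expect any genuine obstacle here, since all the geometric content is already packaged in Proposition \ref{prop: reduction supercritical}; the only thing to be careful about is the combinatorial bookkeeping of which vertices change status and the verification that the ultimate target $\bfn'$ is a fixed point of the reduction (so that the order in which one reduces the strictly supercritical vertices does not matter, and so that the recursion terminates). One should also remark, as the sentence preceding the theorem already does, that Proposition \ref{prop: reduction supercritical} was stated for vertex $d$ only for notational convenience and applies to any vertex after reindexing the tensor factors; the quantity $\dim \calTNS^\Gamma_{\bfm,\bfn}$ is invariant under such reindexing.
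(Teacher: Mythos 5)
Your proof is correct and is essentially the paper's argument: the paper gives no separate proof of this theorem beyond the remark that one ``iteratively applies'' Proposition \ref{prop: reduction supercritical}, which is exactly the induction on the number of strictly supercritical vertices that you carry out, with the bookkeeping (invariance of the target $\bfn'$ under one reduction step, vanishing of the summands at non-strictly-supercritical vertices) spelled out explicitly.
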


Note that the tensor network $(\Gamma, \bfm,\bfn')$ appearing in Theorem \ref{thm: reduction supercritical} is, by definition, subcritical.

It remains to understand $\dim \calTNS^{\Gamma}_{\bfm,\bfn}$ in the subcritical range.

\subsection{Subcritical range}

We will provide an upper bound for $\dim \calTNS_{\bfm,\bfn}^\Gamma$ when the tensor network $(\Gamma,\bfm,\bfn)$ is subcritical. The upper bound is obtained, following \eqref{eqn: dimension of fibers}, by determining a lower bound on the dimension of the generic fiber of the parametrization of $\calTNS_{\bfm,\bfn}^\Gamma$.

\begin{theorem}\label{thm: fiber and dim gauge-orbit}
 Let $(\Gamma,\bfm,\bfn)$ be a subcritical tensor network. Then the dimension of the generic fiber of the map $\Phi$ is bounded from below by the dimension of the $\calG_{\Gamma,\bfm}$-orbit of a generic element of $\Hom(W_1,\dots, W_d; V_1,\dots,V_d).$
\end{theorem}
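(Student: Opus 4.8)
The plan is to realize the gauge orbit of a generic point of the source of $\Phi$ inside a single fiber of $\Phi$, and then read off the inequality by comparing dimensions. First I would let the gauge group act on the source of $\Phi$ by precomposition: an element $g \in \calG_{\Gamma,\bfm} \subseteq G(W_1 \vvirg W_d)$ has the form $g = g_1 \ootimes g_d$ with $g_i \in \GL(W_i)$, and for $Y = X_1 \ootimes X_d \in \Hom(W_1 \vvirg W_d; V_1 \vvirg V_d)$ I set $g \cdot Y := (X_1 g_1^{-1}) \ootimes (X_d g_d^{-1})$. Since each $g_i$ is invertible, $g \cdot Y$ is again a decomposable tensor of linear maps, hence still lies in the Segre cone $\Hom(W_1 \vvirg W_d; V_1 \vvirg V_d)$; thus this is a bona fide algebraic action of $\calG_{\Gamma,\bfm}$ on the source of $\Phi$.

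The crucial observation is that $\Phi$ is constant along $\calG_{\Gamma,\bfm}$-orbits. By Lemma \ref{lemma: gauge stabilizes} (equivalently Corollary \ref{corol: lie algebra for graph tensors}) the gauge subgroup stabilizes the graph tensor, so $g^{-1}\cdot T(\Gamma,\bfm) = T(\Gamma,\bfm)$ and hence
\[
\Phi(g\cdot Y) = (Y\circ g^{-1})\bigl(T(\Gamma,\bfm)\bigr) = Y\bigl(g^{-1}\cdot T(\Gamma,\bfm)\bigr) = Y\bigl(T(\Gamma,\bfm)\bigr) = \Phi(Y).
\]
Therefore $\calG_{\Gamma,\bfm}\cdot Y \subseteq \Phi^{-1}(\Phi(Y))$ for every $Y$ in the source, and in particular $\dim \Phi^{-1}(\Phi(Y)) \geq \dim(\calG_{\Gamma,\bfm}\cdot Y)$.

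To conclude I would invoke genericity twice. The source $\Hom(W_1 \vvirg W_d; V_1 \vvirg V_d)$ is irreducible, so there is a dense open subset on which the orbit dimension $\dim(\calG_{\Gamma,\bfm}\cdot Y)$ is constant and equal to the dimension of the gauge orbit of a generic element; and by the Theorem of Dimension of the Fibers \cite[Thm. 1.25]{Shaf:BasicAlgGeom1} there is a dense open subset of the source whose image consists of generic points of $\calTNS^\Gamma_{\bfm,\bfn}$, over which $\Phi^{-1}(\Phi(Y))$ has the dimension of the generic fiber. Choosing $Y$ in the intersection of these two dense open subsets turns the displayed inequality into the statement of the theorem.

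I do not expect a genuine obstacle here; the only points that need care are verifying that precomposition by the gauge group keeps $Y$ inside the Segre cone (so that the orbit really lies in the source of $\Phi$), and the routine semicontinuity bookkeeping in the last step. The subcriticality hypothesis is the standing assumption of this subsection, inherited from the reductions of the two preceding subsections; it does not seem to be needed for the inequality above, and it enters the overall argument of Theorem \ref{theorem: main} only through those reductions and through the subsequent estimate of $\dim\Stab_{\calG_{\Gamma,\bfm}}(Y)$.
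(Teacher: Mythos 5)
Your proposal is correct and follows essentially the same route as the paper: both arguments rest on the single observation that the gauge subgroup stabilizes $T(\Gamma,\bfm)$, so the gauge orbit of a generic decomposable $X_1 \ootimes X_d$ under precomposition lies entirely inside the fiber $\Phi^{-1}(\Phi(X))$, giving the dimension inequality. Your side remark is also accurate — the paper's proof invokes subcriticality only in an aside about surjectivity of the $Y_j$ (which would matter for an upper bound on the fiber, not for this lower bound) — so nothing is missing.
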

\begin{proof}
	Let $T = (X_1 \ootimes X_d)\cdot T(\Gamma,\bfm)$, with $X_1 \ootimes X_d \in \Hom(W_1,\dots, W_d; V_1,\dots,V_d) $ a generic element. The fiber of $\Phi : \Hom(W_1 \vvirg W_d; V_1 \ootimes V_d) \to \calTNS^\Gamma_{\bfm,\bfn}$ over $T$ is 
	\begin{equation*}
	\Phi^{-1} (T) = \{Y_1\ootimes Y_d \in \Hom(W_1,\dots, W_d; V_1,\dots,V_d) : (Y_1\ootimes Y_d)\cdot T(\Gamma,\bfm) = T\}
	\end{equation*} 
	Since every vertex is subcritical, for every $j$, a generic element of $\Hom(W_j,V_j)$ is surjective. Let $Y_1 \ootimes Y_d \in \Phi^{-1}(T)$. By conciseness, $Y_j$ has the same image as $X_j$, therefore $Y_j$ is surjective as well, and there exists $g \in \GL(W_j)$ such that $Y_j = X_j g_j$. 

	For $X = X_1 \ootimes X_d$, and $g=g_1\ootimes g_d \in G(W_1, \dots, W_d)$, write $g.X= X_1g_1 \ootimes X_dg_d $. In particular, if $g \in \calG_{\Gamma,\bfm}$ then
	\begin{equation*}
	Y\cdot T(\Gamma,\bfm)= (g.X)\cdot T(\Gamma,\bfm)=(X_1 \ootimes X_d)(g_1\ootimes g_d)\cdot T(\Gamma,\bfm) = 	X \cdot T(\Gamma,\bfm)=T,
	\end{equation*}
	and the dimension of the fiber is bounded by
	\begin{equation}\label{dim fibre}
	\begin{aligned}
		\dim \Phi^{-1} (T) & = \dim \{Y: Y\cdot T(\Gamma,\bfm) = T \} \\
		&= \dim \{ g. X :g \in G(W_1, \dots, W_d), (g.X)\cdot T(\Gamma,\bfm) = T \} \\
		& \geq \dim \{g.X :g \in \calG_{\Gamma,\bfm}\} \\
		& = \dim (\calG_{\Gamma,\bfm}\cdot X). 
		\end{aligned}
	\end{equation}
	Therefore the dimension of the generic fiber is bounded from below by the dimension of the $\calG_{\Gamma,\bfm}$-orbit of a generic element of $\Hom(W_1,\dots, W_d; V_1,\dots,V_d)$. 
\end{proof}

Applying the Theorem of the Dimension of the Fibers \cite[Thm. 1.25]{Shaf:BasicAlgGeom1} to the $\calG_{\Gamma,\bfm}$-orbit of a generic element $X \in \Hom(W_1,\dots, W_d; V_1,\dots,V_d)$, we deduce the following corollary, which completes the proof of Theorem \ref{theorem: main}.
\begin{corollary}\label{corol: upper bound with Stab}
 Let $(\Gamma,\bfm,\bfn)$ be a subcritical tensor network with no overabundant bond dimension. Then 
 \[
 \dim \calTNS^{\Gamma}_{\bfm,\bfn} \leq \left [ \textsum_{v \in \bfv(\Gamma)} N_v n_v -d +1 \right] - \textsum_{e \in \bfe(\Gamma)} (m_e^2 -1) + \dim \Stab_{\calG_{\Gamma,\bfm}} (X)
 \]
where $N_v = \prod_{e \ni v} m_e$ and $X = X_1 \ootimes X_d$ with $X_v \in \Hom(W_v , V_v)$ generic.
\end{corollary}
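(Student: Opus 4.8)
The plan is to obtain the inequality by feeding three earlier facts into the dimension-of-fibers identity \eqref{eqn: dimension of fibers}: the known dimension of the Segre cone $\Hom(W_1 \vvirg W_d; V_1 \vvirg V_d)$, the lower bound on the generic fiber of $\Phi$ furnished by Theorem \ref{thm: fiber and dim gauge-orbit}, and a second application of the Theorem of the Dimension of the Fibers \cite[Thm. 1.25]{Shaf:BasicAlgGeom1} — this time to the orbit map of the gauge group $\calG_{\Gamma,\bfm}$ — to convert the dimension of a gauge orbit into the stabilizer term. So the identity \eqref{eqn: dimension of fibers} reduces the problem to a \emph{lower} bound on $\dim \Phi^{-1}(T)$ for generic $T$, and everything else is bookkeeping with orbit dimensions.

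Concretely, I would first record the two elementary counts. Since $\dim W_v = N_v = \prod_{e \ni v} m_e$ and $\dim V_v = n_v$, we have $\dim \Hom(W_v,V_v) = N_v n_v$, hence $\dim \Hom(W_1 \vvirg W_d; V_1 \vvirg V_d) = \sum_{v \in \bfv(\Gamma)} N_v n_v - d + 1$, as recalled in Section \ref{section: def and prelim}; and since $\calG_{\Gamma,\bfm} \simeq \bigtimes_{e \in \bfe(\Gamma)} \PGL_{m_e}$ (Section \ref{sec: gauge section}), its Lie algebra $\bfg_{\Gamma,\bfm}$ has dimension $\sum_{e \in \bfe(\Gamma)} (m_e^2 - 1)$.

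Then, taking $X = X_1 \ootimes X_d$ generic in $\Hom(W_1 \vvirg W_d; V_1 \vvirg V_d)$ and $T = \Phi(X)$ generic in $\calTNS^{\Gamma \circ}_{\bfm,\bfn}$, subcriticality allows me to apply Theorem \ref{thm: fiber and dim gauge-orbit}, giving $\dim \Phi^{-1}(T) \geq \dim(\calG_{\Gamma,\bfm}\cdot X)$. Applying the Theorem of the Dimension of the Fibers to the orbit map $\calG_{\Gamma,\bfm} \to \calG_{\Gamma,\bfm}\cdot X$, $g \mapsto g.X$, whose fiber over $X$ is $\Stab_{\calG_{\Gamma,\bfm}}(X)$, yields $\dim(\calG_{\Gamma,\bfm}\cdot X) = \sum_{e \in \bfe(\Gamma)}(m_e^2-1) - \dim \Stab_{\calG_{\Gamma,\bfm}}(X)$. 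Substituting everything into \eqref{eqn: dimension of fibers} gives
\[
\dim \calTNS^\Gamma_{\bfm,\bfn} \leq \Bigl[\textsum_{v \in \bfv(\Gamma)} N_v n_v - d + 1\Bigr] - \textsum_{e \in \bfe(\Gamma)}(m_e^2-1) + \dim \Stab_{\calG_{\Gamma,\bfm}}(X),
\]
as claimed.

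I do not expect a genuine obstacle at this step: all of the real content already lives in Theorem \ref{thm: fiber and dim gauge-orbit}, which is where subcriticality (used so that a generic $X_v$ is surjective) and conciseness of the graph tensor enter. The only point requiring a little care is genericity bookkeeping — the tensor $T$ at which the first fiber-dimension theorem is applied must be generic in the image of $\Phi$, which is exactly the ``generic element'' quantified in Theorem \ref{thm: fiber and dim gauge-orbit}, so the two uses of the Theorem of the Dimension of the Fibers are compatible. I would also remark for the reader that the hypothesis of no overabundant bond dimension is not actually needed to derive this inequality; it is retained because, together with Lemma \ref{lemma: superfluous bond} and Theorem \ref{thm: reduction supercritical}, it is what allows this statement to be upgraded to Theorem \ref{theorem: main} for an arbitrary tensor network $(\Gamma,\bfm,\bfn)$.
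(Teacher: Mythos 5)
Your proof is correct and follows essentially the same route as the paper: both combine the fiber-dimension identity \eqref{eqn: dimension of fibers} with the lower bound on $\dim\Phi^{-1}(T)$ from Theorem \ref{thm: fiber and dim gauge-orbit} and the orbit--stabilizer count $\dim(\calG_{\Gamma,\bfm}\cdot X)=\dim\calG_{\Gamma,\bfm}-\dim\Stab_{\calG_{\Gamma,\bfm}}(X)$. Your closing observation that the no-overabundant-bond-dimension hypothesis is not actually used in this derivation is also accurate.
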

\begin{proof}
From \eqref{eqn: dimension of fibers} $\dim \calTNS^\Gamma_{\bfm,\bfn} = \dim \Hom( W_1 \vvirg W_d, V_1 \vvirg V_d) - \dim \Phi^{-1}(T)$ where $T$ is a generic element of $\calTNS^\Gamma_{\bfm,\bfn}$.

Now $\dim \Hom (W_1, \ldots, W_d;V_1, \ldots , V_d)=\textsum_{v \in \bfv(\Gamma)} N_v n_v -d +1$. By Theorem \ref{thm: fiber and dim gauge-orbit}, 
\begin{align*}
\dim\Phi^{-1}(T)\geq &\dim \calG_{\Gamma,\bfm} \cdot X  = \\ 
&\dim \calG_{\Gamma,\bfm} - \dim \Stab_{\calG_{\Gamma,\bfm}}(X) = \\ 
& \textsum_{e \in \bfe(\Gamma)} (m_e^2 -1) - \dim \Stab_{\calG_{\Gamma,\bfm}} (X) ,
\end{align*}
where $X \in \Hom(W_1 \vvirg W_d, V_1 \vvirg V_d)$ is generic.
\end{proof}

\subsection{Sharpening the upper bound}\label{subsec: sharpening}

In this section, we study the upper bound obtained in Corollary \ref{corol: upper bound with Stab} and we provide sufficient conditions to have $\dim \Stab_{\calG_{\Gamma,\bfm}}(X) = 0$. These results will lead us to a complete proof of Corollaries \ref{cor: MPS} and \ref{cor: PEPS}.

\begin{definition}
Let $G$ be an algebraic group acting on an algebraic variety $V$. We say that the action is \emph{generically stable} if there exists an element $v \in V$ such that the stabilizer $\Stab_G(v)$ is a finite group.  
\end{definition}

In particular, the condition that the action of $\calG_{\Gamma,\bfm}$ on $\Hom(W_1 \vvirg W_d; V_1 \vvirg V_d)$ is generically stable is equivalent to the fact that the term $\dim \Stab_{\calG_{\Gamma,\bfm}}(X)$ in Corollary \ref{corol: upper bound with Stab} is $0$.

A rich theory has been developed in the study of stable group actions (and more generally semistable actions, which are beyond the scope of this paper) starting from \cite{KemNess:LengthVectorRepSpaces} and related works. We refer to \cite{MuFoKi:GeometricInvariantTheory} for the theory.

First, we provide a result on the cycle graph $C_d$, which yields the result on matrix product states in Corollary \ref{cor: MPS}.
\begin{proposition}\label{prop: MPS have stable action}
Let $(C_d, \bfm ,\bfn)$ be the tensor network on the cycle graph with constant bond dimension $\bfm = (m \vvirg m)$. Assume $n_j \geq 2$ for at least one index. Then the action of $\calG_{C_d,\bfm}$ on $\Hom(W_1 \vvirg W_d;V_1 \vvirg V_d)$ is generically stable. 
\end{proposition}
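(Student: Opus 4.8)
The plan is to establish \emph{generic stability} by producing a single point $X^\star$ of $\Hom(W_1 \vvirg W_d; V_1 \vvirg V_d)$ whose stabiliser under $\calG_{C_d,\bfm}$ is trivial, and then to pass to the generic point by semicontinuity. The latter step is routine: $\calG_{C_d,\bfm}$ acts algebraically on the irreducible affine variety $\Hom(W_1 \vvirg W_d; V_1 \vvirg V_d)$, so the function $X \mapsto \dim\Stab_{\calG_{C_d,\bfm}}(X) = \dim\calG_{C_d,\bfm} - \dim(\calG_{C_d,\bfm}\cdot X)$ is upper semicontinuous, and the locus where it vanishes, once it is known to contain $X^\star$, is therefore a nonempty open, hence dense, subset containing a generic point.

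To write the action in coordinates I would use the local description of Section~\ref{sec: gauge section}. Every vertex of $C_d$ has degree two, so $W_v\cong\bbC^{m}\otimes\bbC^{m}$ and, after fixing bases, a map $X_v\in\Hom(W_v,V_v)$ is recorded as an $n_v$-tuple of $m\times m$ matrices $(M_v^{(1)}\vvirg M_v^{(n_v)})$. Identifying $\calG_{C_d,\bfm}\cong\PGL_m^{\times d}$ with the $v$-th factor $\bar A_v$ carried by the edge $\{v,v+1\}$ (indices read cyclically, $A_0:=A_d$), the gauge action is the familiar matrix product state gauge $M_v^{(k)}\mapsto A_{v-1}M_v^{(k)}A_v^{-1}$ on the matrices at vertex $v$ — up to inverting or transposing some of the $A_v$, according to which of $\bbC^m$, $(\bbC^m)^*$ sits in $W_v$. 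Since $X=X_1\ootimes X_d$ is a decomposable tensor with nonzero factors, $g=(\bar A_1\vvirg\bar A_d)$ fixes $X$ if and only if there are scalars $\lambda_1\vvirg\lambda_d$ with $\prod_v\lambda_v=1$ such that $A_{v-1}M_v^{(k)}A_v^{-1}=\lambda_v M_v^{(k)}$ for every $v$ and every $k$; here the scalar $\lambda_v$ is common to all $k$ at the vertex $v$.

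For the test point, choose a vertex $j_0$ with $n_{j_0}\geq 2$ and, using the rotational symmetry of $C_d$, relabel so that $j_0=d$. Set $M_d^{(1)}=\diag(1,2\vvirg m)$ and $M_d^{(2)}=C$, the cyclic permutation matrix, all remaining matrices being $0$, while at each $v\neq d$ set $M_v^{(1)}=\Id_m$ and $M_v^{(k)}=0$ for $k\geq2$; denote by $X^\star$ the resulting point. At a vertex $v\neq d$ the stabiliser equation reads $A_{v-1}=\lambda_v A_v$, so $\bar A_1=\cdots=\bar A_d=:\bar A$ in $\PGL_m$. Feeding $\bar A_{d-1}=\bar A_d$ into the two equations at vertex $d$ and absorbing scalars, a representative $A$ of $\bar A$ must satisfy $A\,\diag(1\vvirg m)\,A^{-1}=c\,\diag(1\vvirg m)$ and $A\,C\,A^{-1}=c\,C$ for one common scalar $c$; taking traces in the first relation forces $c=1$, so $A$ commutes with $\diag(1\vvirg m)$ and is thus diagonal, and then $A$ commutes with $C$, which forces $A$ to be scalar. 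Hence $\Stab_{\calG_{C_d,\bfm}}(X^\star)$ is trivial, which by the first paragraph proves the claim.

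The step I expect to be the main obstacle is the coordinate bookkeeping of the gauge action: correctly tracking which endpoint of each edge carries $\bbC^m$ and which carries $(\bbC^m)^*$, together with the scalar indeterminacies from passing to $\PGL_m$ and from the non-uniqueness of a factorisation of a decomposable tensor — so that the two relations at vertex $d$ share a common scalar $c$. The rigidity ingredient (only scalars simultaneously conjugate $\diag(1\vvirg m)$ and a cyclic permutation matrix to scalar multiples of themselves) is then elementary, and the hypothesis $n_{j_0}\geq2$ is exactly what lets the two independent matrices $M_d^{(1)},M_d^{(2)}$ sit at one vertex; it is genuinely needed, since if every $n_v=1$ the generic stabiliser is the $\PGL_m$-centraliser of a generic matrix, a torus of positive dimension.
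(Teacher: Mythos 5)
Your proposal is correct and follows essentially the same route as the paper's proof: force all gauge factors $\bar A_v$ to coincide using the invertible first matrix at each of $d-1$ vertices, then use two matrices at the remaining vertex with $n_j\geq 2$ (one with distinct eigenvalues to force diagonality, one non-diagonal to kill the resulting torus) to make the common element of $\PGL_m$ trivial. The only cosmetic difference is that you exhibit an explicit witness $X^\star$ and invoke upper semicontinuity of $\dim\Stab$, whereas the paper argues directly at a generic point after a normalization; both are valid and equivalent in substance.
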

\begin{proof}
 Let $X_1 \ootimes X_d \in \Hom(W_1 \vvirg W_d; V_1 \vvirg V_d)$ be a generic element. Write $W_j = U_j \otimes U_{j+1}^*$ with $U_j = U_{j+1} = \bbC^m$. Then $X_j$ is a generic element of $U_j^* \otimes U_{j+1} \otimes V_j$, with $\dim V_j = n_j \geq 1$. For every $j$, write $X_j = \sum _{p = 1}^{n_j} X_j ^{(p)} \otimes v_p$ where $v_1 \vvirg v_{n_j}$ is a basis of $V_j$ and $X_j ^{(p)} \in U_j^* \otimes U_{j+1}$.

By genericity $X_j^{(1)}$ is a fixed isomorphism $X_j^{(1)} : U_j \to U_{j+1}$; after choosing bases in $U_j$, we write $X_j^{(1)} = \Id_{\bbC^m}$ in coordinates for $j = 1\vvirg d-1$ and $X_d^{(1)} : U_d \to U_1$ is a generic diagonal matrix. 

The stabilizer $\Stab_{\calG_{C_d,\bfm}}(X)$ is contained in the stabilizer of $X_1^{(1)} \ootimes X_d^{(1)}$: this is the centralizer of $X_d^{(1)}$; in coordinates this is the maximal torus $\Theta _m \subseteq \PGL_m$ of diagonal matrices in $\PGL_m^\Delta$, where $\PGL_m^\Delta \subseteq \calG_{C_d , \bfm} = \bigtimes_{j=1}^d \PGL(U_j)$ lies on the diagonal of the direct factors. Therefore $\Stab_{\calG_{C_d,\bfm}}(X) \subseteq \Theta_m$.

Now, there exists at least one index $j$ such that $n_j \geq 2$. Correspondingly, there is a map $X_j^{(2)} : U_j \to U_{j+1}$. Therefore $\Stab_{\calG_{C_d,\bfm}} (X) \subseteq \Stab_{\Theta_m} (X_j^{(2)})$. By genericity, $X_j^{(2)}$ has full rank and is not diagonal in the fixed basis, hence $\Stab_{\Theta_m} (X_j^{(2)})$ is trivial.

This shows that a generic $X \in \Hom(W_1 \vvirg W_d; V_1 \vvirg V_d)$ satisfies $\dim \Stab_{\calG_{C_d,\bfm}}(X) = 0$, hence the action of $\calG_{C_d,\bfm}$ on $\Hom(W_1 \vvirg W_d; V_1 \vvirg V_d)$ is generically stable.
\end{proof}

We generally believe that $\dim \Stab_{\calG_{\Gamma,\bfm}}(X) = 0$ in ``most cases'' at least when the bond dimensions are balanced. However, we cannot extend the argument of Proposition \ref{prop: MPS have stable action} to the general case. Instead, we further localize the action, reembedding the gauge subgroup $\calG_{\Gamma,\bfm}$ in the group $H = \Im(\Psi)$, where $\Psi$ is the map described in Section  \ref{sec: gauge section}. This will allow us to use results on the stability of the action on tensor spaces which in turn guarantee the stability of the action of $\calG_{\Gamma,\bfm}$.

Since $\calG_{\Gamma,\bfm} \subseteq H$, clearly $\Stab_{\calG_{\Gamma,\bfm}} (X) \subseteq \Stab_{H} (X) $. Therefore, if the action of $H$ on the space $\Hom(W_1 \vvirg W_d , V_1 \vvirg V_d)$ is generically stable, then the action of $\calG_{\Gamma,\bfm}$ on the space $\Hom(W_1 \vvirg W_d , V_1 \vvirg V_d)$ is generically stable, as well.

We establish the following result, whose proof is immediate from the product structure of $\Hom(W_1 \vvirg W_d, V_1 \vvirg V_d)$ and of $X$.
\begin{lemma}
If $X = X_1 \ootimes X_d$ then
\[
 \Stab_{H} (X) = \bigtimes_{v \in \bfv(\Gamma)} \Stab_{H_v} (X_v).
\]
In particular, the action of $H$ on $\Hom(W_1 \vvirg W_d , V_1 \vvirg V_d)$ is generically stable if and only if for every $v$ the action of $H_v$ on $\Hom(W_v,V_v)$ is generically stable.
\end{lemma}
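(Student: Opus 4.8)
The plan is to exploit that $\Hom(W_1\vvirg W_d;V_1\vvirg V_d)$ is the affine cone over the Segre product of the projective spaces $\bbP\Hom(W_v,V_v)$, so that any point with all factors nonzero is a decomposable tensor $X=X_1\ootimes X_d$, and that $H$ acts through the pairwise commuting subgroups $H_v\subseteq H$, with $H_v$ affecting only the $W_v$-slot. I would represent a class in $H$ by a tuple $(g_v)_{v\in\bfv(\Gamma)}\in\bigtimes_vH_v$ and unwind the stabilizer condition in these terms.

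First: $(g_v)_v$ fixes $X$ if and only if $X_1g_1\ootimes X_dg_d=X_1\ootimes X_d$. Since a nonzero decomposable tensor determines its factors up to scaling, this holds exactly when there are scalars $\lambda_v\in\bbC^*$ with $\prod_v\lambda_v=1$ and $X_vg_v=\lambda_vX_v$ for every $v$. The inclusion $\bigtimes_v\Stab_{H_v}(X_v)\subseteq\Stab_H(X)$ is then immediate. For the reverse inclusion, the key point is that the global scaling ambiguity $(\lambda_v)_v$ with $\prod_v\lambda_v=1$ is precisely what the central subgroup $\bigtimes_eZ_e$ — the kernel of $\Psi$, by which $H$ was defined as a quotient — cancels: each $H_v\subseteq H$ contains the scalars $\bbC^*\Id_{W_v}$, and $H$ acts faithfully on $W_1\ootimes W_d$. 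So, given $(g_v)_v$ fixing $X$, one replaces $g_v$ by $g_vs_v$ with $s_v\in H_v$ a scalar so that $X_v(g_vs_v)=X_v$; the correction $\prod_vs_v$ scales $W_1\ootimes W_d$ by $\prod_v\lambda_v^{-1}=1$, hence is trivial in $H$ by faithfulness, so $(g_v)_v$ and $(g_vs_v)_v$ represent the same element of $H$, which now lies in $\bigtimes_v\Stab_{H_v}(X_v)$. This yields $\Stab_H(X)=\bigtimes_v\Stab_{H_v}(X_v)$.

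For the ``in particular'' statement: a direct product of algebraic groups is finite exactly when each factor is, so $\Stab_H(X)$ is finite iff every $\Stab_{H_v}(X_v)$ is finite. Since the multiplication map $\bigtimes_v\Hom(W_v,V_v)\to\Hom(W_1\vvirg W_d;V_1\vvirg V_d)$, $(Y_v)_v\mapsto Y_1\ootimes Y_d$, is dominant, a generic $X$ has generic factors $X_v$, and $\dim\Stab_{H_v}(X_v)$ depends only on the line $[X_v]$; combining, the action of $H$ on $\Hom(W_1\vvirg W_d;V_1\vvirg V_d)$ is generically stable if and only if for every $v$ the action of $H_v$ on $\Hom(W_v,V_v)$ is generically stable.

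The one delicate point is the middle step — making the cancellation of the scalars $\lambda_v$ against the central torus $\bigtimes_eZ_e$ precise, i.e.\ checking that replacing the tensor $X$ by the tuple of its projective factors converts the $H$-action into the product of the $H_v$-actions with no residual identifications beyond those already built into $H=\bigtimes_vH_v/\bigtimes_eZ_e$. This is exactly the ``product structure'' invoked in the statement, and once it is in place the rest is formal.
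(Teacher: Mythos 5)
Your proof is correct and is in substance the paper's own argument: the paper offers no proof beyond the remark that the lemma is ``immediate from the product structure of $\Hom(W_1 \vvirg W_d; V_1 \vvirg V_d)$ and of $X$'', and your write-up is exactly the elaboration of that remark. The one genuinely non-formal point --- absorbing the scalars $\lambda_v$ with $\prod_v\lambda_v=1$ using that each $H_v$ contains $\bbC^*\Id_{W_v}$ and that $H$ sits inside the faithfully acting group $G(W_1\vvirg W_d)$ --- is handled correctly, so nothing further is needed.
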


Now, regard $X_v \in \Hom(W_v,V_v)$ as a tensor in $V_v \otimes W_v ^* = V_v \otimes \left( \bigotimes_{e \ni v} U_e' \right)$ where $U'_e = U_e$ or $U'_e = U_e^*$ depending on whether $U_e$ of $U_e^*$ appears in $W_v$. The group $H_v$ acts trivially on $V_v$; by Lemma \ref{lemma: intersection vs direct sum}, we deduce that $\Stab_{H_v} (X_v)$ is the point-wise stabilizer of $\Im X_v (V_v^*) \subseteq  \bigotimes_{e \ni v} U_e'$. In particular, if $X_v$ is generic, by linearly, $\Stab_{H_v} (X_v)$ is the simultaneous stabilizer of $n_v$ elements of $W_v^*$.

Therefore, we are reduced to study the stability of the action of a product of special linear groups $\SL(U_1) \ttimes \SL(U_k)$ on the space $U_1 \ootimes U_k \otimes V$. The study of the stability of this action is characterized in the recent \cite{DerMak:MLEMatrix,DerMakWal:MLETensors} and in the special case where $\dim V = 1$ it is characterized in \cite{BryReiVRaa:ExistenceLocMaxEntStatesViaGIT}. In particular, the following result leads to the proof of Corollary \ref{cor: PEPS}, and to a wide range of generalizations.

\begin{proposition}\label{prop: stable if deg high enough}
Let $k \geq 3$ and consider vector spaces $U_1 \vvirg U_k,V$ with $\dim U_\alpha = m$, $\dim V = n$. The action of $\SL(U_1) \ttimes \SL(U_k) $ on $U_1 \ootimes U_k \otimes V$ is generically stable unless $(k,m,n) = (3,2,1)$.
\end{proposition}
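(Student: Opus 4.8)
The plan is to recast generic stability as the vanishing of a generic isotropy Lie algebra, settle the exceptional triple by a dimension count, and obtain the positive statement from the stability criteria of the cited references, after an elementary reduction to the case of three tensor factors.

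\textbf{Reformulation and the exceptional case.} Set $G=\SL(U_1)\ttimes\SL(U_k)$ and, for $T\in U_1\ootimes U_k\otimes V$, let $\fraks_T\subseteq\fraksl(U_1)\ooplus\fraksl(U_k)$ denote the Lie algebra of $\Stab_G(T)$, so $\dim\Stab_G(T)=\dim\fraks_T$. Since $T\mapsto\dim\fraks_T$ is upper semicontinuous, generic stability is equivalent to the existence of a single $T$ with $\fraks_T=0$. Conversely, if $\dim G=k(m^2-1)$ exceeds $\dim(U_1\ootimes U_k\otimes V)=m^kn$, then $\dim\fraks_T\ge k(m^2-1)-m^kn>0$ for every $T$; this happens precisely for $(k,m,n)=(3,2,1)$, where $9>8$, so this action is indeed not generically stable. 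For all other triples with $k\ge3$ one checks $k(m^2-1)\le m^kn$, so the dimension count is consistent with generic stability, and it remains to prove it.

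\textbf{Reduction to $k=3$ and conclusion.} We induct on $k$. Suppose $k\ge4$, $(k,m,n)\neq(3,2,1)$, and let $T$ be generic. For a subset $I\subseteq\{1\vvirg k\}$ with $3\le|I|\le k-1$, Lemma \ref{lemma: partial lie algebra} (in its evident $\fraksl$-version), applied to the flattening $F_T\colon\bigl(\bigotimes_{\beta\notin I}U_\beta^*\bigr)\otimes V^*\to\bigotimes_{\alpha\in I}U_\alpha$, identifies $\fraks_T\cap\bigl(\bigoplus_{\alpha\in I}\fraksl(U_\alpha)\bigr)$ with the simultaneous isotropy, in $\bigoplus_{\alpha\in I}\fraksl(U_\alpha)$, of a basis of $\Im F_T$; for generic $T$ this $\Im F_T$ is a generic subspace of $\bigotimes_{\alpha\in I}U_\alpha$ of dimension $\min\{m^{k-|I|}n,\,m^{|I|}\}$, so the intersection is either $0$ (when $\Im F_T$ is everything, since no nonzero element of $\bigoplus_{\alpha\in I}\fraksl(U_\alpha)$ annihilates every tensor of $\bigotimes_{\alpha\in I}U_\alpha$) or the generic isotropy Lie algebra of a tensor in $\bigl(\bigotimes_{\alpha\in I}U_\alpha\bigr)\otimes\bbC^{m^{k-|I|}n}$ under $\prod_{\alpha\in I}\SL(U_\alpha)$, which vanishes by the inductive hypothesis since $(|I|,m,m^{k-|I|}n)\neq(3,2,1)$ (indeed $m^{k-|I|}n\ge2$). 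In particular, taking $|I|=k-1$, each projection $\fraks_T\to\fraksl(U_\alpha)$ is injective; combining this with $\fraks_T\cap\fraksl(U_\alpha)=0$ (conciseness and Lemma \ref{lemma: partial lie algebra} with $I=\{\alpha\}$) and a short argument comparing, via the $\alpha$-th flattening, the action of a hypothetical nonzero $\xi\in\fraks_T$ on the $m$-dimensional image with its action on the complementary factors, forces $\fraks_T=0$. Thus the statement for $k\ge4$ follows once the case $k=3$ is known, and for $k=3$ with $(m,n)\neq(2,1)$ it is precisely the stability criterion for the $\SL(U_1)\times\SL(U_2)\times\SL(U_3)$-action on $U_1\otimes U_2\otimes U_3\otimes V$ established in \cite{DerMak:MLEMatrix,DerMakWal:MLETensors} (and in \cite{BryReiVRaa:ExistenceLocMaxEntStatesViaGIT} when $n=1$): one verifies that the numerical condition there is met for every $(m,n)\neq(2,1)$ and fails only for $(2,1)$.

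\textbf{Main obstacle.} The crux is the three-factor case, i.e.\ the assertion that $(3,2,1)$ is the \emph{only} failure of generic stability for $\SL_m^{\times3}$ on $(\bbC^m)^{\otimes3}\otimes\bbC^n$: proving that the crude inequality $\dim G\le m^kn$ already forces generic stability there requires ruling out sporadic degenerations, which is exactly what the machinery of \cite{DerMak:MLEMatrix,DerMakWal:MLETensors,BryReiVRaa:ExistenceLocMaxEntStatesViaGIT} provides (for small $m$ one may instead appeal to the classical description of these $\theta$-group/castling situations — pencils of $2\times2$ matrices for $m=2$, Vinberg's $(\bbC^3)^{\otimes3}$ for $m=3,n=1$). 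The reduction from $k\ge4$, by contrast, is elementary given Lemma \ref{lemma: partial lie algebra}; the one point there requiring care is the final deduction $\fraks_T=0$ from the injectivity of the projections $\fraks_T\to\fraksl(U_\alpha)$.
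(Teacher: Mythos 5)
Your handling of the exceptional triple (the dimension count $9>8$) and your appeal to the stability criteria of \cite{DerMak:MLEMatrix,DerMakWal:MLETensors,BryReiVRaa:ExistenceLocMaxEntStatesViaGIT} match the paper's proof; the difference is that the paper invokes \cite[Theorem 1.5 (Case 4)]{DerMakWal:MLETensors} directly for \emph{all} $k\geq 3$, checking the single inequality $m\leq\tfrac{1}{2}m^{k-1}n$, whereas you restrict the citation to $k=3$ and interpose an induction on $k$. That induction is where the gap lies. From Lemma \ref{lemma: partial lie algebra} applied to the subsets $I$ with $|I|=k-1$ you correctly conclude that every nonzero $\xi\in\fraks_T$ has all components $\xi_\alpha$ nonzero, and that $\fraks_T$ meets each $\bigoplus_{\alpha\in I}\fraksl(U_\alpha)$ trivially. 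But these facts do not force $\fraks_T=0$: they are perfectly consistent with $\fraks_T$ being a nonzero subalgebra embedded ``diagonally'' across the factors (formally, the image of $\xi\mapsto(\xi,\dots,\xi)$ meets no $\bigoplus_{\alpha\in I}\fraksl(U_\alpha)$ for proper $I$, yet is nonzero). The ``short argument comparing, via the $\alpha$-th flattening, the action on the $m$-dimensional image with the action on the complementary factors'' is exactly the missing content: one must show that a generic $m$-plane $T(U_\alpha^*)\subseteq\bigotimes_{\beta\neq\alpha}U_\beta\otimes V$ is preserved by no nonzero element of the image of $\bigoplus_{\beta\neq\alpha}\fraksl(U_\beta)$, and because such elements have very large eigenspaces the naive incidence count does not close; a stratification over their Jordan types is needed. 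As written, the induction does not terminate in a proof.

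The gap sits entirely inside material the argument does not need. The theorem you cite for $k=3$ is stated in \cite{DerMakWal:MLETensors} for an arbitrary number of tensor factors, so the correct (and the paper's) route is to discard the reduction and verify the numerical hypothesis $m\leq\tfrac{1}{2}m^{k-1}n$ for every $k\geq3$, excluding $(3,2,1)$ by the dimension count you already gave. One small additional caution: your reading of the $k=3$ numerical condition (``met for every $(m,n)\neq(2,1)$ and fails only for $(2,1)$'') should be checked against the source, since for $(k,m,n)=(3,2,1)$ the inequality $m\leq\tfrac{1}{2}m^{k-1}n$ is in fact satisfied with equality; the paper therefore treats $(3,2,1)$ separately by the orbit-dimension argument rather than by a failure of the numerical condition.
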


\begin{proof}
The case $(k,m,n) = (3,2,1)$ corresponds to the action of $\SL_2 \times \SL_2 \times \SL_2$ on $\bbC^2 \otimes \bbC^2 \otimes \bbC^2$; this is not stable since 
\[
9 = \dim (\SL_2 \times \SL_2 \times \SL_2) > \dim \bbC^2 \otimes \bbC^2 \otimes \bbC^2 = 8. 
\]
Except for this case, the result follows from \cite[Theorem 1.5 (Case 4)]{DerMakWal:MLETensors}, since the inequality $m \leq \frac{1}{2} m^{k-1} n$ is always verified.
 \end{proof}

\subsection{Critical case}

We conclude this section showing that the dimension of the tensor network variety in the critical case equals the upper bound of Corollary \ref{corol: upper bound with Stab}; moreover, in this case, $\dim \Stab_{\calG_{\Gamma,\bfm}} (X) = 0$. As a consequence, via Theorem \ref{thm: reduction supercritical}, we obtain the equality in the supercritical range, which completes the proof of Corollary \ref{cor: supercritical exact}.

\begin{proposition}
Let $(\Gamma,\bfm,\bfn)$ be a supercritical tensor network. Write $N_v = \prod_{e \ni v} m_e$. Then
\[
 \dim \calTNS^\Gamma_{\bfm , \bfn} = \sum_{v \in \bfv(\Gamma)} n_v N_v - d +1  - \sum_{e \in \bfe(\Gamma)} (m_e^2 - 1).
\]
\end{proposition}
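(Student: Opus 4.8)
The plan is to reduce to the critical case via Theorem \ref{thm: reduction supercritical}, and in the critical case to identify the tensor network variety with the closure of a single $G(V_1 \vvirg V_d)$-orbit, whose dimension is pinned down by Corollary \ref{corol: lie algebra for graph tensors}.

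First I would apply Theorem \ref{thm: reduction supercritical}. Since $(\Gamma,\bfm,\bfn)$ is supercritical, $N_v = \prod_{e \ni v} m_e \leq n_v$ for every $v$, so the reduced local dimensions are $n'_v = \min\{N_v,n_v\} = N_v$, and with respect to $\bfn'$ every vertex is critical. Theorem \ref{thm: reduction supercritical} then gives
\[
 \dim \calTNS^\Gamma_{\bfm,\bfn} = \sum_{v \in \bfv(\Gamma)} N_v(n_v - N_v) + \dim \calTNS^\Gamma_{\bfm,\bfn'},
\]
so the problem reduces to proving, in the critical case, that $\dim \calTNS^\Gamma_{\bfm,\bfn'} = \sum_{v} N_v^2 - d + 1 - \sum_e (m_e^2-1)$; substituting this and using $N_v^2 + N_v(n_v - N_v) = N_v n_v$ gives the asserted formula.

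For the critical case I would argue as follows. Here $\dim W_v = N_v = n'_v = \dim V_v$, so a generic $X_v \in \Hom(W_v,V_v)$ is an isomorphism, and hence the $d$-tuples $X_1 \ootimes X_d$ that are simultaneously isomorphisms $W_1 \ootimes W_d \to V_1 \ootimes V_d$ form a dense open subset of the irreducible variety $\Hom(W_1 \vvirg W_d; V_1 \vvirg V_d)$. Fixing reference isomorphisms $\iota_v : W_v \to V_v$ and writing $T' := (\iota_1 \ootimes \iota_d)\cdot T(\Gamma,\bfm) \in V_1 \ootimes V_d$, one has $\Phi(g_1\iota_1 \ootimes g_d \iota_d) = (g_1 \ootimes g_d)\cdot T'$ for $g_v \in \GL(V_v)$, so $\Phi$ sends this dense locus onto the orbit $G(V_1 \vvirg V_d)\cdot T'$, and taking closures yields $\calTNS^\Gamma_{\bfm,\bfn'} = \overline{G(V_1 \vvirg V_d)\cdot T'}$. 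Its dimension equals $\dim G(V_1 \vvirg V_d) - \dim \frakg_{T'}$; now $\dim G(V_1 \vvirg V_d) = \sum_v (\dim V_v)^2 - d + 1 = \sum_v N_v^2 - d + 1$, and since conjugation by $\iota_1 \ootimes \iota_d$ identifies $\frakg_{T'}$ with $\frakg_{T(\Gamma,\bfm)}$, Corollary \ref{corol: lie algebra for graph tensors} gives $\dim \frakg_{T'} = \dim \bfg_{\Gamma,\bfm} = \sum_e (m_e^2-1)$. This is exactly the critical-case formula; as a byproduct, the upper bound of Corollary \ref{corol: upper bound with Stab} is attained, so $\dim \Stab_{\calG_{\Gamma,\bfm}}(X) = 0$, confirming the claim made at the start of this subsection.

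I do not anticipate a genuinely hard step: the substantive input, Corollary \ref{corol: lie algebra for graph tensors}, is already established. The points that require care are purely organizational — checking that the reduction of Theorem \ref{thm: reduction supercritical} lands exactly on a critical network (immediate from $N_v \leq n_v$), verifying that the locus of simultaneous isomorphisms is dense so that one may pass to closures, and carrying out the final arithmetic. If one prefers to avoid the orbit-closure identification, the critical-case formula can instead be obtained by combining the upper bound of Corollary \ref{corol: upper bound with Stab} (noting there are no overabundant bond dimensions in the critical case, and that $\dim \Stab_{\calG_{\Gamma,\bfm}}(X) = 0$ by the faithfulness argument above) with the matching lower bound coming from the inclusion $\overline{G(V_1 \vvirg V_d)\cdot T'} \subseteq \calTNS^\Gamma_{\bfm,\bfn'}$.
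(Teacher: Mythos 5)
Your proposal is correct and follows essentially the same route as the paper: the paper likewise handles the critical case by identifying $\calTNS^\Gamma_{\bfm,\bfn'}$ with the orbit closure $\overline{G(W_1 \vvirg W_d)\cdot T(\Gamma,\bfm)}$, computes its dimension as $\dim G(W_1\vvirg W_d) - \dim\calG_{\Gamma,\bfm}$ using Corollary \ref{corol: lie algebra for graph tensors}, and then applies Theorem \ref{thm: reduction supercritical} to pass to the general supercritical case. Your write-up merely makes explicit (via the reference isomorphisms $\iota_v$ and the density of the locus of invertible $X_v$) steps the paper leaves implicit.
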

\begin{proof}
 First consider the critical case, that is $N_v = n_v$. In this case, a generic $X_v \in \Hom(W_v,V_v)$ is invertible. Therefore
 \begin{align*}
  \dim \calTNS^\Gamma_{\bfm,\bfn} = &\dim G(W_1 \vvirg W_d) \cdot T(\Gamma,\bfm) = \\ 
  &\dim G(W_1 \vvirg W_d) - \dim \calG_{\Gamma,\bfm} = \sum_{v \in \bfv(\Gamma)} N_v^2  - d +1  - \sum_{e \in \bfe(\Gamma)} (m_e^2 - 1).  
 \end{align*}
  In the supercritical case, we apply Theorem \ref{thm: reduction supercritical}. Write $\bfN = (N_v : v \in \bfv(\Gamma))$, so that the tensor network $(\Gamma,\bfm,\bfN)$ is critical. Then 
  \begin{align*}
   \dim \calTNS^\Gamma_{\bfm,\bfn} =    &\dim \calTNS^\Gamma_{\bfm,\bfN} + \sum_{v \in \bfv(\Gamma)} N_v (n_v - N_v) = \\
 & \sum_{v \in \bfv(\Gamma)} N_v^2  - d +1  - \sum_{e \in \bfe(\Gamma)} (m_e^2 - 1) + \sum_{v \in \bfv(\Gamma)} N_v (n_v - N_v) = \\
 & \sum_{v \in \bfv(\Gamma)} n_v N_v - d +1  - \sum_{e \in \bfe(\Gamma)} (m_e^2 - 1). \qedhere 
  \end{align*}
\end{proof}

\section{Analysis of small cases}\label{sec: small cases}

In this section, we analyze few cases of tensor network varieties for small graphs and small bond dimension.

If $\Gamma$ only contains two vertices, then the tensor network variety is easily described as a variety of matrices whose rank is bounded from above by the bond dimension of the unique edge.

We start our analysis with the case of three vertices.

\subsection{Triangular graph}

The graph tensor associated to the triangular graph is the matrix multiplication tensor. This is the object of a rich literature, devoted to determining the value of the exponent of matrix multiplication. We refer to \cite{BlaserNotes,Lan:GeometryComplThBook} for an overview on the subject.

Let $C_3$ be the triangular graph. Write $\{1,2,3\}$ for the three vertices and $m_{12},m_{23},m_{31}$ for the three bond dimensions and $(n_1,n_2,n_3)$ for the three local dimensions, ordered as follows:
\[
\begin{tikzpicture}[scale=1.5]
\draw (0,1.5)-- (0,0);
\draw (0,0)-- (1.5,0);
\draw (0,1.5)-- (1.5,0);
\draw (.75,0) node[anchor=north] {$m_{12}$};
\draw (0.75,0.75) node[anchor=south west] {$m_{23}$};
\draw (0,.75) node[anchor=east] {$m_{31}$};
\draw[fill=black] (0,0) circle (.1cm);
\draw[fill=black] (1.5,0) circle (.1cm);
\draw[fill=black] (0,1.5) circle (.1cm);
\draw (0,1.6) node[anchor=south] {$n_3$};
\draw (1.6,0) node[anchor=west] {$n_2$};
\draw (0,0) node[anchor=north east] {$n_1$};
\end{tikzpicture}
\]

If $\bfm = (m_{12},m_{23},m_{31}) = (a,b,1)$ (in other words, the edge $\{3,1\}$ is erased) then every tensor in $W_1 \otimes W_2 \otimes W_3$ is a restriction of the graph tensor. In particular, if $\bfn = (n_1,n_2,n_3)$ with $n_1 \leq a$, $n_2 \leq ab$, $n_3 \leq b$, then
\[
 \calTNS^{C_3}_{\bfm,\bfn} = V_1 \otimes V_2 \otimes V_3.
\]
Therefore, the first interesting case is the one with bond dimensions $\bfm = (2,2,2)$. We record the cases in the subcritical range in Table \ref{table: bond 222}. For each of these cases, the lower bound for the dimension is obtained computing explicitly the rank of the differential of the parametrization map $\Phi$ of Section \ref{section: def and prelim} at a random point. We perform this calculation in Macaulay2 \cite{M2}. The scripts performing the calculation are available at \link.

\begin{table}
\[
\begin{array}{cc|c|c}
&\bfn & \text{lower bound} & \text{upper bound }  \\ \midrule 
&(2,2,2) & 8 & 8 \\
&(2,2,3) & 12 & 12 \\
&(2,2,4) & 16 & 16 \\
&(2,3,3) & 18 & 18 \\
{}^*&(2,3,4) & 22 & 24 \\
{}^*&(2,4,4) & 26 & 29 \\
&(3,3,3) & 25 & 25 \\
&(3,3,4) & 29 & 29 \\
&(3,4,4) & 31 & 31 \\
&(4,4,4) & 37 & 37 
\end{array} \]
\caption{Upper and lower bound for $\dim \calTNS^{C_3}_{\bfm,\bfn}$. The lower bound is obtained via a direct calculation. The upper bound is the value obtained in Corollary \ref{cor: MPS}. In the cases marked with ${}^*$ the two bounds do not coincide.}\label{table: bond 222}
\end{table}
Since the point to compute the differential is chosen at random, we are confident that the number recorded as a lower bound is equal to the actual dimension of the tensor network variety $\calTNS^{C_3}_{\bfm,\bfn}$. However, from a formal point of view, the sole calculation of the rank of the differential at a random point does not provide a complete proof.

The only cases where the lower bound does not match the upper bound given in Corollary \ref{cor: MPS} are the ones with $\bfn = (2,3,4)$ and $\bfn = (2,4,4)$. In these cases, we prove that the dimension of the tensor network variety equals the lower bound of Table \ref{table: bond 222}. We provide the following result, that we prove in general and will be used in Theorem \ref{thm: 234 and 244} in the cases $(a,b,r) = (3,4,2)$ and $(a,b,r) = (4,4,2)$.
\begin{lemma}\label{lemma: variety of secant lines}
Let $V_1,V_2,V_3$ be vector spaces with $\dim V_1 = 2$, $\dim V_2 = a$, $\dim V_3 = b$. Let $\sigma_r \subseteq \bbP ( V_2 \otimes V_3)$ be the variety of elements of rank at most $r$. Define 
 \[
  \calZ_{a,b,r} = \bar{\Bigl \{ T \in V_1 \otimes V_2 \otimes V_3 : T(V_1^*) \cap \sigma_r \text{ contains at least two points} \Bigr\}} \subseteq \bbP (V_1 \otimes V_2 \otimes V_3)
 \]
Then $\calZ_{a,b,r}$ is an irreducible variety and 
\[
 \dim \calZ_{a,b,r} = 2r(a+b-r)+1 .
\]
\end{lemma}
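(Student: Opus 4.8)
The plan is to realize $\calZ_{a,b,r}$ as the closure of the image of an explicit rational parametrization and then read off its dimension from the theorem on dimension of fibers. Identify $V_1\otimes V_2\otimes V_3$ with $\Hom(V_1^*,V_2\otimes V_3)$, so that $T$ determines the subspace $T(V_1^*)\subseteq V_2\otimes V_3$ of dimension $\le 2$; writing a rank-$\le r$ element of $V_2\otimes V_3$ as $\sum_{i=1}^r Ae_i\otimes Be_i$ with $A\in\Hom(\bbC^r,V_2)$, $B\in\Hom(\bbC^r,V_3)$, I consider
\[
\Theta\colon \GL(V_1)\times\bigl[\Hom(\bbC^r,V_2)\times\Hom(\bbC^r,V_3)\bigr]^{\times 2}\dashrightarrow \bbP(V_1\otimes V_2\otimes V_3),\qquad (g,A_1,B_1,A_2,B_2)\mapsto \bigl[g e_1\otimes M_1+g e_2\otimes M_2\bigr],
\]
where $e_1,e_2$ is a fixed basis of $V_1$ and $M_j=\sum_i A_je_i\otimes B_je_i$. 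The source is irreducible (a product of $\GL(V_1)$ and affine spaces), so the closure of its image is irreducible. One checks this closure equals $\calZ_{a,b,r}$: for a generic point of the source $ge_1,ge_2$ is a basis, $M_j$ has rank $r$ and $[M_1]\neq[M_2]$, so the tensor $T$ has $T(V_1^*)=\langle M_1,M_2\rangle$ meeting $\sigma_r$ in the two distinct points $[M_1],[M_2]$; conversely every $T$ with $\dim T(V_1^*)=2$ and at least two distinct points in $\bbP(T(V_1^*))\cap\sigma_r$ is of this form (split $T=f_1\otimes M_1+f_2\otimes M_2$ along two such points and factor each $M_j$). Hence $\calZ_{a,b,r}$ is irreducible.

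For the dimension, the source has dimension $4+2r(a+b)$. I claim the generic fiber of $\Theta$ has dimension $2r^2+3$. Over a generic $T\in\calZ_{a,b,r}$ the plane $L:=T(V_1^*)$ is $2$-dimensional and $\bbP(L)\cap\sigma_r=\{[P_1],\dots,[P_k]\}$ is a finite set of points of rank exactly $r$; a preimage of $[T]$ is determined by an ordered pair of distinct $P_j$'s (a finite choice), scalars $c_1,c_2\in\bbC^*$ rescaling them, a free $\GL_r$-orbit of factorizations $c_jP_j=\sum_i Ae_i\otimes Be_i$ for each (dimension $r^2$, using that $P_j$ has rank $r$), and a residual scalar $\lambda\in\bbC^*$, after which $g\in\GL(V_1)$ is forced by $ge_\iota=(\lambda/c_\iota)f_\iota$ where $T=\sum_\iota f_\iota\otimes P_{j_\iota}$. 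This gives the claim, and the fiber theorem yields $\dim\calZ_{a,b,r}=4+2r(a+b)-(2r^2+3)=2r(a+b-r)+1$. Note that the inequality $\le$ is unconditional (the fiber always contains these degrees of freedom, and could only be larger), so the substantive content is the matching lower bound, which needs the finiteness and the rank-$r$ statement for $\bbP(L)\cap\sigma_r$ over a generic $T$.

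The main obstacle is therefore exactly this last point: for generic $T\in\calZ_{a,b,r}$, equivalently for a generic secant line of $\sigma_r$, the intersection $\bbP(T(V_1^*))\cap\sigma_r$ is finite and meets $\sigma_{r-1}$ in the empty set. (Throughout one assumes $r<\min\{a,b\}$, so that $\sigma_r$ is a proper, non-linear subvariety; otherwise $\sigma_r=\bbP(V_2\otimes V_3)$ and the statement is trivial.) The cleanest way to get the lower bound is to exhibit a single tensor realizing the generic fiber dimension: take $T_0=e_1\otimes M_1+e_2\otimes M_2$ with $M_1,M_2$ generic of rank exactly $r$. Then $\alpha M_1+\beta M_2$ factors as $[A_1\,|\,A_2]\,[\alpha B_1\,|\,\beta B_2]^{\mathsf T}$, so its generic rank is $\min\{2r,a,b\}>r$; hence $\bbP(\langle M_1,M_2\rangle)\cap\sigma_r$ is finite, and since its rank drops exactly to $r$ only where it equals $\min\{2r,a,b\}$ fails it never reaches $\sigma_{r-1}$. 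By upper semicontinuity of fiber dimension, the existence of one point with fiber dimension $2r^2+3$ forces $\dim\calZ_{a,b,r}\ge 2r(a+b-r)+1$, completing the proof. I expect this genericity verification (that the pencil spanned by two general rank-$r$ matrices meets $\sigma_r$ only in rank-$r$ points) to be the routine-but-delicate part; alternatively one can invoke the standard facts that a non-degenerate, non-linear irreducible projective variety has only finitely many points on its generic secant line (else $\sigma_r$ would be a cone with a general point as vertex, hence singular there, contradicting $\mathrm{Sing}(\sigma_r)=\sigma_{r-1}\subsetneq\sigma_r$) and that a generic secant line of $\sigma_r$ avoids $\sigma_{r-1}$ by the dimension count $\dim\sigma_{r-1}+\dim\sigma_r<2\dim\sigma_r$, which holds since $a+b>2r$.
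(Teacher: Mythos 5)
Your proof is correct, but it takes a genuinely different route from the paper's. The paper introduces the variety of secant lines $\calS_{a,b,r} \subseteq \bfG(2, V_2 \otimes V_3)$, quotes the standard fact that it is irreducible of dimension $2\dim\sigma_r$ (citing Eisenbud--Harris), and then observes that $\calZ_{a,b,r}$ fibers over $\calS_{a,b,r}$ with $3$-dimensional fibers (the choice of the surjection $V_1^* \to L$ up to scale); this fibration is over the line $L$ itself, so its fibers are a copy of $\PGL_2$ no matter how $L$ meets $\sigma_r$, and no genericity about the intersection is ever needed. You instead parametrize all the way down through matrix factorizations, which is more self-contained (no citation of the secant-variety dimension formula) but forces you to verify that the generic secant line meets $\sigma_r$ in finitely many points, all of rank exactly $r$ --- precisely the ``routine-but-delicate'' step you flag. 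Your fiber count $2r^2+3$ and the resulting arithmetic are right, and the logic (every relevant fiber is $\geq 2r^2+3$, one explicit fiber is exactly $2r^2+3$) is sound. The one weak spot is the sentence ``since its rank drops exactly to $r$ only where it equals $\min\{2r,a,b\}$ fails it never reaches $\sigma_{r-1}$,'' which as written is not an argument; but your fallback is fine: the rank computation $\min\{2r,a,b\}>r$ gives generic finiteness, hence $\dim\calS_{a,b,r}=2\dim\sigma_r$, and then the count $\dim\sigma_r+\dim\sigma_{r-1}<2\dim\sigma_r$ (valid since $2r<a+b$) shows the generic secant line avoids $\sigma_{r-1}$, because any secant line meeting $\sigma_{r-1}$ lies in the image of $\sigma_r\times\sigma_{r-1}$ under the join map. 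A genuine contribution of your write-up is making explicit the hypothesis $r<\min\{a,b\}$, without which the stated dimension formula fails (e.g.\ $\sigma_r$ becomes all of $\bbP(V_2\otimes V_3)$); the paper leaves this implicit and only uses the lemma with $r=2$ and $a,b\in\{3,4\}$.
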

\begin{proof}
 Define the variety of secant lines
 \[
  \calS_{a,b,r} = \bar{\Bigl \{ L \in \bfG(2,V_2 \otimes V_3) : \bbP L \cap \sigma_r \text{ contains at least two points} \Bigr\}} \subseteq \bfG(2,V_2 \otimes V_3),
 \]
 where $\bfG(2,V_2 \otimes V_3)$ denotes the Grassmannian of $2$-planes in $V_2 \otimes V_3$.
 
Then $\calS_{a,b,r}$ is an irreducible variety of dimension $2 \dim \sigma_r = 2[r(a+b-r)-1]$ \cite[Section 10.3]{EisHar:3264}.

The variety $\calZ_{a,b,r}$ is an $\SL(V_1)$-bundle on $\calS_{a,b,r}$. This guarantees that $\calZ_{a,b,r}$ is irreducible and provides $\dim \calZ_{a,b,r} = \dim \calS_{a,b,r} +3 = 2[r(a+b-r)-1]+3 = 2r(a+b-r)+1$ as desired.
\end{proof}

We can characterize some small instances of $\calTNS^{C_3}_{\bfm,\bfn}$ in terms of the varieties $\calZ_{a,b,r}$ introduced in Lemma \ref{lemma: variety of secant lines}. If $Y \subseteq \bbP W$ is a projective variety, let $\hat{Y}$ denote its affine cone in the space $W$.

\begin{theorem}\label{thm: 234 and 244}
Let $\bfm =(2,2,2)$:
\begin{itemize}
 \item if $\bfn=(2,3,4)$ then $\calTNS^{C_3}_{\bfm,\bfn} = \hat{\calZ}_{3,4,2}$; in particular $\dim \calTNS^{C_3}_{\bfm,\bfn} = 22$;
 \item if $\bfn=(2,4,4)$ then $\calTNS^{C_3}_{\bfm,\bfn} = \hat{\calZ}_{4,4,2}$; in particular $\dim \calTNS^{C_3}_{\bfm,\bfn} = 26$.
\end{itemize}
\end{theorem}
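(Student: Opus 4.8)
The plan is to prove the two inclusions $\calTNS^{C_3}_{\bfm,\bfn}\subseteq\hat\calZ$ and $\hat\calZ\subseteq\calTNS^{C_3}_{\bfm,\bfn}$, where $\calZ$ is $\calZ_{3,4,2}$ or $\calZ_{4,4,2}$ according to the case; the dimension statements then follow from Lemma \ref{lemma: variety of secant lines}, since $\dim\hat\calZ=\dim\calZ+1$ equals $2\cdot 2\cdot(3+4-2)+2=22$, respectively $2\cdot 2\cdot(4+4-2)+2=26$. Recall that $T(C_3,(2,2,2))$ is the $2\times 2$ matrix multiplication tensor $\MaMu{2,2,2}\in W_1\otimes W_2\otimes W_3$ with each $W_i\cong\bbC^2\otimes\bbC^2\cong\Mat_2$, and that for $T=(X_1\otimes X_2\otimes X_3)\cdot\MaMu{2,2,2}$ the flattening $T_1\colon V_1^*\to V_2\otimes V_3$ factors as $(X_2\otimes X_3)\circ(\MaMu{2,2,2})_1\circ X_1^{T}$, where $(\MaMu{2,2,2})_1\colon W_1^*\to W_2\otimes W_3$ is the first flattening of $\MaMu{2,2,2}$.

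I would first record the structure of $(\MaMu{2,2,2})_1$. Identifying $W_1^*\cong\Mat_2$, the trilinear form of $\MaMu{2,2,2}$ is $(X,Y,Z)\mapsto\Tr(XYZ)$; restricting $X$ to a rank-one matrix $uv^{T}$ makes the form factor through $Y\mapsto Y^{T}v$ and $Z\mapsto Zu$ followed by the standard pairing on $\bbC^2\otimes\bbC^2$, so $(\MaMu{2,2,2})_1(uv^{T})$ is a rank-two tensor lying in $A_v\otimes B_u$, where $A_v\subseteq W_2$ is the (two-dimensional) space of matrices whose columns are multiples of $v$ and $B_u\subseteq W_3$ the space of matrices whose rows are multiples of $u$; in particular $A_{v_1}\cap A_{v_2}=0$ and $B_{u_1}\cap B_{u_2}=0$ whenever $v_1,v_2$ resp. $u_1,u_2$ are linearly independent. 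Granting this, $\calTNS^{C_3}_{\bfm,\bfn}\subseteq\hat\calZ$ is immediate: for generic $X_1$ the pencil $L=X_1^{T}(V_1^*)\subseteq\Mat_2$ contains two distinct rank-one matrices $M_1,M_2$ (the two zeros of $\det$ on $\bbP L$); their images under $(\MaMu{2,2,2})_1$ have rank $\le 2$, and $X_2\otimes X_3$, being of the form $A\otimes B$, cannot increase tensor rank, so $T_1(V_1^*)$ contains two points of $\sigma_2$, which for generic parameters are distinct and span the two-dimensional space $T_1(V_1^*)$; hence $T\in\hat\calZ$, and taking closures gives the inclusion.

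For the reverse inclusion I would realize a generic $T\in\calZ$ as a tensor network state. By the bundle description in Lemma \ref{lemma: variety of secant lines} together with the standard fact that a generic secant line to $\sigma_2$ is spanned by two generic points of $\sigma_2$, a generic $T\in\calZ$ equals, in a suitable basis $e_1,e_2$ of $V_1$, $e_1\otimes S_1+e_2\otimes S_2$ with $S_1,S_2$ generic rank-two tensors, say $S_i$ a full-rank element of $P_i\otimes Q_i$ with $P_i\subseteq V_2$, $Q_i\subseteq V_3$ two-dimensional. On the matrix-multiplication side, choose a generic pencil $L\subseteq\Mat_2\cong W_1^*$ whose two rank-one points $M_i=u_iv_i^{T}$ have $\{u_1,u_2\}$ and $\{v_1,v_2\}$ linearly independent, so that $A_{v_1}\oplus A_{v_2}=W_2$ and $B_{u_1}\oplus B_{u_2}=W_3$; taking $X_1$ with $X_1^{T}(e_i^{*})=M_i$ gives $(X_1\otimes\Id\otimes\Id)\cdot\MaMu{2,2,2}=e_1\otimes\widehat S_1+e_2\otimes\widehat S_2$, where $\widehat S_i=(\MaMu{2,2,2})_1(M_i)$ is a full-rank element of $A_{v_i}\otimes B_{u_i}$. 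Since $\GL_2\times\GL_2$ acts transitively on the full-rank elements of $\bbC^2\otimes\bbC^2$, I can pick isomorphisms $\phi_i\colon A_{v_i}\to P_i$ and $\psi_i\colon B_{u_i}\to Q_i$ with $(\phi_i\otimes\psi_i)(\widehat S_i)=S_i$; because $W_2=A_{v_1}\oplus A_{v_2}$ and $W_3=B_{u_1}\oplus B_{u_2}$, the maps $X_2:=\phi_1\oplus\phi_2$ and $X_3:=\psi_1\oplus\psi_2$ are well-defined elements of $\Hom(W_2,V_2)$, $\Hom(W_3,V_3)$ (surjective since $P_1+P_2=V_2$ and $Q_1+Q_2=V_3$ for generic $S_i$), and $(X_1\otimes X_2\otimes X_3)\cdot\MaMu{2,2,2}=e_1\otimes S_1+e_2\otimes S_2=T$. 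As $\calZ$ is irreducible and $\calTNS^{C_3}_{\bfm,\bfn}$ is closed, $\hat\calZ\subseteq\calTNS^{C_3}_{\bfm,\bfn}$, hence equality, and the dimensions follow.

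The main obstacle is the reverse inclusion, and inside it the small explicit computation with $\MaMu{2,2,2}$ showing that the two-dimensional subspaces of $W_2$ and of $W_3$ attached to the two rank-one points of a generic pencil are in direct sum — this is exactly what lets $X_2$ and $X_3$ be prescribed independently on the two summands. The two cases are treated uniformly: the only numerical difference is that for $\bfn=(2,3,4)$ the planes $P_1,P_2\subseteq\bbC^3$ meet in a line, so $X_2$ is merely surjective rather than an isomorphism, which causes no difficulty. The remaining genericity bookkeeping — that $T_1(V_1^*)$ is honestly two-dimensional with two distinct distinguished points, that the $\GL$-transitivity argument applies, and that a generic point of $\calZ$ genuinely involves two generic points of $\sigma_2$ — is routine.
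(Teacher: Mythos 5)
Your proof is correct, but it follows a genuinely different route from the paper's on both halves. For the inclusion $\calTNS^{C_3}_{\bfm,\bfn}\subseteq\hat{\calZ}_{a,b,2}$, the paper normalizes the three linear maps via the gauge group and $\GL(V_1)\times\GL(V_2)\times\GL(V_3)$ and then verifies with a Macaulay2 script that $\bbP(T(V_1^*))$ meets $\sigma_2$ in two points; you obtain the same conclusion by hand from the structure of the flattening of $\MaMu{2,2,2}$ (a rank-one matrix $uv^T$ maps to a rank-two element of $A_v\otimes B_u$, and $X_2\otimes X_3$ cannot increase rank), which is cleaner and computer-free. For the reverse inclusion, the paper does not argue directly at all: it combines the rigorous lower bound $\dim\calTNS^{C_3}_{\bfm,\bfn}\geq 22$ (resp.\ $26$) from the rank of the differential of $\Phi$ at an explicit point (Table 1) with Lemma \ref{lemma: variety of secant lines} and irreducibility to conclude that the inclusion is an equality. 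You instead exhibit an explicit tensor network representation of a generic point $e_1\otimes S_1+e_2\otimes S_2$ of $\calZ$, using the key observation that for a generic pencil the decompositions $W_2=A_{v_1}\oplus A_{v_2}$ and $W_3=B_{u_1}\oplus B_{u_2}$ let you prescribe $X_2$ and $X_3$ independently on the two summands, together with the transitivity of $\GL_2\times\GL_2$ on full-rank elements of $\bbC^2\otimes\bbC^2$. Your version buys independence from both Macaulay2 computations (and in particular gives an independent confirmation of the two starred entries of Table 1), and it is more informative in that it constructs the representation explicitly; the paper's version is shorter to write given that the differential computations are already needed for the rest of the table. The genericity points you defer (that a generic element of $\calS_{a,b,2}$ is spanned by two generic points of $\sigma_2$, that the two distinguished points of $T(V_1^*)$ are distinct, and that $P_1+P_2=V_2$, $Q_1+Q_2=V_3$) are indeed routine and consistent with the bundle description in the proof of Lemma \ref{lemma: variety of secant lines}.
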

\begin{proof}
The lower bound on the dimension follows from Table \ref{table: bond 222}.

By Lemma \ref{lemma: variety of secant lines}, we have $\calZ_{3,4,2} = 4 \cdot (3+4-2)+1 = 21$ and $\calZ_{4,4,2} = 4 \cdot (4+4-2)+1 = 25$. In the rest of the proof, we show that $\calTNS^{C_3} _{\bfm,(2,3,4)} \subseteq \hat{\calZ}_{3,4,2}$ and $\calTNS^{C_3} _{\bfm,(2,4,4)} \subseteq \hat{\calZ}_{4,4,2}$. 

Fix generic $X_1 ,X_2,X_3$ with $X_j \in \Hom(W_j , V_j)$ and let $T = X_1 \otimes X_2 \otimes X_3 (T(C_3,\bfm))$. Let $L = T(V_1^*) \subseteq V_2 \otimes V_3$. It suffices to show that $\bbP L\cap \sigma_2$ contains at least two points in the two cases of interest.

We are free to normalize the linear maps $X_1,X_2,X_3$ via the action of the gauge subgroup in $\GL(W_1,W_2,W_3)$ and the action of $\GL(V_1) \times \GL(V_2)\times \GL(V_3)$ on $V_1 \otimes V_2 \otimes V_3$.

Identify $X_1$ with a $2 \times 2$ matrix $B_1(v_1^{(1)},v_2^{(1)} )$ whose entries are linear combinations of the elements of a basis $\{ v_1^{(1)} , v_2^{(2)}\}$ of $V_1$ and similarly for $X_2$ and $X_3$. In this way
\[
 X_1 \otimes X_2 \otimes X_3(T(C_3,\bfm)) = \trace \biggl( B_1(v_1^{(1)},v_2^{(1)} ) \cdot B_2(v_1^{(2)}\vvirg v_3^{(2)} )\cdot B_3(v_1^{(3)} \vvirg v_4^{(3)} ) \biggr) .
\]
Write $B_1(v_1^{(1)} , v_2^{(1)}) = B_1^1 v_1^{(1)} + B_1^2 v_2^{(1)}$ and similarly for the other matrices.

By genericity, the map $X_3$ is invertible: using the action of $\GL(V_3)$, we may assume 
\[
B_{3}^{1} = \left(\begin{smallmatrix} 1 & 0 \\ 0 & 0 \end{smallmatrix} \right) \quad
B_{3}^{2} = \left(\begin{smallmatrix} 0 & 1\\ 0 & 0 \end{smallmatrix} \right) \quad
B_{3}^{3} = \left(\begin{smallmatrix} 0 & 0 \\ 1 & 0 \end{smallmatrix} \right) \quad
B_{3}^{4} = \left(\begin{smallmatrix} 0 & 0 \\ 0 & 1 \end{smallmatrix} \right) .
\]
Moreover, the linear space $\langle B_1^1 , B_1^2 \rangle$ contains at least one matrix of rank $1$; using the action of $\GL(V_1)$ and of the gauge group, we may assume $B^1_1 = \left( \begin{smallmatrix} 1 & 0 \\ 0 & 0 \end{smallmatrix}\right)$. 

With these normalizations, it is possible to verify that the line $\bbP (T(V_1^*))$ contains two rank two matrices. We provide a Macaulay2 script determining the intersection $\bbP (T(V_1^*)) \cap \sigma_2$ at \link.

If $\bfn = (2,3,4)$, this shows $\calTNS^{C_3}_{\bfm,\bfn} \subseteq \hat{\calZ}_{3,4,2}$; if $\bfn = (2,4,4)$, this shows $\calTNS^{C_3}_{\bfm,\bfn} \subseteq \hat{\calZ}_{4,4,2}$.

Finally, since $\calTNS^{C_3}_{\bfm,\bfn} \subseteq \hat{\calZ}_{3,4,2}$ and they are both irreducible varieties of dimension $22$, equality holds. Similarly, equality holds in the inclusion $\calTNS^{C_3}_{\bfm,\bfn} \subseteq \hat{\calZ}_{4,4,2}$.
\end{proof}

\subsection{Square graph}

Consider the square graph $C_4$ with local dimensions $\bfn = (n_1 \vvirg n_4)$ and bond dimensions $\bfm = (m_{12},m_{23},m_{34},m_{41})$.
\[
\begin{tikzpicture}[scale=1.5]
\draw (0,1.5)-- (0,0);
\draw (0,0)-- (1.5,0);
\draw (0,1.5)-- (1.5,1.5);
\draw (1.5,0)-- (1.5,1.5);
\draw (.75,0) node[anchor=north] {$m_{12}$};
\draw (1.55,.76) node[anchor=west] {$m_{23}$};
\draw (.75,1.55) node[anchor=south] {$m_{34}$};
\draw (0,.75) node[anchor=east] {$m_{41}$};
\draw[fill=black] (0,0) circle (.1cm);
\draw[fill=black] (1.5,0) circle (.1cm);
\draw[fill=black] (0,1.5) circle (.1cm);
\draw[fill=black] (1.5,1.5) circle (.1cm);
\draw (0,1.5) node[anchor=south east] {$n_4$};
\draw (1.5,1.5) node[anchor=south west] {$n_3$};
\draw (1.5,0) node[anchor=north west] {$n_2$};
\draw (0,0) node[anchor=north east] {$n_1$};
\end{tikzpicture}
\]

We focus on the case where all bond dimensions are equal to $2$. As in the previous section, we record in Table \ref{table: bond 2222} the lower bound obtained computing the differential of the parametrization at a random point and the upper bound obtained via Corollary \ref{cor: MPS}. As before, because of the random choice of point, we are confident that the value recorded as lower bound coincides with the value of $\dim \calTNS^{C_4}_{\bfm,\bfn}$. We provide a formal proof for the case $\bfn = (2,2,2,2)$ in Theorem \ref{thm: 2222}.

\begin{table}
\[
\begin{array}{cc|c|c}
& \bfn & \text{lower bound} & \text{upper bound}  \\ \midrule 
{}^* & (2,2,2,2) & 15 & 16 \\ 
{}^*&      (2,2,2,3) & 20 & 21 \\ 
{}^*&      (2,2,2,4) & 24 & 25 \\ 
&      (2,2,3,3) & 25 & 25 \\ 
&      (2,2,3,4) & 29 & 29 \\ 
&      (2,2,4,4) & 33 & 33 \\ 
{}^*&      (2,3,2,3) & 24 & 25 \\ 
{}^*&      (2,3,2,4) & 28 & 29 \\ 
&      (2,3,3,3) & 29 & 29 \\ 
&      (2,3,3,4) & 33 & 33 \\ 
&      (2,3,4,3) & 33 & 33 \\ 
&      (2,3,4,4) & 37 & 37 \\ 
{}^*&      (2,4,2,4) & 32 & 33 \\ 
&      (2,4,3,4) & 37 & 37 \\ 
&      (2,4,4,4) & 41 & 41 \\ 
&      (3,3,3,3) & 33 & 33 \\ 
&      (3,3,3,4) & 37 & 37 \\ 
&      (3,3,4,4) & 41 & 41 \\ 
&      (3,4,3,4) & 41 & 41 \\ 
&      (3,4,4,4) & 45 & 45 \\ 
&      (4,4,4,4) & 49 & 49 \\ 
      \end{array}
\]
\caption{Upper and lower bound for $\dim \calTNS^{C_4}_{\bfm,\bfn}$. The lower bound is obtained via a direct calculation. The upper bound is the value obtained in Corollary \ref{cor: MPS}. In the cases marked with ${}^*$ the two bounds do not coincide.}\label{table: bond 2222}
\end{table}

\begin{theorem}\label{thm: 2222}
Let $\bfm = (2,2,2,2)$ and $\bfn = (2,2,2,2)$. Then 
\[
 \dim \calTNS^{C_4}_{\bfm,\bfn} = 15;
\]
more precisely $\calTNS^{C_4}_{\bfm,\bfn} $ is a hypersurface of degree $6$.
\end{theorem}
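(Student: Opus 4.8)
The plan is to show that $\calTNS:=\calTNS^{C_4}_{\bfm,\bfn}\subseteq V_1\otimes V_2\otimes V_3\otimes V_4\cong(\bbC^2)^{\otimes 4}$ is an irreducible hypersurface whose reduced equation is a semi-invariant of degree $6$. The starting observation is that $\calTNS$ is a cone stable under $\GL(V_1)\times\cdots\times\GL(V_4)$, because $(g_1\otimes\cdots\otimes g_4)\cdot(X_1\otimes\cdots\otimes X_4)\cdot T(C_4,\bfm)=((g_1X_1)\otimes\cdots\otimes(g_4X_4))\cdot T(C_4,\bfm)$; after dualizing, $\calTNS$ is the closure of the image of the parametrization $\hat\Phi$ which sends a $4$-tuple of pairs of $2\times 2$ matrices (these pairs being $X_1,\dots,X_4$) to the tensor with entries $T_{ijkl}=\Tr(A_iB_jC_kD_l)$, the matrices being multiplied around the $4$-cycle. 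I would record two classical facts about $(\bbC^2)^{\otimes 4}$: every $\GL(V_1)\times\cdots\times\GL(V_4)$-semi-invariant has even degree in the coordinates $T_{ijkl}$ (being an eigenvector for each $\GL(V_v)$, it has $\GL(V_v)$-weight a multiple of $(1,1)$), and up to scalar the unique semi-invariant of degree $2$ is the classical quadratic invariant $H$.

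For the lower bound $\dim\calTNS\ge 15$ it suffices to exhibit a single point $X^0$ in the irreducible affine domain of $\hat\Phi$ at which $d_{X^0}\hat\Phi$ has rank $15$, since $\dim\calTNS$ equals the maximal rank of $d_X\hat\Phi$ and hence is at least $\mathrm{rk}\,d_{X^0}\hat\Phi$. I would take the point already used to obtain the corresponding entry of Table \ref{table: bond 2222}; written with rational entries, the statement $\mathrm{rk}\,d_{X^0}\hat\Phi=15$ becomes a finite, rigorous rank computation, carried out in Macaulay2 \cite{M2}.

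For the upper bound and the degree, the plan is to produce a degree-$6$ semi-invariant $F$ of $(\bbC^2)^{\otimes 4}$, not divisible by $H$, that vanishes identically on $\calTNS$; this is obtained by substituting the parametrization $T_{ijkl}=\Tr(A_iB_jC_kD_l)$, with indeterminate matrix entries, into the space of degree-$6$ semi-invariants and solving the resulting linear system. Such an $F$ is automatically irreducible: a nonconstant proper factor would be a semi-invariant, and so would its complement; as semi-invariants have even degree and $\deg F=6$, one of the two factors has degree $2$ and is therefore a scalar multiple of $H$, forcing $H\mid F$ and contradicting the choice of $F$. Consequently $\{F=0\}$ is an irreducible hypersurface of degree $6$ and dimension $15$ containing the irreducible variety $\calTNS$ of dimension $\ge 15$, whence $\calTNS=\{F=0\}$; this yields $\dim\calTNS=15$ and exhibits $\calTNS$ as a hypersurface of degree $6$. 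The main obstacle is precisely this computation: one must check that a \emph{degree-six} semi-invariant not divisible by $H$ vanishes on the parametrized family — equivalently, that $\calTNS$ is cut out in degree $6$ rather than in degree $2$ or $4$ — after which the irreducibility of $F$ and the comparison of dimensions and irreducible components are routine.
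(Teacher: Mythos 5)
Your proposal is correct and has the same skeleton as the paper's proof: a rank-$15$ differential computation at an explicit point for the lower bound, the observation that the only alternative to $\dim=16$ is a hypersurface, and the exhibition of a degree-$6$ irreducible equation vanishing on the variety. The two genuine differences are worth recording. First, the paper does not search the space of degree-$6$ semi-invariants by linear algebra; it takes a specific classical invariant $I_6$ from \cite{LuqThi:PolyInv4qubits,HolLuqThi:GeomDescrEntangAuxVars}, built as the $3\times 3$ determinant of the bilinear form on $S^2V_1\times S^2V_3$ obtained from $\det(T^{1,3})$, and verifies symbolically that it vanishes on the parametrized family --- after first normalizing $B_1^1=B_3^1=\left(\begin{smallmatrix}1&0\\0&0\end{smallmatrix}\right)$ via the gauge group and the $\GL(V_i)$-actions to cut the number of free parameters down to something Macaulay2 can handle; you should plan on a similar normalization, since a blind substitution of $32$ indeterminate matrix entries into a degree-$6$ form is likely infeasible. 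Second, your irreducibility argument is a real improvement in self-containedness: the paper simply asserts ``$I_6$ is irreducible,'' whereas your observation that every semi-invariant of $\GL(V_1)\times\cdots\times\GL(V_4)$ on $(\bbC^2)^{\otimes 4}$ has even degree, that the degree-$2$ semi-invariant $H$ is unique up to scalar, and that every irreducible factor of a semi-invariant of a connected group is again a semi-invariant, reduces irreducibility of your sextic $F$ to the single computable check $H\nmid F$. The one point you correctly flag as the crux --- that a degree-$6$ semi-invariant not divisible by $H$ actually vanishes on the family, i.e.\ that the hypersurface is cut out in degree $6$ rather than $2$ or $4$ --- is exactly what the paper's script certifies for $I_6$, so your linear system does have the required solution.
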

\begin{proof}
The lower bound $\dim \calTNS^{C_4}_{\bfm,\bfn} \geq 15$ is obtained in Table \ref{table: bond 2222}.

Since $\dim V_1 \otimes V_2 \otimes V_3 \otimes V_4 = 16$, we obtain that either $\calTNS^{C_4}_{\bfm,\bfn}$ is the entire space or it is a hypersurface. 

We determine an irreducible equation of degree $6$ vanishing on $\calTNS^{C_4}_{\bfm,\bfn}$. 

This equation is a degree $6$ invariant for the action of $\GL(V_1) \ttimes \GL(V_4)$ on $V_1 \ootimes V_4$. Its construction is described explicitly in \cite{LuqThi:PolyInv4qubits,HolLuqThi:GeomDescrEntangAuxVars}. The evaluation of the invariant is performed by a Macaulay2 script \cite{M2} available at \link.

We illustrate here how to construct it and how to exploit the action of $\GL(V_1) \ttimes \GL(V_4)$ and of the gauge group to normalize the linear maps and reduce the degrees of freedom in order to allow the script evaluate the invariant.

Given a tensor $T \in V_1 \otimes V_2 \otimes V_3 \otimes V_4$, consider the bilinear map $T^{1,3}: V_1^* \times V_3^* \to  V_2 \otimes V_4$. This makes $V_2 \otimes V_4$ into a space of $2 \times 2$ matrices depending bilinearly on $V_1 \times V_3$. Let $F(T) = \det(T^{1,3})$ be the determinant (of the $2 \times 2$ matrix $V_2 \otimes V_4$) evaluated on the image of $T^{1,3}$. So $F(T)$ is a polynomial of bidegree $(2,2)$ in $V_1 \times V_3$, therefore it can be regarded as a bilinear form on $S^2 V_1 \times S^2 V_2$, where $S^2 W$ denotes the second symmetric power of a vector space $W$. Since $\dim S^2\bbC^2 = 3$, this bilinear form has an associated $3 \times 3$ matrix. The invariant $I_6$ that we are interested in is the determinant of such matrix, which is a polynomial of degree $6$ in the coefficients of the original tensor $T$.

In order to prove that $I_6$ vanishes identically on $\calTNS^{C_4}_{\bfm,\bfn}$, we apply a normalization which reduces the total degrees of freedom, then we perform the calculation symbolically in Macaulay2.

Write $T \in \calTNS^{C_4}_{\bfm,\bfn}$ as 
\[
 T = \trace\biggl( B_1(v_1^{(1)} , v_2^{(1)}) \cdots  B_4(v_1^{(4)} , v_2^{(4)}) \biggr)
\]
where $B_p(v_1^{(p)},v_2^{(p)}) = B_p^{1} v_1^{(p)} + B_p^{2} v_2^{(p)}$ are $2 \times 2$ matrices depending linearly on a fixed basis of $V_p$.

Since $\calTNS^{C_4}_{\bfm,\bfn}$ is invariant under the action of $\GL(V_1) \ttimes \GL(V_4)$ and the graph tensor is invariant under the action of the gauge subgroup, we may use these groups to normalize the matrices $B_p^{j}$. In particular, by the action of $\GL(V_1)$ and $\GL(V_3)$, we may assume $B_1^1$ and $B^1_3$ are rank one matrices; further, using the action of the gauge subgroup, we may assume $B^{1}_1 = B^1_3 = \left( \begin{smallmatrix} 1 & 0 \\ 0 & 0 \end{smallmatrix}\right)$.

With this normalization, the evaluation of the invariant is performed and we can verify that $I_6(T) = 0$ whenever $T \in \calTNS^{C_4}_{\bfm,\bfn}$. 

Since $I_6$ is irreducible, we conclude $\calTNS^{C_4}_{\bfm,\bfn}$ is a hypersurface of degree $6$.
\end{proof}

If $d = 5,6,7$, the calculation of the differential at a random point shows that in the case of constant bond dimension $2$ the dimension of tensor network varieties coincides with the upper bound of Corollary \ref{cor: MPS}. Therefore, we propose the following conjecture:
\begin{conjecture}\label{conjecture: bond 2}
 Let $d \geq 3$, $\bfm = (2 \vvirg 2)$ and $\bfn = (n_1 \vvirg n_d)$ with $n _j \geq 2$. Then 
 \[
\dim \calTNS^{C_d} _{\bfm,\bfn} = \min \left\{ 4\left(\textsum_{j=1}^d n_j - d \right)  + 1 , \textprod_{j=1}^d n_j\right\}
 \]
 except in the following cases:
 \begin{itemize}
  \item if $d = 3$: $\bfn = (2,n_2,n_3)$, with $n_2 \geq 3$, $n_3 \geq 4$ and their cyclic permutations; 
  \item if $d = 4$: $\bfn = (2,n_2,2,n_4)$ with $n_2,n_4 \geq 2$ and their cyclic permutations.
 \end{itemize}
\end{conjecture}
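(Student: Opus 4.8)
A natural strategy for Conjecture~\ref{conjecture: bond 2} would run as follows. The upper bound is Corollary~\ref{cor: MPS}, so only the lower bound is at stake, and much of the range is already settled: when all $n_j \geq 4$ the network $(C_d,\bfm,\bfn)$ is supercritical and Corollary~\ref{cor: supercritical exact} gives the equality directly (and no exceptional family meets this range); when some $n_j > 4$, Theorem~\ref{thm: reduction supercritical} reduces the computation to the network obtained by replacing each $n_j$ with $\min\{4,n_j\}$, and one verifies --- routinely, from the explicit shape of the two terms of the minimum --- that the conjectured formula is compatible with this reduction and that the exceptional families are preserved by it (for instance $(2,n_2,n_3)$ with $n_2 \geq 3$, $n_3 \geq 4$ reduces to $(2,3,4)$ or $(2,4,4)$, exactly the cases of Theorem~\ref{thm: 234 and 244}). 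Thus the whole problem comes down to the lower bound $\dim \calTNS^{C_d}_{\bfm,\bfn} \geq \min\{4(\textsum_j n_j - d)+1,\ \textprod_j n_j\}$ in the range $2 \leq n_j \leq 4$ with at least one $n_j \leq 3$.

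In this range, using $\dim \Hom(W_1\vvirg W_d;V_1\vvirg V_d) = 4\textsum_j n_j - d + 1$, $\dim \calG_{C_d,\bfm} = 3d$, and \eqref{eqn: dimension of fibers}, the lower bound is equivalent to the assertion that at some tuple $(X_1\vvirg X_d)$ the differential $\mathrm{d}\hat\Phi$ attains rank $\min\{4(\textsum_j n_j - d)+1,\textprod_j n_j\}$; by lower semicontinuity of the rank it is enough to check this at a single tuple. When the active term is $4(\textsum_j n_j - d)+1$ this says that $\ker\mathrm{d}\hat\Phi$ has there the expected dimension $4d-1$ --- the $3d$ infinitesimal gauge transformations together with the $(d-1)$-dimensional rescaling of the Segre parametrization, the bound $\geq 4d-1$ being Theorem~\ref{thm: fiber and dim gauge-orbit}; when the active term is $\textprod_j n_j$ it says that $\hat\Phi$ is dominant. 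The natural tuple to use is the normalized one from Proposition~\ref{prop: MPS have stable action}: writing $X_j = \textsum_{p=1}^{n_j} B_j^{(p)} \otimes v_p^{(j)}$ with each $B_j^{(p)}$ a $2\times 2$ matrix, one uses the gauge group to put $B_j^{(1)} = \Id$ for $j < d$ and $B_d^{(1)}$ a generic diagonal matrix, keeping the remaining $B_j^{(p)}$ generic, so that the coordinates of $\hat\Phi(X_1\vvirg X_d)$ are the cyclic traces $\trace(B_1^{(p_1)} \cdots B_d^{(p_d)})$. At such a point $\mathrm{d}\hat\Phi$ has a ``transfer-matrix'' block structure along the cycle, and an element of $\ker\mathrm{d}\hat\Phi$ is a tuple $(\dot X_1\vvirg\dot X_d)$ with $\textsum_j \trace(B_1 \cdots \dot B_j \cdots B_d) \equiv 0$ identically in the coordinates on $V_1\ootimes V_d$; one must show the only solutions are the gauge-plus-rescaling ones.

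The argument would then be completed by induction, either on $d$ or on $\textsum_j n_j$. Induction on $d$ has base cases $d = 3,4$ given by Theorems~\ref{thm: 234 and 244} and~\ref{thm: 2222} together with the critical and supercritical cases, and $d = 5,6,7$ by the computations of Section~\ref{sec: small cases}; the inductive step would relate the cyclic system above for $C_{d+1}$ to the one for $C_d$, e.g.\ by merging two adjacent sites or by fixing the gauge at one vertex and peeling it off. Induction on $\textsum_j n_j$ would instead show that, outside the exceptional families and as long as the first term of the minimum stays active, adjoining one further generic matrix $B_j^{(n_j+1)}$ enlarges the image of $\mathrm{d}\hat\Phi$ by exactly $4$ independent directions. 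In either formulation the main obstacle is the same: proving that the cyclic matrix-product parametrization is injective modulo the gauge action at first order, uniformly in $d$ and $\bfn$. The existence of the exceptional families shows this is genuinely delicate, and one should expect that a complete proof needs new input --- combinatorial, or from geometric invariant theory --- beyond the Lie-algebra computations of Section~\ref{sec: isotropy groups}; locating and ruling out \emph{all} the obstructions, rather than the finitely many occurring for $d \leq 4$, is where the real difficulty lies.
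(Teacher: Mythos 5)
The statement you were asked to prove is labelled a \emph{conjecture} in the paper, and the paper does not prove it: it establishes only (i) the case $d=3$, by combining Theorem \ref{thm: 234 and 244} with the reduction of Theorem \ref{thm: reduction supercritical} and the computations of Table \ref{table: bond 222}; (ii) the single case $\bfn=(2,2,2,2)$ of $d=4$ in Theorem \ref{thm: 2222}, plus those $d=4$ cases where the bounds of Table \ref{table: bond 2222} meet; and (iii) numerical evidence for $d=5,6,7$ from the rank of the differential at a random point, which the authors themselves flag as not constituting a complete proof. So there is no proof in the paper to compare yours against, and your proposal --- as you concede in its closing sentences --- is a strategy rather than a proof.

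That said, your framing of the problem is accurate and consistent with the paper. The upper bound is indeed Corollary \ref{cor: MPS}; the reduction of vertices with $n_j>4$ down to $4$ via Theorem \ref{thm: reduction supercritical} is correct, the correction term $4(n_j-4)$ is exactly what makes the first term of the minimum transform consistently, and the exceptional families are preserved (e.g.\ $(2,n_2,n_3)$ with $n_2\geq 3$, $n_3\geq 4$ reduces precisely to the two cases of Theorem \ref{thm: 234 and 244}); the counts $\dim\Hom(W_1\vvirg W_d;V_1\vvirg V_d)=4\textsum_j n_j-d+1$, $\dim\calG_{C_d,\bfm}=3d$, and expected kernel dimension $4d-1$ are all right. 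The genuine gap is the one you name yourself: in the range $2\leq n_j\leq 4$ one must show that $\ker\mathrm{d}\hat\Phi$ at a suitable tuple is exactly the $(4d-1)$-dimensional gauge-plus-rescaling space (or that $\hat\Phi$ is dominant when the product term of the minimum is active), uniformly in $d$ and $\bfn$, and neither your sketch nor the paper supplies an argument for this. Neither of your two proposed inductions is carried out even in outline: no inductive step is actually formulated relating the cyclic first-order system for $C_{d+1}$ to that for $C_d$, and the existence of the $d=3,4$ exceptions shows that any such step must be able to detect and exclude them, which is exactly where a naive ``peel off a vertex'' argument would break. Read your text as a correct description of an open problem and of the evidence the paper assembles for it, not as a proof of the conjecture.
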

The results of this section, together with Theorem \ref{thm: reduction supercritical}, confirm Conjecture \ref{conjecture: bond 2} for $d=3$. As mentioned above, a direct calculation confirms the conjecture for $d=5,6,7$. In the case $d=4$, the conjecture is confirmed in the case $\bfn= (2,2,2,2)$, in all cases where the upper and lower bounds coincide in Table \ref{table: bond 2222} and in the supercritical cases constructed from those.

\subsection*{Acknowledgements} A.B. and C.D.L. acknowledge support from GNSAGA of INDAM. We would like to thank Giorgio Ottaviani and Daniel Stilck Fran\c{c}a for useful comments and suggestions.

{\small
\bibliographystyle{alpha}
\bibliography{mamu.bib}
}

\end{document}